\numberwithin{equation}{section}
\newcommand{\tdb}{\textup{\textcrd}}
\newcommand{\nfi}{\varphi}
\newcommand{\cp}{\partial}
\newcommand{\bs}{\boldsymbol}
\newcommand{\lp}{\left(}
\newcommand{\rp}{\right)}
\newcommand{\td}{\textup{d}}
\newcommand{\e}{\xi}
\newcommand{\Mp}{\mathcal{M^+}}
\newcommand{\Ml}{\mathcal{M^-}}
\newcommand{\lb}{\left\lbrace}
\newcommand{\rb}{\right\rbrace}
\newcommand{\lc}{\left[}
\newcommand{\rc}{\right]}
\newcommand{\ld}{\left.}
\newcommand{\rd}{\right.}
\newcommand{\rv}{\right\vert}
\newcommand{\la}{\langle}
\newcommand{\rag}{\rangle_{g}}
\newcommand{\chr}{\Upsilon}
\newcommand{\Mpm}{\mathcal{M}^{\pm}}
\newcommand{\tdo}{\tdb}
\newcommand{\lieo}{\mathsterling}
\def\hup{h^{\sharp}}
\newcommand{\eqh}{\stackbin{\hor}{=}}
\newcommand{\eqclosh}{\stackbin{\chor}{=}}
\newcommand{\eqch}{\stackbin{\chor}{=}}
\newcommand{\eqchpm}{\stackbin{\chor^{\pm}}{=}}
\newcommand{\eqchzp}{\stackbin{\chorz^{+}}{=}}
\newcommand{\spc}{\textup{ }}
\newcommand{\ke}{\kappa_{\e}}
\newcommand{\ovl}{\overline}
\newcommand{\upm}{^{\pm}}
\newcommand{\mrg}{\hat{\Gamma}}
\newcommand{\vk}{\varkappa}
\newcommand{\nsigmapm}{\widetilde{\bs{\sigma}}{}^{\pm}}
\newcommand{\nThetap}{\widetilde{\bs{\Theta}}{}^+}
\newcommand{\nThetapm}{\widetilde{\bs{\Theta}}{}^{\pm}}
\newcommand{\nTp}{\widetilde{\bs{T}}{}^+}
\newcommand{\nTm}{\widetilde{\bs{T}}{}^-}
\newcommand{\nTpm}{\widetilde{\bs{T}}{}^{\pm}}
\newcommand{\nTstep}{\widetilde{\bs{T}}}
\newcommand{\psip}{\widetilde{\psi}{}^+}
\newcommand{\sigmap}{\widetilde{\sigma}{}^+}
\newcommand{\Thetap}{\widetilde{\Theta}{}^+}
\newcommand{\psim}{\widetilde{\psi}{}^-}
\newcommand{\sigmam}{\widetilde{\sigma}{}^-}
\newcommand{\Thetam}{\widetilde{\Theta}{}^-}
\newcommand{\Tp}{\widetilde{T}{}^+}
\newcommand{\Tm}{\widetilde{T}{}^-}
\newcommand{\sigmastep}{\widetilde{\sigma}}
\newcommand{\sigmapm}{\widetilde{\sigma}{}^{\pm}}
\newcommand{\Wpm}{\widetilde{w}{}^{\pm}}
\newcommand{\Wp}{\widetilde{w}{}^{+}}
\newcommand{\Wm}{\widetilde{w}{}^{-}}
\newcommand{\nwpm}{\widetilde{\bs{w}}{}^{\pm}}
\newcommand{\vsigma}{\widetilde{\varsigma}}
\newcommand{\vsigmapm}{\widetilde{\varsigma}^{\spc\pm}}
\newcommand{\vsigmap}{\widetilde{\varsigma}^{\spc+}}
\newcommand{\vsigmam}{\widetilde{\varsigma}^{\spc-}}
\newcommand{\nvsigmapm}{\widetilde{\bs{\varsigma}}^{\spc\pm}}
\newcommand{\Rpm}{\widetilde{R}{}^{\pm}}
\newcommand{\Rp}{\widetilde{R}{}^{+}}
\newcommand{\Rm}{\widetilde{R}{}^{-}}
\newcommand{\Xipm}{\widetilde{\Xi}{}^{\pm}}
\newcommand{\Xip}{\widetilde{\Xi}{}^{+}}
\newcommand{\Xim}{\widetilde{\Xi}{}^{-}}
\newcommand{\nXipm}{\widetilde{\bs{\Xi}}{}^{\pm}}
\newcommand{\nXip}{\widetilde{\bs{\Xi}}{}^{+}}
\newcommand{\myol}[2][8]{{}\mkern#1mu\overline{\mkern-#1mu#2}{}}
\newcommand{\hor}{\mathscr{H}}
\newcommand{\chor}{\myol{\mathscr{H}}}
\newcommand{\horz}{\mathscr{H}_0}
\newcommand{\chorz}{\myol{\mathscr{H}_0}}
\newsavebox\myboxA
\newsavebox\myboxB
\newlength\mylenA
\newcommand*\xoverline[2][0.75]{%
    \sbox{\myboxA}{$\m@th#2$}%
    \setbox\myboxB\null
    \ht\myboxB=\ht\myboxA%
    \dp\myboxB=\dp\myboxA%
    \wd\myboxB=#1\wd\myboxA
    \sbox\myboxB{$\m@th\overline{\copy\myboxB}$}
    \setlength\mylenA{\the\wd\myboxA}
    \addtolength\mylenA{-\the\wd\myboxB}%
    \ifdim\wd\myboxB<\wd\myboxA%
       \rlap{\hskip 0.5\mylenA\usebox\myboxB}{\usebox\myboxA}%
    \else
        \hskip -0.5\mylenA\rlap{\usebox\myboxA}{\hskip 0.5\mylenA\usebox\myboxB}%
    \fi}
\newtheorem{thm}{Theorem}
\newtheorem{prop}{Proposition}
\newtheorem{lem}{Lemma}
\newtheorem{defn}{Definition}
\newtheorem{rem}{Remark}
\title{General matching across Killing horizons of zero order}
\author{Miguel Manzano\footnote{miguelmanzano06@usal.es}\hspace{0.17cm} and Marc Mars\footnote{marc@usal.es}\\ \\
Instituto de F\'{\i}sica Fundamental y Matem\'aticas, IUFFyM\\
Universidad de Salamanca\\
Plaza de la Merced s/n \\
37008 Salamanca, Spain\\
}
 \newcounter{mnotecount}
 \newcommand{\mnote}[1]
 {\protect{\stepcounter{mnotecount}}$^{\mbox{\footnotesize
 $\,\bullet$\themnotecount}}$ \marginpar{
 \raggedright\tiny\em
 $\,\bullet$\themnotecount: #1} }
\begin{document}

\maketitle

\begin{abstract}
  Null shells are a useful geometric construction to study the propagation of infinitesimally thin concentrations of massless particles or impulsive waves. After recalling the necessary and sufficient conditions obtained  in \cite{manzano2021null}  that allow for the matching of two spacetimes with null embedded hypersurfaces as boundaries, we will address the problem of matching across \textit{Killing horizons of zero order} in the case when the symmetry generators  are to be identified. The results are substantially different depending on whether the boundaries are non-degenerate or degenerate, and contain or not fixed points (in particular, in the former case the shells have zero pressure but non-vanishing energy density and energy flux in general). We will present the explicit form of the so-called step function in each situation. We will then concentrate on the case of actual Killing horizons admitting a bifurcation surface, where a complete description of the shell and its energy-momentum tensor can be obtained. We will conclude particularizing to the matching of two spacetimes with spherical, plane or hyperbolic symmetry without imposing this symmetry on the shell itself.
\end{abstract}

\section{Introduction}
Thin shells (also known as surface layers) are idealized geometrical objects introduced in General Relativity to describe concentrations of matter or energy that can be considered to be located on  a hypersurface. Depending on the causal character of the hypersurface, thin shells are called null, timelike, spacelike or mixed (when the causal character is point-dependent). The standard way of generating spacetimes containing thin shells is by matching two spacetime regions (one at each side of the shell). The matching theory in the context of General Relativity is well-developed and has received contributions from many authors.
Key milestones are the seminal work by Darmois \cite{darmois1927memorial} in the  timelike and spacelike cases and its extensions to the null case  by Barrab{\'e}s-Israel \cite{barrabes1991thin} (see also \cite{poisson2002reformulation} for a useful reformulation).
The general causal character was studied in \cite{mars1993geometry}, \cite{mars2007lorentzian}. 

The Darmois matching formalism for non-null boundaries consists of joining  two spacetimes $\lp\Mpm,g^{\pm}\rp$ with differentiable boundaries $\Omega^{\pm}$ by providing an identification between the boundary points and between the full tangent spaces on $\Omega^{\pm}$. For the matching to be possible, the spacetimes are required to verify the so-called shell (or preliminary) junction conditions (see e.g. \cite{clarke1987junction}, \cite{mars1993geometry}) which force the boundaries to be isometric with respect to their induced metrics. The resulting spacetime satisfies the Israel equations \cite{israel1966singular} 
and describes a thin shell whose matter content is directly linked to the jump of the extrinsic curvatures. In the general causal character case, the shell junction conditions require that the identification of the boundaries maps the corresponding first fundamental forms and that the identification of the full tangent spaces fulfils a suitable orientation requirement \cite{mars2007lorentzian}. The Israel equations for shells of arbitrary causal character were first obtained in \cite{mars2013constraint}. Expressions for the jumps of other curvature components across matching hypersurfaces of arbitrary causal character were derived in \cite{senovilla2018equations} by means of the formalism of tensor distributional calculus.

Many explicit examples of null shells in specific situations have been discussed in the literature, often by imposing additional symmetries e.g. spherical symmetry. We refer to 
    \cite{Penrose:1972xrn}
	\cite{podolsky1999nonexpanding}, 
	\cite{barrabes2003singular}, 
	\cite{podolsky2017penrose}, 
	\cite{podolsky2019cut},
    \cite{carballo2021inner}, 
    \cite{bhattacharjee2020memory},
    \cite{nikitin2018stability},
    \cite{chapman2018holographic},
    \cite{binetruy2018closed},
    \cite{kokubu2018energy},
    \cite{fairoos2017massless} 
and references therein for examples. 
In the previous paper \cite{manzano2021null}, we determined the necessary and sufficient conditions that allow for the matching of two general spacetimes with null boundaries. In order to address the problem of matching, we made use of the so-called \textit{hypersurface data} formalism \cite{mars2013constraint}, \cite{mars2020hypersurface} with which one can abstractly analyse hypersurfaces of arbitrary signature in pseudo-riemannian manifolds. For embedded hypersurfaces, the ambient metric, the embedding and a choice of transversal vector (the so-called {\it rigging vector}) determines the metric data. In this framework, the junction conditions simply impose that the metric hypersurface data of $\Omega^{\pm}$ must coincide and, in the null case, the central objects on which the matching depends are a diffeomorphism $\Psi$ between the set of null generators of $\Omega^{\pm}$ and the so-called step function $H$, which geometrically corresponds to a shift along the null generators. Moreover, the matching requires that any spatial section on $\Omega^-$ is isometric to its corresponding image in $\Omega^+$. The energy-momentum contents of the resulting shell can be explicitly written in terms the geometry of the ambient spaces and the identification of the boundaries. The fundamental results in \cite{manzano2021null} required for the present work are summarized, and expanded in certain directions, in Section \ref{secPreliminaries}.

Generically, given any two spacetimes $\lp\Mpm,g^{\pm}\rp$ with null boundaries it will not be possible to join them. When the matching is permitted, there will exist (in general) one unique way of matching, i.e. only one suitable identification of the boundary points and the full tangent spaces that makes the metric hypersurface data of both sides agree. However, as discussed in Section 4 in \cite{manzano2021null}, there are some cases in which multiple (even infinite) matchings are feasible. This particularly occurs when the boundaries $\Omega^{\pm}$ are totally geodesic embedded null hypersurfaces. Perhaps the most prominent example of totally geodesic null hypersurfaces are the Killing horizons.
In such case, the spacetimes on both sides have additional structure, namely the Killing vectors generating the horizons, and it is natural to impose that the matching preserves this symmetry, i.e. that the resulting spacetime admits a continuous vector field, which is Killing on both sides. This situation corresponds to the case when the matching process identifies the Killing vector fields with respect to which the boundaries $\Omega^{\pm}$ are Killing horizons.
This  problem includes, for example, matchings of black hole spacetimes with non-zero temperature across geodesically complete Killing horizon boundaries, or matchings across non-degenerate or degenerate Killing horizon boundaries with or without fixed points. This is the problem we set out to analyze in this paper.

Actually, we study a considerably more general situation, namely the matching across \textit{Killing horizons to order zero}. It turns out that 
the identification of a pair of preselected null tangential vector fields, one from each side, restricts severely the set of all possible step functions. For that it is not necessary that these null fields are Killing vectors. One merely needs to assume that the boundaries are totally geodesic and the set of zeroes of the preselected null fields share the basic features that the set of zeros of Killing vectors have. This is what defines the notion of
\textit{Killing horizons to order zero}
(see Definition \ref{defKHZO}). These objects are in fact closely related to the well-known concepts of \textit{non-expanding horizons}, and their particularizations of weakly isolated horizons and isolated horizons (general references are e.g.  \cite{ashtekar2000generic}, \cite{ashtekar2002geometry}, \cite{ashtekar2000isolated},  \cite{hajivcek1973exact}, \cite{jaramillo2009isolated}, \cite{mars2012stability}). While normally non-expanding horizons are diffeomorphic to $\mathbb{S}^2\times\mathbb{R}$, we shall only require that the topology of $\Omega^{\pm}$ is $S^{\pm}_{0}\times\mathbb{R}$, $S^{\pm}_{0}\subset\Omega^{\pm}$ being a spacelike submanifold. The main difference between non-expanding horizons and Killing horizons of zero order is that we preselect a null field $\xi^{\pm}$ tangent to $\Omega^{\pm}$ (called symmetry generator) such that the set of points $\mathcal{S}^{\pm}:=\{p\in\Omega^{\pm}\spc\vert\spc\xi^{\pm}\vert_p=0\}$ is either the union of smooth connected closed submanifolds of codimension two or the empty set. Non-expanding horizons are totally geodesic as a consequence of assuming that the matter model satisfies a suitable energy condition. In the setup of
Killing horizons of zero order the property of being totally geodesic is incorporated in the definition, so we can dispose of any a priori restriction on the matter model.

In this paper, we will assume constancy of the surface gravity of $\xi^{\pm}$, which is defined by $\text{grad}(g^{\pm}(\xi^{\pm},\xi^{\pm}))=-2\ke^{\pm}\xi^{\pm}$ on $\Omega^{\pm}\setminus\mathcal{S}^{\pm}$. We keep the standard terminology of calling $\Omega^{\pm}\setminus\mathcal{S}^{\pm}$ degenerate (if $\ke^{\pm}=0$) or non-degenerate (if $\ke^{\pm}\neq0$). We will see that the presence or absence of points where the null vectors $\xi^{\pm}$ vanish, as well as the causal character of $\mathcal{S}^{\pm}$ in the former case, strongly affects the matching and by extension the types of shells that can be constructed. In particular, when $\mathcal{S}^{\pm}$ are both non-empty, we will prove that they must be identified in the matching process, which concretely forces them to have the same causal character, as well as the same number of connected components. Furthermore, the matching will require the surface gravities $\ke^{\pm}$ to be either both zero or both nonzero, and the allowed step functions will take a simple, linear form. Moreover, the resulting shell necessarily has  vanishing pressure. The matching, however, still admits the following freedom: in the degenerate case, one can select two sections (one at each side) and impose their identification whereas in the non-degenerate case it is possible to choose the initial velocity along the null generators off the submanifolds $\mathcal{S}^{\pm}$ (which are now spacelike). When the boundaries are free of fixed points, the matching admits more possibilities. All of them are studied in Section \ref{secmatchingKH}, and the result is summarized in Theorem \ref{theorem} of that Section.

Once the matching across Killing horizons of order zero has been completed, we return to the case of Killing horizons. We perform an in-depth analysis of perhaps the most physically interesting situation, which occurs when the boundaries are non-degenerate Killing horizons containing bifurcation surfaces. For this particular case, we use R\'acz-Wald coordinates \cite{racz1992extensions} to derive the explicit expression of the energy-momentum contents of the shell (Theorem \ref{theoremNDYtau} in Section \ref{secRestrictionY}). We obtain that, although the pressure is zero, some effect of compression or stretching of points is taking place because the velocity along the null generators of $\chor^+$ differs from one generator to other. As a consequence we find a non-zero energy flux which points toward null generators with higher velocities. Another remarkable result is that there appears a change of sign on the energy density of the shell when the bifurcation surfaces are crossed. This is a really puzzling behaviour which seems to point out the possibility that the physical interpretation of the energy-density of the shell is not what we are used to in other physical contexts. 
Perhaps this behaviour is somehow linked to the causality change of the Killing fields from future to past across the bifurcation surface, but this is pure speculation. We stress however, that these results are fully compatible with the shell field equations obtained by Barrabés and Israel \cite{barrabes1991thin} for the case of null hypersurfaces. We even include an explicit proof of this in Section \ref{secRestrictionY}.

The paper concludes with the general matching of two spherical, plane or hyperbolic symmetric spacetimes across non-degenerate Killing horizon boundaries containing bifurcation surfaces. We work in arbitrary $(n+1)$ spacetime dimension and do not impose any a priori restrictions on the shell (in particular we do not assume that it respects the background spherical/plane/hyperbolic symmetry). The shell depends on an arbitrary positive function $\alpha$, constant along the generators (so, effectively defined on a section). The explicit expressions of the tensors determining the matter content of the shell are explicitly derived, first without imposing any restriction on the Einstein field equations and then for the specific $\Lambda$-vacuum case. The energy density depends on the Laplacian of $\alpha$ and on the jump of the ambient Ricci tensors. Whenever non-zero, the energy density unavoidably changes sign when crossing the bifurcation surface. The energy-flux depends on the gradient of $\alpha$ and it is constant along the null generators. In the $\Lambda$-vacuum cases, the matching allows for different values of $\Lambda$ on each side but fixes the jump of the mass in terms
of the jump $[\Lambda]$. An example of particular interest occurs when $[\Lambda] <0$. Then, it is possible to construct a shell of null dust (hence with vanishing energy flux) which, from a fully physically reasonable state of  positive energy density, evolves in the deterministic manner dictated by the field equations into a state with negative energy density after crossing the bifurcation surface. Shocking as this may seem, in appears to us that such state of negative density should be considered as fully physical.


The organization of the paper is as follows.
The Section  \ref{secPreliminaries} 
is divided into three parts. First, we revisit some results and identities concerning the geometry of embedded null hypersurfaces, and we present the necessary geometric objects and assumptions. We continue with a brief summary of the basic notions on the formalism of hypersurface data. In the third part we recall the results from \cite{manzano2021null} needed for this work and then we complement and expand them in several directions. With the aim of rewriting some of the results in \cite{manzano2021null}, we firstly provide identities concerning the pullback to the abstract manifold of tensor fields on the boundaries. We also analyze the behaviour of the tensor fields defining the matter content of the shell under changes of the foliations of the boundaries. In Section \ref{secKH}, we recall some well-known properties of Killing horizons, and introduce the notion of Killing horizon of zero order. Section \ref{secmatchingKH} is devoted to the actual problem of matching two spacetimes across null boundaries which are Killing horizons of zero order. We analyze separately the cases of both boundaries being degenerate, both being non-degenerate and one being degenerate and the other one non-degenerate. As already mentioned, the particular case of matching across non-degenerate Killing horizons containing bifurcation surfaces is fully addressed in Section \ref{secRestrictionY}. We conclude the main body of the paper by studying the general matching of two arbitrary spherical, plane or hyperbolic symmetric spacetimes admitting a Killing horizon with a bifurcation surface. The specific matchings of two spacetimes of Schwarzschild and Schwarzschild-de Sitter are addressed in detail. The paper finishes with two appendices. In Appendix \ref{apppullbacks}, we give the proof of the pullback identities described in the last part of Section \ref{secPreliminaries}. Appendix \ref{appedixA} establishes a geometric expression needed in Section \ref{secRestrictionY} that 
links the ambient Ricci tensor and geometric quantities at one boundary $\Omega$ when this is a non-degenerate Killing horizon with bifurcation surface. This result is known and it is included merely in order to  make the paper self-consistent (and because our derivation is very direct and simple).

\subsection{Notation}\label{secnotation}
Given a manifold $\mathcal{M}$ and a point $p\in\mathcal{M}$, the tangent and cotangent spaces at $p $ are denoted by $T_p\mathcal{M}$, $T^*_p\mathcal{M}$ respectively. As always, $T\mathcal{M}$ refers to the corresponding tangent bundle and $\Gamma\lp T\mathcal{M}\rp$ to its sections. Given an embedding $\Phi$, we use the standard notation of $\Phi^*$ and $\Phi_*$ for its pull-back and push-forward respectively. We also let $\mathcal{F}\lp\mathcal{M}\rp:=C^{\infty}\lp\mathcal{M},\mathbb{R}\rp$ and $\mathcal{F}^*\lp\mathcal{M}\rp\subset\mathcal{F}\lp\mathcal{M}\rp$ its subset of no-where zero functions. Our signature convention for Lorentzian manifolds $\lp \mathcal{M},g\rp$ is $(-,+, ... ,+)$ and we let $\nabla$ denote the Levi-Civita covariant derivative of $g$ and $g(X,Y)$ (also $\la X,Y\rag$) be the scalar product of two vector fields $X,Y\in\Gamma\lp T\mathcal{M}\rp$. Our convention of indices on an $\lp n+1\rp$-dimensional spacetime is
\begin{equation}
\nonumber \alpha,\beta,...=0,1,...,n,\quad\qquad i,j,...=1,...,n\quad\qquad I,J,...=2,3,...,n,
\end{equation}
and parenthesis (resp. brackets) denote symmetrization (resp. antisymmetrization) of indices. We write the spacetime dimension as $n+1$ and assume throughout that $n \geq 1$.

\section{Preliminaries}\label{secPreliminaries}
As indicated in the introduction, the aim of this paper is to determine the necessary and sufficient conditions that allow for the matching of two spacetimes $\lp\Mpm,g^{\pm}\rp$ across Killing horizons of zero order such that the symmetry generators get identified. This section is devoted to introducing several background notions and results needed later. Further details can be found in \cite{manzano2021null}.

\subsection{Geometry of embedded null hypersurfaces}\label{secgeomnullhyp}
We start by recalling general properties of null hypersurfaces (general references are \cite{galloway2004null}, \cite{gourgoulhon20063+}). We begin with the standard notion of embedded null hypersurface.
\begin{defn}\label{defembhyp}
(Embedded null hypersurface) Let $\lp\mathcal{M},g\rp$ be an $\lp n+1\rp-$dimensional spacetime and $\Sigma$ a manifold of dimension $n$. An embedded null hypersurface is a subset $\Omega\subset\mathcal{M}$ satisfying that there exists an embedding $\Phi:\Sigma\rightarrow\mathcal{M}$ such that $\Phi\lp \Sigma\rp=\Omega$ and that the first fundamental form $\gamma:=\Phi^*\lp g\rp$ of $\Sigma$ is degenerate.
\end{defn}
We define null generator $k$ of $\Omega$ as any null nowhere zero vector field which is tangent to $\Omega$ everywhere. Since there exists one unique degenerate direction along $\Omega$, all null generators are proportional to each other. Besides, they are pre-geodesic, i.e. for any null generator $k$, there exists a function $\kappa_k\in\mathcal{F}\lp\Omega\rp$ named surface gravity such that $\nabla_kk=\kappa_kk$. One can always find a null generator $k$ which is in addition affine (i.e. with $\kappa_k$) \cite{galloway1999maximum}, \cite{gourgoulhon20063+}. In the following and with full generality, we let $k$ be future and satisfy this property. 

The following definitions will be useful for our purposes.
\begin{defn}
\label{def1}
(Spacelike section, tangent plane and foliation of $\Omega$) Let $\Omega$ be an embedded null hypersurface and $k$ be an affine future null generator. Assume the existence of a section $\widetilde{S}\subset\Omega$ and let $s\in\mathcal{F}\lp\Omega\rp$ be the solution of the equation $k\lp s\rp=1$ with initial condition $s\vert_{\widetilde{S}}=0$. Then, the section $S_{s_0}$ is defined as the subset
\begin{equation}
S_{s_0}:=\lb p\in\Omega\textup{ }\vert\textup{ }s\lp p\rp=s_0,\textup{ }s_0\in\mathbb{R}\rb.
\end{equation}
Given $p \in \Omega$ and the section $S_{s\lp p\rp} \subset \Omega$, the tangent plane $T_pS_{s\lp p\rp}$ is defined as
\begin{equation}
T_pS_{s\lp p\rp}:=\lb X\in T_p\Omega\textup{ }\vert\textup{ } X\lp s\rp=0\rb.
\end{equation}
The family of spacelike sections $\lb S_s\rb$ define a foliation of $\Omega$ given by the levels of $s$, i.e. the subsets of constant $s$.
\end{defn}
By construction, $s$ increases towards the future. In addition to the existence of $\widetilde{S}$, we also require that all the level sets of the function $s$, i.e. the sections $\{ S_s\}$, are diffeomorphic to each other. 
As discussed in \cite{manzano2021null}, the requirements above amount to assume that the topology of $\Omega$ is $S_{s_0} \times \mathbb{R}$, with the null generators along the direction of $\mathbb{R}$. This global restriction will be henceforth supposed. It is worth stressing, however, that $s$ always exists on sufficiently small open sets, so all the results below always apply on such sets. We construct a basis $\{L,k,v_I\}$ of $\Gamma\lp T\mathcal{M}\rp\vert_{\Omega}$ satisfying the following properties:
\begin{equation}
\label{basis}
\begin{array}{cl}
\textup{(A)} & k\textup{ is an affine future null generator.}\\
\textup{(B)} & \textup{Each }v_I\textup{ is a spacelike vector field verifying that }v_I\vert_p\in T_pS_{s\lp p\rp}\textup{ at each }p\in\Omega.\\
\textup{(C)} & \textup{The basis vectors }\lb k,v_I\rb\textup{ are such that }\lc k,v_I\rc=0\textup{ and }\lc v_I,v_J\rc=0.\\
\textup{(D)} & L\textup{ is a future null vector field everywhere transversal to }\Omega.
\end{array}
\end{equation}
Consider a point $p_0\in\Omega$, the section $S_{s\lp p_0\rp}$, a point $p \in S_{s\lp p_0\rp}$ and two vectors $Z,W\in T_pS_{s\lp p_0\rp}$. The (positive definite) induced metric $h$ of $S_{s\lp p_0\rp}$ at $p$ is $ h\lp Z,W\rp\vert_p\equiv\la Z,W\rag\vert_p$. We let $\hup$ be its associated contravariant metric. Given a basis $\lb v_I\vert_p\rb$ of $T_p S_{s\lp p_0\rp}$ and its corresponding dual $\lb \omega^I\vert_p\rb$, the components of $h$ and $\hup$ are denoted by $h_{IJ}$ and $h^{IJ}$ respectively. Capital Latin indices will be raised and lowered with these metrics.  

Given a basis $\{L,k,v_I\}$, we define $n$ scalar functions
\begin{equation}
\label{eqA30}
\nfi:=-\la L,k\rag>0,\quad \qquad \psi_I:=- \la L,v_I\rag,
\end{equation}
on $\Omega$, 
the one-form $\bs{\sigma}_L\lp Z\rp\vert_p:=\frac{1}{\nfi}\la \nabla_Zk,L\rag\vert_p$, the 2-covariant tensor $\bs{\Theta}^L\lp Z, W\rp\vert_p:=\la \nabla_ZL,W\rag\vert_p$ and the second fundamental form $\bs{\chi}^k\lp Z,W\rp\vert_p\equiv\la \nabla_{Z}k,W\rag\vert_p$ of $S_{s\lp p_0\rp}$ at $p$. 
If $L$ had been chosen to be orthogonal to $S_{s(p_0)}$ (i.e. $\psi_I=0$) then $\bs{\sigma}_L$ and $\bs{\Theta}^L$ would be the torsion one-form and second fundamental form of $S_{s(p_0)}$ along $L$. However, it is convenient for our purposes to allow $L$ to be unrelated to the sections. In the general case $\bs{\sigma}_L$ and $\bs{\Theta}^L$ are generalizations of  those quantities and still encode extrinsic information of the sections. However, we emphasize that $\bs{\Theta}^L$ is {\emph{not}} symmetric in general. For later purposes, we also recall the well-known relation between the rate of change of the induced metric along $k$ and the second fundamental form of the section (see e.g. \cite{gourgoulhon20063+})
\begin{equation}
\label{k(hIJ)}
k\lp h\lp v_K,v_I\rp\rp=2\bs{\chi}^k\lp v_I,v_K\rp.
\end{equation}

The next result (Lemma 1 in \cite{manzano2021null}) gives the tangential covariant derivatives of the basis vectors.
\begin{lem}
\label{lemCovDerivatives}
Let $\Omega$ be an embedded null hypersurface, $\lb S_s\rb$ a foliation of $\Omega$ defined by $s$ and $\lb L,k,v_I\rb$ be a basis of $\Gamma\lp T\mathcal{M}\rp\vert_{\Omega}$ satisfying conditions (\ref{basis}). Then, the tangential derivatives of the basis vectors read:
\begin{align}
\label{covderA}\nabla_{v_I}v_J&=\frac{1}{\nfi}\bs{\chi}^k\lp v_I,v_J\rp L+\frac{1}{\nfi}\lp v_I\lp\psi_J\rp+\bs{\Theta}^L\lp v_I,v_J\rp-\chr_{JI}^K\psi_K\rp k+\chr_{JI}^Kv_K,\\
\label{covderB}\nabla_k{v_I}&=\nabla_{v_I}k=-\lp \bs{\sigma}_L\lp v_I\rp+\frac{1}{\nfi}\psi^B\bs{\chi}^k\lp v_I,v_B\rp\rp k+\bs{\chi}^k\lp v_I,v^B\rp v_B,\\
\label{kappanullgen}\nabla_kk&=\kappa_kk\\
\label{covderC}\nabla_{v_I}L&=\eta_I  L-\frac{1}{\nfi}\psi^J\lp \eta_I \psi_J+\bs{\Theta}^{L}\lp v_I,v_J\rp\rp k+\lp \eta_I \psi^J+\bs{\Theta}^{L}\lp v_I,v^J\rp\rp v_J,\\
\label{covderD}\nabla_kL&=\lp \dfrac{k\lp\nfi\rp}{\nfi}-\kappa_k\rp \lp L-\dfrac{\psi^I\psi_I}{\nfi}k+\psi^I v_I\rp +\lp k\lp\psi_I\rp+\nfi\bs{\sigma}_L\lp v_I\rp\rp\lp \dfrac{\psi^I}{\nfi}k-v^I\rp,
\end{align}
where $\chr_{JI}^K$ and $\eta_I$ are defined by 
\begin{align}
\label{covderinfo1} \chr_{JI}^K&:=\lp\la v^K,\nabla_{v_I}v_J\rag+\frac{1}{\nfi}\psi^K\bs{\chi}^k\lp v_I,v_J\rp\rp,\\
\label{covderinfo2} \eta_I&:=\lp \frac{1}{\nfi}v_I\lp\nfi\rp+\bs{\sigma}_L\lp v_I\rp\rp.
\end{align}
\end{lem}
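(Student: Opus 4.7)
The plan is to decompose each tangential covariant derivative in the basis $\lb L,k,v_I\rb$ with unknown scalar coefficients, and then fix these coefficients by computing inner products against every basis vector. The Gram matrix of $\lb L,k,v_I\rb$ with respect to $g$ is read off from \eqref{eqA30} and the relations $\la L,L\ra=\la k,k\ra=0$, $\la k,v_I\ra=0$ (the latter because $v_I$ lies in the degeneracy-complement of $T\Omega$ and $k$ spans the kernel of the induced metric). Explicitly, if $X=aL+bk+c^Iv_I$, one solves the linear system
\begin{align*}
\la X,k\ra&=-\nfi a, &
\la X,v_J\ra&=-\psi_J a+c^Ih_{IJ}, &
\la X,L\ra&=-\nfi b-\psi_Ic^I,
\end{align*}
giving $a=-\nfi^{-1}\la X,k\ra$, $c^K=h^{KJ}(\la X,v_J\ra+\psi_Ja)$ and $b=-\nfi^{-1}(\la X,L\ra+\psi_Kc^K)$. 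This inversion formula will be applied to $X=\nabla_{v_I}v_J,\nabla_{v_I}k,\nabla_kL,\nabla_{v_I}L$.

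To evaluate the three scalar products for each $X$, I would use the metric Leibniz rule together with the commutator conditions (C) and the torsion-free property of $\nabla$. For instance, for $X=\nabla_{v_I}v_J$, the identities $\la v_J,k\ra=0$, $\la v_J,L\ra=-\psi_J$, $v_I(\la v_J,k\ra)=0$ produce $\la X,k\ra=-\bs{\chi}^k(v_I,v_J)$ and $\la X,L\ra=-v_I(\psi_J)-\bs{\Theta}^L(v_I,v_J)$, while $\la X,v^K\ra$ is encoded, after adding the $\nfi^{-1}\psi^K\bs{\chi}^k(v_I,v_J)$ correction, into the definition $\chr_{JI}^K$ in \eqref{covderinfo1}. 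Substituting into the inversion formula gives exactly \eqref{covderA}. For $X=\nabla_{v_I}k$ one uses $\la k,k\ra=0$ (so $a=0$), the definition of $\bs{\sigma}_L$ to compute $\la X,L\ra=\nfi\bs\sigma_L(v_I)$, and the definition of $\bs{\chi}^k$ for $\la X,v_M\ra$; the identity $\nabla_kv_I=\nabla_{v_I}k$ from torsion-freeness and $[k,v_I]=0$ then yields \eqref{covderB}. Equation \eqref{kappanullgen} is just the definition of $\kappa_k$.

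For $X=\nabla_{v_I}L$ the product $\la X,L\ra$ vanishes because $L$ is null, $\la X,k\ra=-v_I(\nfi)-\nfi\bs{\sigma}_L(v_I)$ (using $\la L,k\ra=-\nfi$ and the definition of $\bs{\sigma}_L$), and $\la X,v_M\ra=\bs{\Theta}^L(v_I,v_M)$; the combination $\eta_I$ of \eqref{covderinfo2} arises naturally as the $L$-component, and \eqref{covderC} follows. The most involved case is $X=\nabla_kL$: here $\la X,L\ra=0$ and $\la X,k\ra=-k(\nfi)+\ke_k\nfi$ (using \eqref{kappanullgen}), while $\la X,v_M\ra=-k(\psi_M)-\nfi\bs\sigma_L(v_M)$ is obtained from $k\la L,v_M\ra=\la \nabla_kL,v_M\ra+\la L,\nabla_kv_M\ra$ combined with $\nabla_kv_M=\nabla_{v_M}k$. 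Substituting into the inversion formula and collecting terms produces the $L$, $k$, $v^I$ contributions in \eqref{covderD}.

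The main obstacle is purely bookkeeping: the non-orthogonality of $\lb L,k\rb$ to the sections forces the appearance of $\psi_I$-corrections in every component, and the $v^I$-component in \eqref{covderD} requires care because raising the $I$ index with $h^{IJ}$ interacts with the action of $k$ on the $\psi$'s; one must keep indices down inside the $k(\cdot)$ and raise only outside, exactly as written in \eqref{covderD}. No further ingredient beyond the definitions in \eqref{eqA30}, the bracket conditions in \eqref{basis}(C), and torsion-freeness is needed.
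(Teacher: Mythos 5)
Your proposal is correct: the linear inversion of the Gram matrix of $\lb L,k,v_I\rb$ is right, and each scalar product $\la X,k\ra$, $\la X,v_J\ra$, $\la X,L\ra$ is correctly evaluated via metric compatibility, torsion-freeness with the bracket conditions (C), and the definitions \eqref{eqA30} of $\nfi,\psi_I$ and of $\bs{\sigma}_L$, $\bs{\Theta}^L$, $\bs{\chi}^k$, reproducing \eqref{covderA}--\eqref{covderD} exactly. The paper itself only quotes this as Lemma 1 of \cite{manzano2021null} without reproving it, but your derivation is the standard one that that reference follows, so there is nothing to add beyond noting the harmless notational slip $\ke_k$ for $\kappa_k$ in the $\nabla_kL$ computation.
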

\begin{rem}
A straightforward calculation based on $[v_I,v_J]=0$ yields 
\begin{equation}
\label{christoffels,s=const}\chr_{JI}^K=\dfrac{1}{2}h^{KA}\lp v_I\lp h_{AJ}\rp +v_J\lp h_{AI}\rp-v_A\lp h_{IJ}\rp\rp+\frac{1}{\nfi}\psi^K\bs{\chi}^k\lp v_I,v_J\rp.
\end{equation}
\end{rem}

\subsection{Metric hypersurface data and hypersurface data}%
The concepts of (metric) hypersurface data \cite{mars2013constraint}  \cite{mars2020hypersurface}, which we summarize next, constitute a natural framework to study the matching, as they provide the necessary setup to study hypersurfaces of arbitrary causal character from a completely abstract viewpoint. 

Let $\Sigma$ be an $n-$dimensional manifold endowed with a $2-$symmetric covariant tensor $ \gamma $, a $1-$form $\ell$ and a scalar function $\ell^{(2)}$. The four-tuple $\lb\Sigma, \gamma ,\ell,\ell^{(2)}\rb$ defines \textbf{metric hypersurface data} provided that the symmetric $2-$covariant tensor $\bs{\mathcal{A}}\vert_p$ on $T_p\Sigma\times\mathbb{R}$ defined as
\begin{equation}
\label{ambientmetric}
\begin{array}{c}
\ld\bs{\mathcal{A}}\rv_p\lp\lp W,a\rp,\lp Z,b\rp\rp:=\ld \gamma \rv_p\lp W,Z\rp+a\ld\ell\rv_p\lp Z\rp+b\ld\ell\rv_p\lp W\rp+ab \ell^{(2)}\vert_p,\\
W,Z\in T_p\Sigma, \qquad a,b\in\mathbb{R},
\end{array}
\end{equation}
has Lorentzian signature at every $p\in\Sigma$. The five-tuple $\lb \Sigma, \gamma ,\ell,\ell^{(2)},Y\rb$ defines \textbf{hypersurface data} if additionally to the metric hypersurface data $\lb \Sigma, \gamma ,\ell,\ell^{(2)}\rb$ one also has a symmetric 2-covariant tensor $Y$ on $\Sigma$.

Since $\bs{\mathcal{A}}\vert_p$ is non-degenerate, we can consider its inverse contravariant tensor $\mathcal{A}\vert_p\in T^*_p\Sigma\times\mathbb{R}$. Splitting its components as
\begin{equation}
\label{ambientinversemetric}
\begin{array}{c}
\ld\mathcal{A}\rv_p\lp\lp \bs{\alpha},a\rp,\lp \bs{\beta},b\rp\rp:=\ld P\rv_p\lp \bs{\alpha},\bs{\beta}\rp+a\ld n\rv_p\lp \bs{\beta}\rp+b\ld n\rv_p\lp \bs{\alpha}\rp+ab n^{(2)}\vert_p,\\
\bs{\alpha},\bs{\beta}\in T^*_p\Sigma, \qquad a,b\in\mathbb{R}
\end{array}
\end{equation}
defines a symmetric two-contravariant tensor (field) $P$, a vector (field) $n$ and a scalar (field) $n^{\lp 2\rp}$ on $\Sigma$. By definition of $\mathcal{A}$, it follows (see equations (3)-(6) in \cite{mars2013constraint}):
\begin{align}
\label{gammanelln} \gamma \lp n,\cdot\rp+n^{(2)}\ell&=0, & \ell\lp n\rp&=1-n^{(2)}\ell^{(2)}, \\
P\lp\cdot,\ell\rp+\ell^{(2)}n&=0, & P\lp\cdot, \gamma \lp \cdot,X\rp\rp&=X-\ell\lp X\rp n,\qquad  \gamma \lp\cdot,P\lp \cdot,\bs{\alpha}\rp\rp=\bs{\alpha}-\bs{\alpha}\lp n\rp \ell, 
\end{align}

The abstract notion of (metric) hypersurface data connects to the geometry of hypersurfaces via the concept of \textit{embedded (metric) hypersurface data} \cite{mars2020hypersurface}. The data $\lb\Sigma, \gamma ,\ell,\ell^{(2)}\rb$ is \textbf{embedded} in a spacetime $\lp \mathcal{M},g\rp$ of dimension $n+1$ if there exists an embedding $\Phi:\Sigma\hookrightarrow\mathcal{M}$ and a rigging vector field $\zeta$ along $\Phi\lp\Sigma\rp$ (i.e. a vector field which is everywhere transversal to $\Phi\lp\Sigma\rp$) satisfying 
\begin{equation}
\label{emhd}
\Phi^*\lp g\rp= \gamma , \qquad\Phi^*\lp g\lp\zeta,\cdot\rp\rp=\ell, \qquad\Phi^*\lp g\lp\zeta,\zeta\rp\rp=\ell^{(2)}.
\end{equation}
The hypersurface data $\lb\Sigma, \gamma ,\ell,\ell^{(2)},Y\rb$ is embedded if, in addition to \eqref{emhd}, it holds
\begin{equation}
\label{eqcf}
\dfrac{1}{2}\Phi^*\lp \mathscr{L}_{\zeta}g\rp=Y.
\end{equation}
Hypersurface data has a built-in gauge freedom that corresponds to the non-uniqueness of the rigging vector field. Let $\{\Sigma, \gamma ,\ell,\ell^{\lp2\rp},Y\}$ be hypersurface data, $z\in\mathcal{F}^*\lp\Sigma\rp$ and $W\in\Gamma\lp T\Sigma\rp$. The gauge transformed hypersurface data $\lb \Sigma,\mathcal{G}_{\lp z,W\rp}\lp \gamma \rp,\mathcal{G}_{\lp z,W\rp}\lp\ell\rp,\mathcal{G}_{\lp z,W\rp}\big( \ell^{(2)} \big),\mathcal{G}_{\lp z,W\rp}\lp Y\rp\rb$ is (Definition 3.1 in \cite{mars2020hypersurface}) 
\begin{align}
\label{gaugegamma&ell2} \mathcal{G}_{\lp z,W\rp}\lp \gamma \rp&:= \gamma , & 
\mathcal{G}_{\lp z,W\rp}\big(\ell^{(2)}\big)&:=z^2\lp \ell^{(2)}+2\ell\lp W\rp+ \gamma \lp W,W\rp\rp,\\ 
\label{gaugeell&Y}\mathcal{G}_{\lp z,W\rp}\lp\ell\rp&:=z\lp\ell+ \gamma \lp W,\cdot\rp\rp, & 
\mathcal{G}_{\lp z,W\rp}\lp Y\rp & :=zY+\frac{1}{2}\lp\lp\ell\otimes \tdo z+ \tdo z\otimes \ell\rp+\lieo_{zW} \gamma \rp.
\end{align}

This gauge transformation induces the following transformations on $P$, $n$ and $n^{(2)}$ (Lemma 5 in \cite{mars2013constraint}):
\begin{align}
\label{gaugePnn2a}\mathcal{G}_{\lp z,W\rp}\lp P\rp&=P+n^{(2)}W\otimes W- W\otimes n-n\otimes W,\\
\label{gaugePnn2b}\mathcal{G}_{\lp z,W\rp}\lp n\rp&=z^{-1}\lp n-n^{(2)}W\rp,\qquad\quad \mathcal{G}_{\lp z,W\rp}\big( {n}^{(2)}\big)=z^{-2}n^{(2)}.
\end{align}
Throughout this paper, the boundaries to be matched will be null hypersurfaces, which means that $n^{(2)}=0$ and that the first fundamental form $ \gamma $ is degenerate with degenerate direction $n$ (cf. \eqref{gammanelln}).

\subsection{Matching and shell junction conditions}\label{secMatching}
We  conclude the preliminaries with a summary of the matching problem across null boundaries. This problem is studied in detail in \cite{manzano2021null}. Consider two $\lp n+1\rp$-dimensional spacetimes $\lp \mathcal{M}^{\pm},g^{\pm}\rp$ with respective null hypersurfaces $\Omega^{\pm}$ as boundaries. Let $\{ L^{\pm},k^{\pm},v^{\pm}_I\big\}$ be basis of $\Gamma\lp T\mathcal{M}^{\pm}\rp\vert_{\Omega^{\pm}}$ according to \eqref{basis} and $s^{\pm}\in\mathcal{F}\lp\Omega^{\pm}\rp$ be foliation defining functions constructed as in Definition \ref{def1}.  The matching of $\lp\Mpm,g^{\pm}\rp$ requires the fulfilment of the so-called shell (also called preliminary) junction conditions (see e.g. \cite{darmois1927memorial}, \cite{obrien1952jump}, \cite{le1955theories},  \cite{israel1966singular}, \cite{bonnor1981junction}, \cite{clarke1987junction}, \cite{barrabes1991thin},  \cite{mars1993geometry}). In the language of hypersurface data, they impose \cite{mars2013constraint} that the embedded metric hypersurface data of $\Omega^{\pm}$ agree. We therefore let $\{\Sigma, \gamma ,\ell,\ell^{(2)},Y^{\pm}\}$ be hypersurface data embedded in $\lp\Mpm,g^{\pm}\rp$ with embeddings $\Phi^{\pm}$ and riggings $\zeta^{\pm}$ to be determined in the process of matching. 

Performing a matching amounts to providing an identification between the boundary points (ruled by the diffeomorphism $\bs{\Phi}:=\Phi^+\circ({\Phi^-})^{-1}:\Omega^-\longrightarrow\Omega^+$) as well as a correspondence between transverse directions given by the identification of $\zeta^{\pm}$. We hence take coordinates $\lb \lambda,y^A\rb$ on $\Sigma$, $\lambda$ being a coordinate along the degenerate direction of $ \gamma $, and introduce new basis $\{\zeta^{\pm}, e^{\pm}_1\vert_{\Phi^{\pm}\lp q\rp}= \Phi_*^{\pm}\vert_q\lp\cp_{\lambda}\rp,e^{\pm}_I\vert_{\Phi^{\pm}\lp q\rp}=\Phi_*^{\pm}\vert_q\lp\cp_{y^I}\rp\}$ to be identified in the process of matching. With full generality, one can adapt $\Phi^-$ (or $\{e_i^{-}\}$) and $\zeta^-$ to the geometric quantities already introduced on $\Omega^-$, and let all the information of the matching be contained in $\Phi^+$ and $\zeta^+$. 

For simplicity, given any function $h\in\mathcal{F}\lp\Sigma\rp$, we will make the slight abuse of notation of writing $h\circ\lp{\Phi^{\pm}}\rp^{-1}\in\mathcal{F}\lp\Omega^{\pm}\rp$ also as $h$. The matching procedure implies the existence of functions $\{ H\lp\lambda,y^A\rp, h^I\lp y^A\rp\}$ satisfying the conditions that $\partial_{\lambda} H >0$ and
$\det\lp\partial_{y^J}h^I \rp \neq0$ such that the following decompositions hold  (Section 3.3. in \cite{manzano2021null}):
\begin{align}
\label{evectors1}& e^-_1=k^-, &&  e^-_I=v^-_I, && \zeta^-=L^-,&\\
\label{evectors2}& e^+_1=\beta k^+, & &e^+_I=a_Ik^++b_I^Jv^+_J,&&\zeta^+=\dfrac{1}{A}L^++Bk^++C^Kv^+_K,&
\end{align}
where
\begin{align}
\label{b_I^J}\beta&=\cp_{\lambda}H>0, & a_I&=\cp_{y^I}H, & b_I^J&=\cp_{y^I}h^J,\\
A&=\dfrac{\nfi^+}{\nfi^-}\cp_{\lambda}H>0, & B&=\dfrac{\nfi^-}{2\cp_{\lambda}H}h_+^{AB}\omega^{\lp+\rp}_A\omega^{\lp-\rp}_B, & C^I&=\dfrac{\nfi^-}{\cp_{\lambda}H}h_+^{IJ}\omega^{\lp +\rp}_J,
\end{align}
and $\omega^{\lp\pm\rp}_A:=(b^{-1})^I_A \big( \cp_{y^I}H  - \frac{1}{\nfi^-}\psi^-_I\cp_{\lambda} H  \big) \pm \frac{1}{\nfi^+} \psi^+_A$. Observe that the coefficients $b_I^J$ do not depend on the coordinate $\lambda$.

All the matching information is therefore encoded in $\{H\lp\lambda,y^A\rp,h^{I}\lp y^A\rp\}$. Besides, $\lambda$ and $H\lp\lambda,y^A\rp$ satisfy
\begin{align}
\label{expforHs}
\{s^-\circ\Phi^-=\lambda,\spc\spc s^+\circ\Phi^+=H\}\textup{ on }\Sigma.
\end{align}
The function $H\lp\lambda,y^A\rp$ is named step function, since it measures a kind of jump along the null direction when crossing the matching hypersurface.  
The last condition imposed by the matching procedure is that any pair of points $p\in\Omega^-$, $\bs{\Phi}\lp p\rp\in\Omega^+$ must verify that 
\begin{equation}
\label{isomcondpaper1} h_{IJ}^-\vert_p=b_I^Kb_J^Lh_{KL}^+\vert_{\bs{\Phi}\lp p\rp}.
\end{equation}
This constitutes an isometry condition between the submanifold $S_{s\lp p\rp}\subset\Omega^-$ and its corresponding image on $\Omega^+$. The identification of $e^{\pm}_1$ requires the existence of a diffeomorphism $\Psi$ between the set of null generators on both sides. The geometry of the shell is determined by the jump $[Y]:=Y^+-Y^-$ (cf. \eqref{eqcf}). In particular, the components of the energy-momentum tensor of the shell are $\tau^{11}=-( n^1)^2\gamma^{IJ}[Y_{IJ}]$, $\tau^{1I}=( n^1)^2\gamma^{IJ}[Y_{1J}]$, $\tau^{IJ}=-( n^1)^2\gamma^{IJ}[Y_{11}]$. The explicit expressions of all these quantities were obtained in \cite{manzano2021null}, and are included in the next proposition. 
\begin{prop}
\label{preliminarypropenergymomtensor} Let $\nabla^{\parallel}$ be the Levi-Civita connection on a section $\{ \lambda=\text{const.}\}\subset\Sigma$. Then the components of the tensors $Y^{\pm}$ are given by
\begin{align}
\label{preliminaryYmenos}
Y^{-}_{11} =&\textup{ } \nfi^-\lp\kappa^-_{k^-}-\dfrac{\cp_{\lambda}\nfi^-}{\nfi^-}\rp,\quad Y^{-}_{1J} =-\nfi^-\lp\bs{\sigma}^-_{L^-}(v^-_J)+ \dfrac{\nabla^{\parallel}_{J}\nfi^-}{2\nfi^-}+\dfrac{\cp_{\lambda}\psi^-_J}{2\nfi^-}  \rp,\quad Y^{-}_{IJ} = \bs{\Theta}^{L^-}_-( v_{( I}^-,v_{J)}^-),\\
\label{preliminaryY11}Y_{11}^+=&\textup{ } \nfi^{-}\lp \kappa_{k^+}^+\cp_{\lambda}H + \dfrac{\cp_{\lambda}\cp_{\lambda}H}{\cp_{\lambda}H}-\dfrac{\cp_{\lambda}\nfi^-}{\nfi^-}\rp,\\
\label{preliminaryY1J} Y_{1J}^+=&\textup{ }  \nfi^-\Bigg( \kappa_{k^+}^+\nabla_{J}^{\parallel}H-b_J^B\bs{\sigma}_{L^+}^+( v_B^+)+\dfrac{\cp_{\lambda}\cp_{y^J}H}{\cp_{\lambda}H}-\dfrac{X^L\bs{\chi}_-^{k^-}( v_J^-,v_L^-)}{\nfi^+\cp_{\lambda}H}-\dfrac{\nabla_J^{\parallel}\nfi^-}{2\nfi^-}-\dfrac{\cp_{\lambda}\psi_J^-}{2\nfi^-}\Bigg),\\
\nonumber Y^+_{IJ}=&\textup{ }\nfi^-\Bigg(\dfrac{\kappa_{k^+}^+\nabla_{I}^{\parallel}H \textup{ }\nabla_{J}^{\parallel}H}{\cp_{\lambda}H}-\dfrac{\nabla_{{(I}}^{\parallel}H\textup{ }b_{J)}^B\cp_{\lambda}\psi^+_{B}}{\nfi^+(\cp_{\lambda}H)^2}-\dfrac{2\nabla_{{(I}}^{\parallel}H\textup{ }b_{J)}^B\bs{\sigma}_{L^+}^+(v^+_{B})}{\cp_{\lambda}H} +\dfrac{b_{(I}^Ab_{J)}^B\bs{\Theta}^{L^+}_+(v_{A}^+,v_{B}^+)}{\nfi^+\cp_{\lambda}H}\\
\label{preliminaryYIJ}&\textup{ }+ \dfrac{X^1\bs{\chi}_-^{k^-}( v_I^-,v_J^-)}{\nfi^+\cp_{\lambda}H}+\dfrac{\nabla_{I}^{\parallel}\nabla_{J}^{\parallel}H}{\cp_{\lambda}H}+\dfrac{\nabla_{(I }^{\parallel}(b_{J)}^B\psi^+_{B})}{\nfi^+\cp_{\lambda}H}-\dfrac{\nabla_{( I }^{\parallel}\psi^-_{J)}}{\nfi^-}\Bigg),
\end{align}
and the components of the energy-momentum tensor of the shell are
\begin{align}
\nonumber \tau^{11}=&\textup{ }-\dfrac{\gamma^{IJ}}{\nfi^-}\Bigg( \dfrac{\kappa_{k^+}^+\nabla_{I}^{\parallel}H \textup{ }\nabla_{J}^{\parallel}H}{\cp_{\lambda}H}-\dfrac{\nabla_{{I}}^{\parallel}H\textup{ }b_J^B\cp_{\lambda}\psi^+_{B}}{\nfi^+(\cp_{\lambda}H)^2}-\dfrac{2\nabla_{{I}}^{\parallel}H\textup{ }b_J^B\bs{\sigma}_{L^+}^+(v^+_{B})}{\cp_{\lambda}H}+\dfrac{b_I^Ab_J^B\bs{\Theta}^{L^+}_+(v_{A}^+,v_{B}^+)}{\nfi^+\cp_{\lambda}H}\\
\label{preliminaryfinaltau1} &\textup{ }+ \dfrac{X^1\bs{\chi}_-^{k^-}( v_I^-,v_J^-)}{\nfi^+\cp_{\lambda}H}+\dfrac{\nabla_{I}^{\parallel}\nabla_{J}^{\parallel}H}{\cp_{\lambda}H}+\dfrac{\nabla_{I}^{\parallel}(b_J^B\psi^+_{B})}{\nfi^+\cp_{\lambda}H}-\dfrac{\nabla_{I}^{\parallel}\psi^-_{J}}{\nfi^-}- \dfrac{\bs{\Theta}^{L^-}_-( v_{I}^-,v_{J}^-)}{\nfi^-}\Bigg),\\
\label{preliminaryfinaltau2} \tau^{1I}=&\textup{ }\dfrac{\gamma^{IJ}}{{\nfi^-}}\Bigg( \kappa_{k^+}^+\nabla_{J}^{\parallel}H+\dfrac{\cp_{\lambda}\cp_{y^J}H}{\cp_{\lambda}H}-\dfrac{X^L\bs{\chi}_-^{k^-}(v_J^-,v_L^-)}{\nfi^+\cp_{\lambda}H}-\lp b_J^B\bs{\sigma}_{L^+}^+(v_B^+)-\bs{\sigma}^-_{L^-}( v^-_J)\rp \Bigg),\\
\label{preliminaryfinaltau3} \tau^{IJ}=&\textup{ }-\dfrac{\gamma^{IJ}}{{\nfi^-}}\lp  \kappa_{k^+}^+\cp_{\lambda}H-\kappa^-_{k^-} + \dfrac{\cp_{\lambda}\cp_{\lambda}H}{\cp_{\lambda}H} \rp,
\end{align}
where $\nabla_{I}^{\parallel}\psi^-_{J}$, $\nabla_{I}^{\parallel}(b_J^B\psi^+_{B})$ are derivatives of the covariant tensors $\psi^-_Idy^I$, $b_I^A\psi^+_Ady^I$ defined on the sections $\{\lambda=\text{const.}\}\subset\Sigma$ and 
\begin{align}
\label{X1XAomegapaper1}\varepsilon_I:=\nfi^+\cp_{y^I} H+b_I^J\psi_J^+,\qquad X^J:=\gamma^{IJ}\lp \varepsilon_I-A\psi^-_I\rp,\qquad X^1:=- \dfrac{X^J}{2\nfi^-A}\lp \varepsilon_J+A\psi^-_J\rp.
\end{align}
\end{prop}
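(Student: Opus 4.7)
The strategy is to compute $Y^\pm=\tfrac{1}{2}(\Phi^\pm)^*(\lieo_{\zeta^\pm}g^\pm)$ directly from its definition \eqref{eqcf}. Using the standard identity $(\lieo_{\zeta}g)(X,Y)=g(\nabla_X\zeta,Y)+g(\nabla_Y\zeta,X)$, the task reduces to evaluating $g^\pm(\nabla_{e_i^\pm}\zeta^\pm,e_j^\pm)$ symmetrized in $(i,j)$, where $\{e_i^\pm,\zeta^\pm\}$ are the vector fields introduced in \eqref{evectors1}-\eqref{evectors2}, and then contracting the jumps with the inverse hypersurface data to obtain the energy-momentum components.

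The minus side is immediate: since $\{e_1^-,e_I^-\}=\{k^-,v_I^-\}$ and $\zeta^-=L^-$, one substitutes the covariant derivatives \eqref{covderC}-\eqref{covderD} from Lemma \ref{lemCovDerivatives} into $g^-(\nabla_{e_i^-}L^-,e_j^-)$ and uses the definitions \eqref{eqA30} and \eqref{covderinfo2} together with the fact that on sections $\cp_\lambda$ identifies with $k^-$. A short rearrangement then produces \eqref{preliminaryYmenos}.

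For the plus side, I expand $\zeta^+=A^{-1}L^++Bk^++C^Kv_K^+$ by the Leibniz rule, so that each $\nabla_{e_i^+}\zeta^+$ becomes a linear combination of $\nabla_{e_i^+}L^+$, $\nabla_{e_i^+}k^+$, $\nabla_{e_i^+}v_K^+$ plus derivatives of the scalar coefficients along $e_i^+$. The tangential covariant derivatives are then substituted from Lemma \ref{lemCovDerivatives}. To collect the result in the form \eqref{preliminaryY11}-\eqref{preliminaryYIJ}, I must use (i) the explicit expressions of $A,B,C^K,a_I,b_I^J,\beta$ in terms of $H,\nfi^\pm,\psi_I^\pm$; (ii) the relation \eqref{k(hIJ)} applied to the metrics $h^\pm$; (iii) the identity \eqref{isomcondpaper1}, differentiated tangentially along $\cp_{\lambda}$ and $\cp_{y^J}$, to convert $k^+$ and $v_J^+$-derivatives of $h^+_{IJ}$ pulled back to $\Sigma$ into quantities involving $h^-$ and Christoffel symbols of the section; and (iv) the identification of the induced connection on $\{\lambda=\text{const.}\}\subset\Sigma$ with $\nabla^\parallel$.

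The main obstacle is $Y^+_{IJ}$. Because $e_I^+=a_Ik^++b_I^Jv_J^+$ mixes the null generator and the spatial basis, expanding $g^+(\nabla_{e_I^+}\zeta^+,e_J^+)$ generates terms proportional to $\bs{\chi}^{k^+}(v_A^+,v_B^+)$ through \eqref{covderA} and \eqref{covderB}; these must be eliminated. The correct organization is to introduce the one-form $\varepsilon_I$ and the vectors $X^J,X^1$ of \eqref{X1XAomegapaper1}, which encode precisely the non-tangential components of $e_I^+$ with respect to the section $\bs{\Phi}(S_{s^-(p)})$. After using the derivative of \eqref{isomcondpaper1} to exchange $\bs{\chi}^{k^+}$ on the $+$ side for $\bs{\chi}^{k^-}$ on the $-$ side, these coefficients reorganize into the $X^1\bs{\chi}_-^{k^-}(v_I^-,v_J^-)$ term of \eqref{preliminaryYIJ} and the $X^L\bs{\chi}_-^{k^-}(v_J^-,v_L^-)$ term of \eqref{preliminaryY1J}. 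Careful index bookkeeping is needed, but no genuinely new ideas appear. Finally, \eqref{preliminaryfinaltau1}-\eqref{preliminaryfinaltau3} follow at once from the stated formulas $\tau^{11}=-(n^1)^2\gamma^{IJ}[Y_{IJ}]$, $\tau^{1I}=(n^1)^2\gamma^{IJ}[Y_{1J}]$, $\tau^{IJ}=-(n^1)^2\gamma^{IJ}[Y_{11}]$, inserting the expressions for $Y^\pm$ obtained above, taking jumps $[Y]=Y^+-Y^-$, and simplifying with $(n^1)^2=(\nfi^-)^{-2}$ in the present gauge.
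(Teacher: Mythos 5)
Your proposal is correct and follows exactly the route that the paper's framework is built around: the paper itself does not reprove this proposition (it recalls it from \cite{manzano2021null}), but the setup of Section \ref{secMatching} --- the decompositions \eqref{evectors1}--\eqref{evectors2}, the coefficients $A,B,C^K,b_I^J$, the quantities $\varepsilon_I,X^J,X^1$, and the contraction formulas $\tau^{11}=-(n^1)^2\gamma^{IJ}[Y_{IJ}]$, etc.\ with $n^1=-1/\nfi^-$ --- is precisely the scaffolding for the computation you describe, namely evaluating $\tfrac12\Phi^{\pm*}(\lieo_{\zeta^\pm}g^\pm)$ via Lemma \ref{lemCovDerivatives} and trading $\bs{\chi}^{k^+}_+$ for $\bs{\chi}^{k^-}_-$ through the $\lambda$-derivative of \eqref{isomcondpaper1} combined with \eqref{k(hIJ)}. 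The minus-side formulas and the $\tau$ components follow immediately as you indicate, and the identified ingredients suffice to carry out the (admittedly lengthy) bookkeeping for $Y^+_{1J}$ and $Y^+_{IJ}$.
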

In these expressions the primary geometric objects are tensors defined on the boundaries $\Omega^{\pm}$, and their transfer to the abstract hypersurface $\Sigma$ is performed via the explicit appearance of the quantities $b_{I}^J$ (in the case of the $\Omega^+$ boundary; for the other boundary $\Omega^-$ the transfer is immediate because of the fact that we are choosing the map $\Phi^-$ to be the identity between $\Sigma$ and $\Omega^-$). These expressions have the advantage that are fully explicit, which is a desirable feature when matchings of concrete spacetimes are to be performed. However, this form can also obscure the geometric interpretation of some quantities. It is therefore of interest to provide the results from Proposition \ref{preliminarypropenergymomtensor} with a more covariant interpretation. This different perspective will be particularly useful in Section \ref{secRestrictionY}, and requires some prior considerations regarding the pullback to $\Sigma$ of tensor fields and their derivatives on $\Omega^{\pm}$. 

Let $\{\bs{\omega}_{\pm}^1,\bs{\omega}_{\pm}^I\}$ be the dual basis of $\{k^{\pm},v_I^{\pm}\}$ and, for each constant $\lambda_0$, let $f_{\lambda_0}$ be the trivial embedding of the section $\{\lambda = \lambda_0 = \text{const.}\}$ onto $\Sigma$. Consider any pair of $p$-covariant tensor fields $\bs{T}^{\pm}$ on $\Omega^{\pm}$ with the property that $\bs{T}^{\pm}(...,k^{\pm},...)=0$, i.e. $\bs{T}^{\pm}=T^{\pm}_{A_1...A_p}\bs{\omega}_{\pm}^{A_1}\otimes ...\otimes \bs{\omega}_{\pm}^{A_p}$. By \eqref{evectors1}-\eqref{evectors2}, their pullback tensors $\nTpm:=(\Phi^{\pm})^*(\bs{T}^{\pm})$ on $\Sigma$ are
\begin{align}
\label{tensorremark1}\nTm&=\bs{T}^{-}(e_{I_1}^-,...,e_{I_p}^-)\text{ }dy^{I_1}\otimes...\otimes dy^{I_p}=T^{-}_{I_1...I_p}\text{ }dy^{I_1}\otimes...\otimes dy^{I_p},\\
\label{tensorremark2}\nTp&=\bs{T}^{+}(e_{I_1}^+,...,e_{I_p}^+)\text{ }dy^{I_1}\otimes...\otimes dy^{I_p}=b_{I_1}^{A_1}...b_{I_p}^{A_p}T^{+}_{A_1...A_p}\text{ }dy^{I_1}\otimes...\otimes dy^{I_p},
\end{align} 
while the corresponding pullback tensors $f_{\lambda}^*(\nTpm)$ on the sections $\{\lambda=\text{const.}\}\subset\Sigma$ are 
\begin{align}
\label{tensorremark3}f_{\lambda}^*(\nTm)&=T^{-}_{I_1...I_p}\text{ }dy^{I_1}\otimes...\otimes dy^{I_p},\qquad f_{\lambda}^*(\nTp)=b_{I_1}^{A_1}...b_{I_p}^{A_p}T^{+}_{A_1...A_p}\text{ }dy^{I_1}\otimes...\otimes dy^{I_p}.
\end{align}
In order to avoid cumbersome notation we shall still call $f_{\lambda}^*(\nTpm)$  as $\nTpm$. This slight abuse of notation is harmless since the context will make clear the precise meaning.

Let us introduce the tensor fields $\bs{\psi}^{\pm}:=\psi_I^{\pm} \bs{\omega}_{\pm}^I$, $\bs{\sigma}^{\pm}_{L^{\pm}}:=\bs{\sigma}^{\pm}_{L^{\pm}}(v_I^{\pm})\bs{\omega}_{\pm}^I$, $\bs{\Theta}_{\pm}^{L^{\pm}}:=\bs{\Theta}_{\pm}^{L^{\pm}}(v_I^{\pm},v_J^{\pm})\bs{\omega}_{\pm}^I\otimes \bs{\omega}^J_{\pm}$ and $\bs{\chi}^{k^{\pm}}_{\pm}:=\bs{\chi}^{k^{\pm}}_{\pm}(v^{\pm}_I,v^{\pm}_J)\bs{\omega}^I_{\pm}\otimes\bs{\omega}_{\pm}^J$ on $\Omega^{\pm}$ and, in accordance to the notation above, use a tilde to denote their pullbacks to $\Sigma$ and to the sections $\{\lambda = \text{const.}\}$. The jump of $\bs{T}^{\pm}$ on $\Sigma$ will be correspondingly defined as 
\begin{equation}
[\nTstep]:=(\Phi^+)^*(\bs{T}^+)-(\Phi^-)^*(\bs{T}^-)=\nTp-\nTm.
\end{equation}
As already indicated, the appearance of the coefficients $b_I^J$ in the expressions of Proposition \ref{preliminarypropenergymomtensor} is a consequence of making the pullback to $\Sigma$ or to a section $\{\lambda=\text{const.}\}\subset\Sigma$ of the corresponding tensors on $\Omega^+$. Consequently, the expressions will take a more elegant form when written in terms of tilde quantities. Moreover, the derivatives $\nabla_{I}^{\parallel}\psi^-_{J}$, $\nabla_{I}^{\parallel}(b_J^B\psi^+_{B})$ on the sections $\{\lambda=\text{const.}\}\subset\Sigma$ will also acquire a clear geometric meaning. It is immediate to rewrite all expressions in Proposition \ref{preliminarypropenergymomtensor} in terms on this new notation. In this paper we shall only need them particularized to  $\kappa^{\pm}_k=0$  
so, for the sake of brevity we only do the rewriting in this specific case.
\begin{prop}\label{propenergymomtensor}
Let $\nabla^{\parallel}$ be the Levi-Civita connection on a section $\{ \lambda=\text{const.}\}\subset\Sigma$. If $k^{\pm}$ are chosen affine (i.e. with $\kappa_k^{\pm}=0$), then the components of the tensors $Y^{\pm}$ read
\begin{align}
\label{Ymenos}\hspace{-0.3cm}Y^{-}_{11} =&\spc -\cp_{\lambda}\nfi^-,\quad Y^{-}_{1J} =-\nfi^-\sigmam_J-\dfrac{\nabla^{\parallel}_{J}\nfi^-}{2}-\dfrac{\cp_{\lambda}\psim_J}{2}  ,\quad Y^{-}_{IJ} = \Thetam_{(IJ)},\\
\label{Y11}\hspace{-0.3cm}Y_{11}^+=&\textup{ } \nfi^{-}\lp  \dfrac{\cp_{\lambda}\cp_{\lambda}H}{\cp_{\lambda}H}-\dfrac{\cp_{\lambda}\nfi^-}{\nfi^-}\rp,\\
\label{Y1J} \hspace{-0.3cm}Y_{1J}^+=&\textup{ }  \nfi^-\Bigg( -\sigmap_J+\dfrac{\cp_{\lambda}\cp_{y^J}H}{\cp_{\lambda}H}-\dfrac{X^L\widetilde{\chi}{}^-_{JL}}{\nfi^+\cp_{\lambda}H}-\dfrac{\nabla_J^{\parallel}\nfi^-}{2\nfi^-}-\dfrac{\cp_{\lambda}\psim_J}{2\nfi^-}\Bigg),\\
\label{YIJ} \hspace{-0.3cm}Y^+_{IJ}=&\textup{ }\nfi^-\Bigg(\dfrac{\nabla_{I}^{\parallel}\nabla_{J}^{\parallel}H}{\cp_{\lambda}H}-\dfrac{\nabla_{{\lp I\rd}}^{\parallel}H\textup{ }\cp_{\lambda}\psip_{\ld J\rp}}{\nfi^+\lp\cp_{\lambda}H\rp^2}-\dfrac{2\nabla_{{\lp I\rd}}^{\parallel}H\textup{ }\sigmap_{\ld J\rp}}{\cp_{\lambda}H} +\dfrac{\Thetap_{(IJ)}}{\nfi^+\cp_{\lambda}H}+\dfrac{X^1\widetilde{\chi}{}^-_{IJ}}{\nfi^+\cp_{\lambda}H}+\dfrac{\nabla_{\lp I \rd}^{\parallel}\psip_{\ld J\rp}}{\nfi^+\cp_{\lambda}H}-\dfrac{\nabla_{\lp I \rd}^{\parallel}\psim_{\ld J \rp}}{\nfi^-}\Bigg),
\end{align}
while the components of the energy-momentum tensor of the shell are
\begin{align}
\nonumber \tau^{11}=&\textup{ }-\dfrac{\gamma^{IJ}}{\nfi^-}\Bigg(\dfrac{\nabla_{I}^{\parallel}\nabla_{J}^{\parallel}H}{\cp_{\lambda}H} -\dfrac{\nabla_{{I}}^{\parallel}H\textup{ }\cp_{\lambda}\psip_{J}}{\nfi^+\lp\cp_{\lambda}H\rp^2}-\dfrac{2\nabla_{{I}}^{\parallel}H\textup{ }\sigmap_J}{\cp_{\lambda}H}+\dfrac{X^1\widetilde{\chi}{}^-_{IJ}}{\nfi^+\cp_{\lambda}H}\\
\label{finaltau1}&\textup{ }+\dfrac{\nabla_{I}^{\parallel}\psip_{J}}{\nfi^+\cp_{\lambda}H}-\dfrac{\nabla_{I}^{\parallel}\psim_{J}}{\nfi^-}+\dfrac{\Thetap_{(IJ)}}{\nfi^+\cp_{\lambda}H}- \dfrac{\Thetam_{(IJ)}}{\nfi^-}\Bigg),\\
\label{finaltau3} \tau^{1I}=&\textup{ }\dfrac{\gamma^{IJ}}{{\nfi^-}}\Bigg( \dfrac{\cp_{\lambda}\cp_{y^J}H}{\cp_{\lambda}H}-\dfrac{X^L\widetilde{\chi}{}^-_{JL}}{\nfi^+\cp_{\lambda}H}-\lc\sigmastep_J\rc \Bigg),\qquad\tau^{IJ}=-\gamma^{IJ}\dfrac{\cp_{\lambda}\cp_{\lambda}H}{\nfi^-\cp_{\lambda}H},
\end{align}
where again $X^1$ and $X^A$ are defined by \eqref{X1XAomegapaper1}.
\end{prop}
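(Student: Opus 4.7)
The plan is to obtain Proposition~\ref{propenergymomtensor} as a direct translation of Proposition~\ref{preliminarypropenergymomtensor} into the tilde notation, specialized to the choice $\kappa^{\pm}_{k^{\pm}}=0$. The starting point is the fact that all the relevant geometric objects on the boundaries, namely $\bs{\psi}^{\pm}$, $\bs{\sigma}^{\pm}_{L^{\pm}}$, $\bs{\Theta}^{L^{\pm}}_{\pm}$ and $\bs{\chi}^{k^{\pm}}_{\pm}$, satisfy the contraction condition $\bs{T}^{\pm}(\ldots,k^{\pm},\ldots)=0$ and hence fit the hypothesis of the pullback identities \eqref{tensorremark1}--\eqref{tensorremark3}. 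The first step is therefore to set $\kappa_{k^{\pm}}^{\pm}=0$ in formulas \eqref{preliminaryYmenos}--\eqref{preliminaryfinaltau3}, which simply removes every explicit occurrence of a surface gravity coefficient and leaves the remaining structure intact.

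Next, I would recognize that the combinations of basis coefficients $b_I^A$ multiplying tensor components on $\Omega^+$ are precisely what \eqref{tensorremark1}--\eqref{tensorremark2} identify as the components of the pullback tensors on $\Sigma$. Concretely, $b_J^B\bs{\sigma}^+_{L^+}(v^+_B)=\sigmap_J$, $b_J^B\psi^+_B=\psip_J$, $b_I^Ab_J^B\bs{\Theta}^{L^+}_+(v^+_A,v^+_B)=\Thetap_{IJ}$, and $\bs{\chi}^{k^-}_-(v^-_I,v^-_L)=\widetilde{\chi}{}^-_{IL}$; on the $\Omega^-$ side the identification is trivial since $\Phi^-$ has been adapted to the identity, so $\sigmam_J$, $\psim_J$, $\Thetam_{IJ}$, $\widetilde{\chi}{}^-_{IJ}$ denote the same components as in the untilded expressions. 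The symmetrization brackets in $Y^+_{IJ}$ require verifying that $b_{(I}^A b_{J)}^B\bs{\Theta}^{L^+}_+(v^+_A,v^+_B)=\Thetap_{(IJ)}$ and $\nabla^{\parallel}_{(I}H\,b_{J)}^B\bs{\sigma}^+_{L^+}(v^+_B)=\nabla^{\parallel}_{(I}H\,\sigmap_{J)}$, which follow at once by commuting the symmetrization with the evaluation on the basis.

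The more subtle step is the reinterpretation of the derivatives $\nabla^{\parallel}_I(b_J^B\psi^+_B)$ and $\nabla^{\parallel}_I\psi^-_J$ appearing in \eqref{preliminaryYIJ} and \eqref{preliminaryfinaltau1}. By \eqref{b_I^J}, the coefficients $b_I^J$ depend only on the coordinates $y^A$ transverse to the generators, so they are constant along the flow of $\partial_\lambda$ and behave as scalar functions on any section $\{\lambda=\mathrm{const.}\}\subset\Sigma$. Consequently the one-form with components $b_J^B\psi^+_B$ on such a section is precisely the pullback $f^*_{\lambda}\nsigmap$ (more precisely $f^*_{\lambda}\npsip$), and its covariant derivative with respect to $\nabla^{\parallel}$ is $\nabla^{\parallel}_I\psip_J$. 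The analogous identification for the minus side is immediate. This is the only point requiring a short geometric argument rather than a pure substitution.

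After these substitutions, the formulas of Proposition~\ref{preliminarypropenergymomtensor} read exactly as in the statement, the jumps $[\sigmastep]$ in \eqref{finaltau3} appearing through the standard definition $[\nTstep]=\nTp-\nTm$. The remaining task is a clean transcription, checking indices, symmetrizations, and the position of $\nfi^{\pm}$ factors; no genuine computation is required beyond what is already contained in Proposition~\ref{preliminarypropenergymomtensor} and the pullback discussion following \eqref{tensorremark3}. The main (minor) obstacle is purely notational bookkeeping, ensuring that every $b_I^A$-contracted object is correctly matched to its tilde counterpart and that no spurious $\lambda$-dependence is introduced through $b_I^J$ when handling the section-wise covariant derivatives.
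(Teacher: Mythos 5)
Your proposal is correct and takes essentially the same route as the paper, which presents Proposition \ref{propenergymomtensor} as an immediate rewriting of Proposition \ref{preliminarypropenergymomtensor} with $\kappa_{k}^{\pm}=0$, using the pullback identifications \eqref{tensorremark1}--\eqref{tensorremark3} and the $\lambda$-independence of $b_I^J$ exactly as you describe. The only blemish is the momentary slip writing $f_\lambda^*\nsigmap$ for $f_\lambda^*\npsip$, which you correct yourself.
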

The choice of the foliation defining functions $s^{\pm}$ is highly non-unique, which makes it interesting to study the behaviour of the tensor fields $Y^{\pm}$ and the energy-momentum tensor of the shell $\tau$ under transformations of the functions $s^{\pm}$. As in the previous proposition, we restrict the discussion to the case when the null generators $k^{\pm}$ have been chosen affine, i.e. $\kappa^{\pm}_k=0$. However, the conclusions of this section also hold in general, namely when $\kappa^{\pm}_k\neq0$.

Foliations of manifolds by codimension one submanifolds always admit a freedom of reparametrization of the foliation defining function. The restriction of the generators $k^{\pm}$ being affine and future and $s^{\pm}$ satisfying $k^{\pm} (s^{\pm} ) = 1$ reduces the full reparametrization freedom to $s^{\pm} \longrightarrow q^{\pm}  s^{\pm} +  s_0^{\pm}$, with constants $s_0^{\pm}$ and $q^{\pm} >0$. This gives rise to the natural question of how expressions of Proposition \ref{propenergymomtensor} may be affected by this freedom. From the fact that the matching has been performed assuming that the map $\Phi^{-}$ is the identity (see \eqref{evectors1}), i.e. by identifying the boundary $\Omega^-$ and the abstract manifold $\Sigma$, the changes above are of different conceptual nature and hence have different effects in the minus and in  the plus sides. Concerning the $\lp\mathcal{M}^+,g^+\rp$ side, this freedom translates into multiplying $H$ by $q^+$ and shifting by $s_0^+$ (cf. \eqref{expforHs}) as well as changing $L^+$ as $L^+\longrightarrow q^+L^+$ so that $\nfi^+$ is preserved. Moreover, $k^+$ transforms as $k^+\longrightarrow \frac{1}{q^+}k^+$. It is easy to check that equations \eqref{preliminaryYmenos}-\eqref{preliminaryfinaltau3} all remain invariant under a reparametrization of this type. For that it suffices to notice that $A\longrightarrow q^+A$, $X^L\longrightarrow q^+X^L$ and $X^1\longrightarrow q^+X^1$.

The transformation $s^-\longrightarrow q^- s^-+s^-_0$ is more subtle precisely because of the trivial identification between $\Omega^-$ and $\Sigma$. By \eqref{expforHs}, we know that $s^-\longrightarrow q^- s^-+s^-_0$ induces in turn the change $\lambda\longrightarrow q^- \lambda+s^-_0$ on $\Sigma$. However, the degenerate direction of $\Sigma$ is defined by a vector field $n$ which is proportional to $\cp_{\lambda}$. Thus, scaling and shifting $\lambda$ amounts to taking a new vector field $n'=\frac{1}{q^-}n$, i.e. to perform a gauge transformation $\mathcal{G}_{(z,W)}$ with $z=q^-$ and $W=0$. Once more, condition $k^-(s^-)=1$ forces $k^-\longrightarrow\frac{1}{q^-}k^-$ which means that, in order to preserve $\nfi^-$, we again need $L^-\longrightarrow q^-L^-$. It is immediate to check that all this entails $d\lambda\longrightarrow q^-d\lambda$, $\cp_{\lambda}\longrightarrow \frac{1}{q^-}\cp_{\lambda}$, $\psi_I^-\longrightarrow q^-\psi_I^-$, $\bs{\Theta}^{L^{-}}_{-}( v^-_I,v^-_J)\longrightarrow q^-\bs{\Theta}^{L^{-}}_{-}( v^-_I,v^-_J)$, $\bs{\chi}_-^{k^-}(v_I^-,v_J^-)\longrightarrow \frac{1}{q^-}\bs{\chi}_-^{k^-}(v_I^-,v_J^-)$ and leaves $\nfi^{+}$, $\psi_I^+$, $\bs{\sigma}^{\pm}_{L^{\pm}}(v_J^{\pm})$ and $\bs{\Theta}^{L^{+}}_{+}( v^+_I,v^+_J)$ unchanged. By virtue of expressions \eqref{preliminaryYmenos}-\eqref{preliminaryYIJ}, it follows that $Y^{\pm}_{1J}$ remain unchanged, $Y^{\pm}_{11} \longrightarrow \frac{1}{q^-}Y^{\pm}_{11}$ and $Y^{\pm}_{IJ} \longrightarrow q^-Y^{\pm}_{IJ}$, as in this case $A\longrightarrow\frac{1}{q^-}A$, $X^L\longrightarrow X^L$ and $X^1\longrightarrow q^-X^1$. Since 
\begin{align}
Y^{\pm}&=Y^{\pm}_{11}d\lambda\otimes d\lambda+Y^{\pm}_{1J} \lp d\lambda\otimes dy^J+ dy^J\otimes d\lambda\rp+Y^{\pm}_{IJ}dy^I\otimes dy^J,
\end{align}
we conclude that the transformation of the tensor $Y^{\pm}$ is
$Y^{\pm}\longrightarrow q^-Y^{\pm}$, which is consistent with the gauge behaviour \eqref{gaugeell&Y}. Concerning $\tau$, it follows from \eqref{preliminaryfinaltau1}-\eqref{preliminaryfinaltau3} that the components transform as $\tau^{11}\longrightarrow q^-\tau^{11}$, $\tau^{1I}\longrightarrow\tau^{1I}$, $\tau^{IJ}\longrightarrow\frac{1}{q^-}\tau^{IJ}$ and hence the tensor 
\begin{align}
\tau&=\tau^{11}\cp_{\lambda}\otimes\cp_{\lambda}+\tau^{1J}\lp\cp_{\lambda}\otimes\cp_{y^J}+\cp_{y^J}\otimes\cp_{\lambda}\rp+\tau^{IJ}\cp_{y^I}\otimes\cp_{y^J}
\end{align}
transforms as $\tau\longrightarrow\frac{1}{q^-}\tau$ in agreement with Proposition 7 in \cite{mars2013constraint}. The tensor $\tau$ not being invariant is a consequence of the fact that there is no canonical volume form on null hypersurfaces. Each choice of rigging (or of gauge at the abstract level) gives raise to a different volume form. From the gauge transformation of such volume forms it follows that the product $\tau\bs{\eta}$ does remain invariant (see Lemma 3.5 in \cite{mars2020hypersurface}). It is therefore important to bear in mind that from a physical point of view $\tau$ is a density, i.e. a physical magnitude per unit volume, hence the necessity of this scaling behaviour under the transformation $s^{-} \longrightarrow q^-s^-+s^-_0$.

The standard physical interpretation of the components of the energy-momentum tensor is as follows (see e.g. \cite{poisson2004relativist}). In $8\pi G=c=1$ units, $\rho := \tau^{11}$ is the energy density, $j^A := \tau^{1A}$ an energy-flux and the pressure $p$ is given by $\tau^{AB}:=p\gamma^{AB}$.

\section{Killing horizons}\label{secKH}
The basic object of this paper is the so-called Killing horizon of zero order, which we define in this section. To motivate its definition, we recall first the definition and general properties of Killing horizons. General references are \cite{frolov2012black}, \cite{wald1984general}.
\begin{defn} (Killing horizon) Let $\xi$ be a Killing vector in a spacetime $\lp \mathcal{M},g\rp$. An embedded null hypersurface $\hor$ where $\xi$ is null, nowhere zero and to which $\e$ is tangent defines a Killing horizon of $\xi$.
\end{defn}
\begin{defn} (Bifurcation surface, bifurcate Killing horizon) Let $\xi$ be a non-trivial Killing vector in a spacetime $\lp \mathcal{M},g\rp$ and assume that there exists a connected spacelike codimension-two submanifold $\mathcal{S}$ of fixed points, i.e. where $\e\vert_{\mathcal{S}}=0$. This submanifold $\mathcal{S}$ is called bifurcation surface, and the set of points along all null geodesics orthogonal to $\mathcal{S}$ comprises a bifurcate Killing horizon with respect to $\e$.
\end{defn}
A Killing horizon $\hor$ may have one or several connected components. However, one can always select $\hor$ to be such that its topological closure $\chor$ is a  smooth connected (necessarily null) hypersurface without boundary. This shall be henceforth required. We will let $\mathcal{S}$ denote the set of fixed points of $\xi$ within $\chor$, i.e. $\mathcal{S}:=\{p\in\chor\spc\vert\spc\xi\vert_p=0\}$. 
The set of fixed points of a Killing vector $\xi$ is the union of connected totally geodesic closed submanifolds of even codimension (this is proven in \cite{kobayashi1958fixed}, \cite{kobayashi1969vol2} for the Riemannian and pseudo-Riemannian cases respectively). Therefore, the Killing vector $\xi$ cannot vanish on open subsets of $\chor$. We will make use of this fact later. 

Since $\e$ is null and normal along $\hor$, there exists a function $\ke\in\mathcal{F}\lp\hor\rp$, called surface gravity and defined by 
\begin{equation}
\label{kappadef}
\textup{grad}\lp\la\e,\e\rag\rp\eqh -2\ke\e\qquad\Longleftrightarrow\qquad\nabla_{\e}\e\eqh \ke\e.
\end{equation}
The surface gravity $\ke$ is constant along the null generators of $\hor$ (see e.g. \cite{frolov2012black}). For the purposes of this paper, let us also assume that $\ke$ is actually constant on $\hor$. This holds in many situations of physical interest, namely when  
(a) the Einstein tensor of the spacetime $\lp \mathcal{M},g\rp$ satisfies the dominant energy condition \cite{bardeen1973four}, \cite{wald1984general}, (b) the Killing vector field $\e$ is integrable, i.e. it verifies $\bs{\e}\wedge\td\bs{\e}=0$ \cite{racz1992extensions}, (c) for any bifurcate Killing horizon \cite{kay1991theorems}, \cite{racz1992extensions}, \cite{frolov2012black}, (d) for multiple Killing horizons \cite{mars2018multiple}, \cite{mars2018nearhorizon}, \cite{mars2019multiple}. We however emphasize that the constancy of the surface gravity restricts the class of horizons under consideration (see e.g. \cite{mars2012stability} for a situation where a non-constant surface gravity implies a rather different behaviour of the properties of the horizon).

Constancy of $\kappa_{\xi}$ on $\hor$ allows for a trivial extension to $\chor$ as the same constant. We use the standard terminology of calling $\hor$, $\chor$ degenerate (resp. non-degenerate) if $\kappa_{\xi}=0$ (resp. $\kappa_{\xi}\neq0$). Moreover, we assume $\ke$ to be positive in the non-degenerate case. Since $\ke$ is constant, this entails no loss of generality, as one can always take $-\xi$ as the Killing vector field whenever $\kappa_{\e}<0$ (cf. \eqref{kappadef}).

We shall henceforth require that all assumptions from section \ref{secPreliminaries} also apply on $\chor$. This allows us to take an affine future null generator $k$ (i.e. with $\kappa_k=0$) and a function $s\in\mathcal{F}\lp\chor\rp$ satisfying $k\lp s\rp=1$ everywhere. We construct a basis $\{L,k,v_I\}$ of $\Gamma\lp T\mathcal{M}\rp\vert_{\chor}$ according to \eqref{basis}. By uniqueness of the null generator, there is a function $F\in\mathcal{F}\lp\chor\rp$ defined by
\begin{equation}
\label{etak}
\xi\eqclosh Fk.
\end{equation}
It is clear that $F$ vanishes exactly at the fixed points of $\xi$. Equations \eqref{kappanullgen} and \eqref{kappadef} together with our choice $\kappa_{k}=0$ give
\begin{equation}
\label{eqforF}
\kappa_{\xi}\xi\eqclosh F \nabla _{k }\lp F k \rp\eqclosh  k \lp F \rp \xi \quad\Longrightarrow\quad\kappa_{\xi} \eqclosh  k \lp F \rp,
\end{equation}
where the implication holds because the set of zeros of $\xi$ always has empty interior. The solution is
\begin{equation}
\label{functionF}
F\eqclosh f+\kappa_{\xi}s,
\end{equation}
where the integration function $f\in\mathcal{F}\lp\chor\rp$ satisfies $k\lp f\rp=0$. Summarizing
\begin{equation}
\label{finaletak}
\e\eqclosh \lp f+\kappa_{\xi}s\rp k, \qquad \textup{where} \qquad k\lp f\rp=0.
\end{equation}
In the matching problem, this equation holds on both boundaries $\chor^{\pm}$. 
Since $k^{\pm}$ are proportional to $e_1^{\pm}$ (cf. \eqref{evectors1}-\eqref{evectors2})
the functions $f^{\pm}\circ\Phi^{\pm}$ (or simply $f^{\pm}$) do not depend on the coordinate $\lambda$ along $\Sigma$, i.e. $f^{\pm}=f^{\pm}\lp y^A\rp$.

We have already defined $\mathcal{S}$ as the set of fixed points of $\e$ within $\chor$. The following lemma focuses on its causal character.
\begin{lem}
\label{lemzerosets}
In the setup above, if (i) $\kappa_{\xi}\vert_{\chor}\neq0$ and $\mathcal{S}\neq\emptyset$, then $\mathcal{S}$ is given by the implicit equation $s=-\frac{f}{\ke}$ and defines a bifurcation surface. If (ii) $\kappa_{\xi}\vert_{\chor}=0$, $\mathcal{S}$ is either empty or is the union of smooth connected codimension-two null submanifolds of $\chor$ defined as the zeros of $f$. 
\end{lem}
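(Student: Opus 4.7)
The plan is to analyze the zero set of $\xi$ directly from the identity $\xi \eqclosh (f + \kappa_\xi s)k$ established above, together with the fact that $k$ is nowhere zero on $\chor$.

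For case (i), with $\kappa_\xi \neq 0$, I would first observe that $\xi|_p = 0$ is equivalent to $(f + \kappa_\xi s)(p) = 0$, which yields the stated implicit equation $s = -f/\kappa_\xi$. Since $k(f+\kappa_\xi s) = \kappa_\xi \neq 0$, the function $f + \kappa_\xi s$ has nowhere-vanishing differential on $\chor$, so $\mathcal{S}$ is a smooth codimension-one submanifold of $\chor$, i.e.\ a codimension-two submanifold of $\mathcal{M}$. Each null generator intersects $\mathcal{S}$ at a single point (the unique value $s = -f/\kappa_\xi$ on that generator, which exists because $s$ ranges over all of $\mathbb{R}$ under the global topological assumption $\chor \simeq S_{s_0}\times\mathbb{R}$), so $\mathcal{S}$ is a section of $\chor$ and therefore spacelike. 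Connectedness of $\mathcal{S}$ follows from connectedness of $S_{s_0}$, so $\mathcal{S}$ satisfies the requirements of a bifurcation surface.

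For case (ii), with $\kappa_\xi = 0$, the identity reduces to $\xi \eqclosh fk$, so $\mathcal{S} = \{p\in \chor : f(p) = 0\}$. If this set is empty there is nothing to prove, so I would pick $p \in \mathcal{S}$. The central step is to rule out higher-order zeros by showing $df|_p \neq 0$. From $\nabla\xi = df\otimes k + f \nabla k$ and $f(p) = 0$, I get $\nabla\xi|_p = df|_p \otimes k|_p$; were $df|_p = 0$, then both $\xi|_p = 0$ and $\nabla\xi|_p = 0$, and the standard Killing-vector uniqueness on a connected spacetime would force $\xi \equiv 0$, contradicting the non-triviality implicit in $\xi$ being the Killing-horizon generator. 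Hence $df|_p \neq 0$ at every zero of $f$, and $\{f = 0\}$ is a smooth codimension-one submanifold of $\chor$ (codimension-two in $\mathcal{M}$). Since $df(k) = k(f) = 0$, the generator $k$ is tangent to $\mathcal{S}$ at every point, so the induced metric on $\mathcal{S}$ inherits the degenerate direction $k$, i.e.\ $\mathcal{S}$ is null. Decomposing into connected components completes the argument.

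The main obstacle I anticipate is precisely the exclusion of higher-order zeros of $f$ in case (ii). My approach reduces this to the well-known uniqueness of a Killing vector from its $1$-jet at a single point, which requires connectedness of $\mathcal{M}$ and non-triviality of $\xi$; both are implicit in the setup. Once this is secured, the rest is a direct application of the implicit function theorem combined with the standing topological and foliation assumptions on $\chor$.
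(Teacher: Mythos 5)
Your part (i) and the nullness argument in part (ii) essentially match the paper's reasoning. Where you genuinely diverge is in how you establish that the zero set of $f$ in case (ii) is a union of smooth codimension-two submanifolds: the paper does not prove smoothness directly but appeals to the structure theorem for fixed-point sets of Killing vectors (Kobayashi, cited just before the lemma), which gives that the fixed-point set is a union of connected closed submanifolds of even codimension, and then only needs the $s$-independence of $f$ to conclude that each component is null. Your route via the implicit function theorem is more self-contained and pins down the codimension more directly.

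However, your key step has a gap. The identity $\xi=fk$ holds only \emph{on} $\chor$, and $f$ is defined only there, so differentiating it yields $\nabla_X\xi=X(f)\,k+f\,\nabla_Xk$ only for $X$ \emph{tangent} to $\chor$; it says nothing about $\nabla_L\xi\vert_p$ for a transversal direction $L$. Hence from $f(p)=0$ and $df\vert_p=0$ you may only conclude $\nabla_X\xi\vert_p=0$ for $X\in T_p\chor$, which is not yet the vanishing of the full $1$-jet needed to invoke the rigidity of Killing vectors. The gap is closed by using that $\nabla_\alpha\xi_\beta$ is antisymmetric for a Killing field: since $g(\nabla_X\xi,\cdot)\vert_p=0$ for all $X\in T_p\chor$, antisymmetry gives $g(\nabla_L\xi,X)=-g(\nabla_X\xi,L)=0$ for all $X\in T_p\chor$, and $g(\nabla_L\xi,L)=0$ by the Killing equation, so $\nabla\xi\vert_p=0$ after all and your contradiction with $\xi\not\equiv0$ goes through. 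With this repair your proof is correct; note, though, that it is specific to genuine Killing vectors, whereas for the Killing horizons of order zero introduced later the smoothness of $\mathcal{S}$ must be (and is) taken as part of the definition.
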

\begin{proof}
We shall make use of the fact that $F$ cannot vanish on open submanifolds of $\chor$. Let $\{s,u^I\}$ be null coordinates on $\chor$ adapted to $k$, i.e. $k = \cp_s$. From \eqref{finaletak}, we have $f(u^I)$ and the Killing vector vanishes at points where $\kappa_{\xi} s = - f$. When $\kappa_{\xi} \neq 0$ this implies (i) at once. When $\kappa_{\xi}=0$, $\e\vert_{\chor}=fk$ and either (a) $f$ vanishes no-where along $\chor$ and hence $\mathcal{S}=\emptyset$ or (b) there exist several smooth connected codimension-two subsets $\{\mathcal{F}_{\lp i \rp}\}\subset\chor$ ($i=1,2,...$) where $f$ vanishes, and hence $\mathcal{S}\equiv\bigcup_i\mathcal{F}_{\lp i \rp}$. The fact that each connected component $\mathcal{F}_{\lp i \rp}$ is a null submanifold is a consequence of $f$ depending only on the spatial coordinates $\{u^I\}$ and not on $s$.
\end{proof}
We next analyse how the geometric quantities introduced in Section \ref{secPreliminaries} are restricted by the Killing equations $\la\nabla _X \xi, Y \rag+ \la\nabla_Y \xi, X \rag =0$, $X,Y\in\Gamma\lp T\mathcal{M}\rp\vert_{\chor}$. Combining \eqref{eqforF}, \eqref{finaletak} with expressions in Lemma \ref{lemCovDerivatives} gives
\begin{align}
\label{derxikv_I}\nabla_k \xi=\ke k,\quad \nabla_{v_I} \xi=\Big( v_I\lp F\rp  -F\big( \bs{\sigma}_L\lp v_I\rp+\frac{1}{\nfi}\psi^B\bs{\chi}^k\lp v_I,v_B\rp\big) \Big) k+F\bs{\chi}^k\lp v_I,v^B\rp v_B .
\end{align}
Clearly $\{X=k,Y=k\}$ and $\{X=k, Y=v_I\}$ satisfy the Killing equations identically. When $\{X=v_I, Y=v_J\}$, using that $\bs{\chi}^k$ is symmetric yields $F\bs{\chi}^k\lp v_I,v_J\rp=0$. Consequently, $\bs{\chi}^k$ must vanish away from the zeroes of $F$ and by continuity on $\chor$, i.e. $\chor$ is totally geodesic. This simplifies equations \eqref{covderA}-\eqref{covderB}, which now read as
\begin{align}
\label{eqA27a}\nabla_{v_I}v_J &=\frac{1}{\nfi}\lp v_I\lp\psi_J\rp+\bs{\Theta}^L\lp v_I,v_J\rp-\chr_{JI}^K\psi_K\rp k+\chr_{JI}^Kv_K,\\
\label{eqA27b}\nabla_{v_I}k&=- \bs{\sigma}_L\lp v_I\rp k,
\end{align}
where $\chr_{JI}^K=\la v^K,\nabla_{v_I}v_J\rag$. As both $\nabla_{v_I}v_J$, $\nabla_{v_I}k$ are tangent to $\chor$, the connection $\nabla$ defines a map $\nabla:\Gamma\lp T\hor\rp\times\Gamma\lp T\hor\rp\longrightarrow\Gamma\lp T\hor\rp$. This property is well-known to hold for any totally geodesic embedded null hypersurface (see e.g. \cite{ashtekar2000generic}).

We can now introduce the definition of Killing horizon of zero order.
\begin{defn}\label{defKHZO} (Killing horizon of zero order, KH$_0$) Let $\lp \mathcal{M},g\rp$ be a spacetime, $\Omega\subset\mathcal{M}$ be a smooth connected null hypersurface without boundary and $\xi \in \Gamma(T\Omega)$ a null vector. Define $\mathcal{S}:=\{p\in\Omega\spc\vert\spc\xi\vert_p=0\}$. Then $\horz:=\Omega\setminus\mathcal{S}$ is a Killing horizon to order zero if:
\begin{itemize}
\item[(a)] $\mathcal{S}$ is the union of smooth connected closed submanifolds of codimension two.
\item[(b)] $\horz$ is totally geodesic.
\end{itemize}
The \textbf{surface gravity} $\ke$ of a KH$_{0}$ is defined on $\horz$ by means of \eqref{kappadef}.
\end{defn}
From now on we shall call \textit{fixed point set} the submanifold $\mathcal{S}$ and \textit{symmetry generator} the vector $\xi$ defining a KH$_0$. All the notation introduced above for Killing horizons is kept for KH${}_0$. Note that, in particular $\Omega = \chorz$.

As we did for Killing horizons, we again assume that $\ke$ is constant on $\horz$ (and extend it to $\chorz$ as the same constant) and either positive or zero, and call $\chorz$, $\horz$ non-degenerate and degenerate respectively in each case. In addition, we require that all assumptions of Section \ref{secPreliminaries} also apply to $\chorz$. Given an affine future null generator $k$ of $\chorz$ and a function $s\in\mathcal{F}(\chorz)$ satisfying $k(s)=1$ on $\chorz$, the condition that $\xi$ does not vanish on open submanifolds of $\chorz$ together with the constancy of the surface gravity $\ke$ imply \eqref{finaletak} because \eqref{eqforF} still holds. For the same reasons, the results from Lemma \ref{lemzerosets} are also valid for KH$_0$. By definition of KH$_0$, $\bs{\chi}^k=0$ on $\horz$. By condition \textit{(a)} in Definition \ref{defKHZO}, it follows by continuity that $\bs{\chi}^k=0$ on $\chorz$. Thus, \eqref{eqA27a}-\eqref{eqA27b} are also true for KH$_0$s.

Obviously any Killing horizon $\hor$ of a Killing vector $\xi$ such that $\chor$ is a smooth connected hypersurface is also a Killing horizon to order zero.

\section{Matching across KH$_0$: symmetry generators identified}\label{secmatchingKH}
Let us now address the matching problem. We shall consider two spacetimes $\lp \Mpm,g^{\pm}\rp$ with boundaries $\chorz^{\pm}$ being the closures of two KH$_0$ with respect to the symmetry generators $\e^{\pm}$ and satisfying all the assumptions of Sections \ref{secPreliminaries} and \ref{secKH}. Our aim is to study the matching of $\lp\Mpm,g^{\pm}\rp$ across $\chorz^{\pm}$ in the case when $\xi^{\pm}$ are identified up to a multiplicative constant. 

Let $\{ L^{\pm},k^{\pm},v^{\pm}_I\}$ be a basis of $\Gamma\lp T\Mpm\rp\vert_{\chorz^{\pm}}$ according to \eqref{basis} and $s^{\pm}$ be foliation defining functions constructed following Definition \ref{def1}. We follow the same convention of \cite{manzano2021null} and let the rigging vector fields be such that $\zeta^-$ points outwards and $\zeta^+$ inwards. As discussed in \cite{manzano2021null}, this restricts the possible matchings by forcing that $\chorz^-$ lies on the future of $\lp\Ml,g^-\rp$ while $\chorz^+$ lies on the past of $\lp\Mp,g^+\rp$. We encode the freedom of rescaling the symmetry generators $\xi^{\pm}$ with a non-zero real constant $a$, i.e. we let the matching identify $\xi^{-}$ and $a\xi^{+}$. Specifically the map $\bs{\Phi}:\chorz^-\longrightarrow\chorz^+$ satisfies $\bs{\Phi}_*\lp\xi^-\rp\vert_{\chorz^+}= a\xi^+\vert_{\chorz^+}$. We recall that the combination of the definitions of $\{e^{\pm}_a\}$ and the choices \eqref{evectors1} forces $\cp_{\lambda}H>0$. Equation \eqref{etak} together with \eqref{evectors1}-\eqref{evectors2} yields
\begin{align}
a\xi^+&\eqchzp aF^+k^+\eqchzp\dfrac{aF^+e_1^+}{\cp_{\lambda}H},\\
a\xi^+&\eqchzp \bs{\Phi}_*\lp\xi^-\rp\eqchzp\bs{\Phi}_*\lp F^-k^-\rp\eqchzp\bs{\Phi}_*\lp F^-e_1^-\rp\eqchzp F^-e_1^+.
\end{align}
Identifying the symmetry generators $\xi^{-}$, $a\xi^{+}$ hence forces
\begin{equation}
\label{idKillings1}
F^-\eqchzp\dfrac{aF^+}{\cp_{\lambda}H}.
\end{equation}
The matching procedure therefore requires the submanifolds $\mathcal{S}^{\pm}$ to be mapped to each other via $\bs{\Phi}$. In \cite{manzano2021null}, we proved that a general matching across totally geodesic null boundaries allowed for an infinite set of possible step functions. This was a consequence of all leaves $\{s^{\pm}=\text{const.}\}$ being isometric to each other. Because of condition \eqref{idKillings1}, this is no longer true here. Away from the zeroes of $F^{\pm}$, the integration of \eqref{idKillings1} determines $H$ up to an integration function. This solution, however, may be difficult to find in general. The problem becomes simpler under the assumptions of $k$ being affine (which implies no loss of generality) and $\ke^{\pm}$ being constant. Then, inserting \eqref{functionF} into \eqref{idKillings1} and using \eqref{expforHs} gives
\begin{equation}
\label{eqfordH}
f^-+\kappa^-_{\xi}\lambda=\dfrac{a\big( f^++\kappa^+_{\xi}H\big)}{\cp_{\lambda}H}\quad\Longrightarrow\quad \cp_{\lambda}H=\dfrac{a\big( f^++\kappa^{+}_{\xi}H\big)}{f^{-}+\kappa^{-}_{\xi}\lambda}>0,
\end{equation}
where here, and in the following, all equations are meant to hold on the abstract manifold $\Sigma$ (unless otherwise stated). Since $f^{\pm}$ are $\lambda$-independent, \eqref{eqfordH} can be easily integrated to obtain the explicit form of $H$. Note that the second expression in \eqref{eqfordH} only holds away from the points in $(\Phi^+)^{-1}(\mathcal{S}^+)$. The value of $H$ on those points is determined by continuity. The positivity of $\cp_{\lambda}H$ forces 
\begin{equation}
\label{sign}\text{sign}\lp a\rp\text{sign}\big( f^++\kappa^{+}_{\xi}H\big)=\text{sign}\big( f^{-}+\kappa^{-}_{\xi}\lambda\big),
\end{equation}
or in more geometric terms, that both symmetry generators $\{ \xi^-,a\xi^+\}$ must be simultaneously either future or past. This of course is consistent with the fact that $\{ \xi^-,a\xi^+\}$ are to be identified. We now study separately the matching for the cases (a) $\kappa_{\xi}^{\pm}=0$, (b) $\kappa_{\xi}^{\pm}\neq0$ and (c) $\kappa_{\xi}^{-}=0$, $\kappa^{+}_{\xi}\neq0$ or $\kappa_{\xi}^{-}\neq0$, $\kappa^{+}_{\xi}=0$. 

\subsection{Case of $\e^{\pm}$ degenerate}\label{secdeg}
When $\kappa_{\xi}^{\pm}$ vanish, we know by Lemma \ref{lemzerosets} that $\mathcal{S}$ is either empty or the union of smooth connected codimension-two null submanifolds of $\chorz$. The fact that the map $\bs{\Phi}$ is a diffeomorphism forces both boundaries to have the same number of such submanifolds. 

Equation \eqref{eqfordH} with $\kappa^{\pm}_{\xi}=0$ requires $\frac{af^+\lp y^A\rp}{f^-\lp y^A\rp}>0$ and yields the explicit form of the step function,
\begin{equation}
\label{caseAidKillings}
 H\lp\lambda,y^A\rp=\dfrac{af^+\lp y^A\rp}{f^-\lp y^A\rp}\lambda+\mathcal{H}\lp y^A\rp,
\end{equation}
where $\mathcal{H}\lp y^A\rp$ is an integration function. Once we select the tuple $\{a,\xi^-,\xi^+\}$, the only remaining matching freedom is encoded in the function $\mathcal{H}\lp y^A\rp$ (the scalar functions $f^{\pm}$ are known beforehand as the spacetimes to be matched are assumed to be known, cf. \eqref{finaletak}). In order to understand this freedom, let us call ``velocity" the rate of change of $s^{\pm}$ along a null generator of $\chorz^{\pm}$. The velocity along the null generators of $\chorz^{\pm}$ is totally determined (outside of $\mathcal{S}$) by the identification of $\{\xi^-,a\xi^+\}$. However, there still exist a freedom to select any pair of sections, one on each side, and force their identification via $\bs{\Phi}$. This is the freedom encoded in the arbitrary function $\mathcal{H}\lp y^A\rp$. Note that the step function \eqref{caseAidKillings} is linear in $\lambda$. This means in particular that the most general shell that can be generated under these circumstances has vanishing pressure (cf. \eqref{finaltau3}).

\subsection{Case of $\e^{\pm}$ non-degenerate}\label{secnondeg}
We now study the case when both $\horz^{\pm}$ are non-degenerate. By Lemma \ref{lemzerosets}, we know that $\mathcal{S}^{\pm}$ are either empty or sections defined by $\mathcal{S}^{\pm}:=\{p\in\chorz^{\pm}\spc\vert\spc f^{\pm}+\kappa_{\xi}^{\pm}s^{\pm}\vert_p=0\}$. We define the submanifolds $\mathcal{\hor}_{\text{p}}^{\pm}$, $\mathcal{\hor}_{\text{f}}^{\pm}$ by
\begin{align}
\hor_{\text{p}}^{\pm}:=\lb p\in\chorz^{\pm}\spc\vert\spc f^{\pm}+\kappa_{\xi}^{\pm}s^{\pm}\vert_p<0\rb,\qquad \hor_{\text{f}}^{\pm}:=\lb p\in\chorz^{\pm}\spc\vert\spc f^{\pm}+\kappa_{\xi}^{\pm}s^{\pm}\vert_p>0\rb,
\end{align}
so that $\horz^{\pm}\equiv\hor_{\text{p}}^{\pm}\cup\hor_{\text{f}}^{\pm}$ (hence $\chorz^{\pm}\equiv\hor_{\text{p}}^{\pm}\cup\mathcal{S}^{\pm}\cup\hor_{\text{f}}^{\pm}$). Since we are assuming nothing on the geodesic completeness of $\chorz^{\pm}$, we do not exclude the cases when any of $\hor_{\text{p}}^{\pm}$, $\hor_{\text{f}}^{\pm}$ and $\mathcal{S}^{\pm}$ are empty. Note that, when $\hor^{\pm}_{\text{p}}$, $\hor^{\pm}_{\text{f}}$ are non-empty they are by definition KH$_0$s.  
For later purposes, we also introduce the non-zero constant $\widehat{\kappa}:=a\kappa^+_{\xi}(\kappa^-_{\xi})^{-1}$. Note that $\text{sign}\lp\hat{\kappa}\rp=\text{sign}\lp a\rp$ because we have chosen the orientations of $\xi^{\pm}$ such that $\ke^{\pm}>0$.

Let us start by considering the case $\mathcal{S}^{\pm}\neq\emptyset$.  As already discussed, the fixed point sets must be identified in the matching process. Let us show that in such case, \eqref{eqfordH} forces $\hat{\kappa}$ to be equal to one. We apply the l'Hôpital rule to \eqref{eqfordH} and get
\begin{equation}
\label{limithatkappa}
\lim_{\lambda\rightarrow-\frac{f^-}{\ke^-}}\cp_{\lambda}H=\lim_{\lambda\rightarrow-\frac{f^-}{\ke^-}}\frac{a\ke^+\cp_{\lambda}H}{\ke^-}=\hat{\kappa}\lim_{\lambda\rightarrow-\frac{f^-}{\ke^-}}\cp_{\lambda}H\qquad\Longleftrightarrow\qquad \hat{\kappa}=1.
\end{equation}
Thus, $a=\ke^-(\ke^+)^{-1}>0$ and equation \eqref{eqfordH} becomes $\cp_{\lambda}H=\Big(\frac{f^+}{\ke^+}+H\Big)\Big(\frac{f^{-}}{\ke^-}+\lambda\Big)^{-1}>0$. Its integration yields
\begin{equation}
\ln\left\vert\frac{f^+\lp y^A\rp}{\ke^+}+H\lp\lambda,y^A\rp\right\vert=\ln\lp\alpha\lp y^A\rp\left\vert\frac{f^{-}\lp y^A\rp}{\ke^-}+\lambda\right\vert\rp,
\end{equation}
where $\alpha\lp y^A\rp$ is a positive integration function. Equation \eqref{sign} allows us to conclude that
\begin{equation}
\label{eqforHddef}
H\lp\lambda,y^A\rp=\alpha\lp y^A\rp\lp \lambda+\dfrac{f^-\lp y^A\rp}{\ke^-}\rp-\dfrac{f^+\lp y^A\rp}{\ke^+},\qquad \alpha\lp y^A\rp>0.
\end{equation}
In combination with the results in Section \ref{secdeg} we conclude that whenever $\chorz^{\pm}$ are degenerate or contain non-empty fixed point sets any matching of $\lp\Mpm,g^{\pm}\rp$ across $\chorz^{\pm}$ in which the symmetry generators  $\{\xi^-\vert_{\chor^-},a\xi^+\vert_{\chor^+}\}$ are identified requires the surface gravities $\{\ke^-,a\ke^+\}$ to coincide. Moreover, the step function must be linear in the coordinate $\lambda$, which excludes matchings giving rise to shells with non-vanishing pressure.

It is also physically interesting to study the matchings when no null generator of $\chorz^{\pm}$ crosses any fixed point set,  i.e. when $\mathcal{S}^{\pm}$ are both empty. Integrating \eqref{eqfordH} now leads to 
\begin{equation}
\label{absolutecondition}\big\vert  f^+\lp y^A\rp+\ke^+H\lp \lambda,y^A\rp\big\vert= \alpha\lp y^A\rp\big\vert f^-\lp y^A\rp+\ke^-\lambda\big\vert^{\hat{\kappa}},
\end{equation}
where $\alpha\lp y^A\rp$ is a non-zero positive integration function. We analyse the cases $a>0$ (i.e. $\hat{\kappa}>0$) and $a<0$ (i.e. $\hat{\kappa}<0$) separately. For the former, condition \eqref{sign} gives $\text{sign}( f^++\ke^+H)=\text{sign}( f^-+\ke^-\lambda)$, which only allows for the matchings (see (a), (b) in Figure \ref{fig1})
\begin{itemize}
\item[(a)]$\lb\begin{array}{l}
\hspace{-0.1cm}\chorz^-=\hor^-_{\text{f}}\\
\hspace{-0.1cm}\chorz^+=\hor^+_{\text{f}}
\end{array}\rd\hspace{-0.2cm},\spc \bs{\Phi}\lp\hor^-_{\text{f}}\rp=\hor^+_{\text{f}} \spc \spc \text{with}\spc\spc H\lp\lambda,y^A\rp=\dfrac{\alpha\lp y^A\rp}{ \ke^+}\big\vert f^-\lp y^A\rp+\ke^-\lambda\big\vert^{\hat{\kappa}}- \dfrac{f^+\lp y^A\rp}{ \ke^+}$,
\item[(b)]$\lb\begin{array}{l}
\hspace{-0.1cm}\chorz^-=\hor^-_{\text{p}}\\
\hspace{-0.1cm}\chorz^+=\hor^+_{\text{p}}
\end{array}\rd\hspace{-0.2cm},\spc \bs{\Phi}\lp\hor^-_{\text{p}}\rp=\hor^+_{\text{p}} \spc \spc \text{with}\spc\spc H\lp\lambda,y^A\rp=-\dfrac{\alpha\lp y^A\rp}{ \ke^+}\big\vert f^-\lp y^A\rp+\ke^-\lambda\big\vert^{\hat{\kappa}}- \dfrac{f^+\lp y^A\rp}{ \ke^+}$,
\end{itemize}
On the other hand, $a<0$ together with \eqref{sign} entail $\text{sign}( f^++\ke^+H)=-\text{sign}( f^-+\ke^-\lambda)$, whence (see (c), (d) in Figure \ref{fig1})
\begin{itemize}
\item[(c)]$\lb\begin{array}{l}
\hspace{-0.1cm}\chorz^-=\hor^-_{\text{f}}\\
\hspace{-0.1cm}\chorz^+=\hor^+_{\text{p}}
\end{array}\rd\hspace{-0.2cm},\spc \bs{\Phi}\lp\hor^-_{\text{f}}\rp=\hor^+_{\text{p}} \spc \spc \text{with}\spc\spc H\lp\lambda,y^A\rp=-\dfrac{\alpha\lp y^A\rp}{ \ke^+}\big\vert f^-\lp y^A\rp+\ke^-\lambda\big\vert^{\hat{\kappa}}- \dfrac{f^+\lp y^A\rp}{ \ke^+}$,
\item[(d)]$\lb\begin{array}{l}
\hspace{-0.1cm}\chorz^-=\hor^-_{\text{p}}\\
\hspace{-0.1cm}\chorz^+=\hor^+_{\text{f}}
\end{array}\rd\hspace{-0.2cm},\spc \bs{\Phi}\lp\hor^-_{\text{p}}\rp=\hor^+_{\text{f}} \spc \spc \text{with}\spc\spc H\lp\lambda,y^A\rp=\dfrac{\alpha\lp y^A\rp}{ \ke^+}\big\vert f^-\lp y^A\rp+\ke^-\lambda\big\vert^{\hat{\kappa}}- \dfrac{f^+\lp y^A\rp}{ \ke^+}$.
\end{itemize}
The function $H$ can be written in a form that covers all cases at once by defining 
\begin{equation}
\epsilon:=\text{sign}\lp\hat{\kappa}\rp\text{sign}(f^-+\ke^-\lambda)
\end{equation}
 and writing
\begin{align}
\label{eqforHd} H\lp\lambda,y^A\rp=\dfrac{1}{\ke^+}\lp \epsilon\alpha\lp y^A\rp\left\vert f^-\lp y^A\rp+\ke^-\lambda\right\vert^{\hat{\kappa}}- f^+\lp y^A\rp\rp,\qquad \alpha\lp y^A\rp>0.
\end{align}
However, it is important to emphasize that the expression of $H$ is only part of the matching,  as the signs also restrict the possible boundaries to be identified as indicated above and in Figure 1. Let us also stress that the matchings (a)-(d) allow for shells with pressure, as the derivatives of \eqref{eqforHd} are given by
\begin{equation}
\lb\begin{array}{l}
\cp_{\lambda}H=\vert a\vert \alpha\lp y^A\rp\vert f^-\lp y^A\rp+\ke^-\lambda\vert^{\hat{\kappa}-1}>0,\\[\medskipamount]
\cp_{\lambda}\cp_{\lambda}H=\dfrac{\lp\hat{\kappa}-1\rp\cp_{\lambda}H}{\frac{f^-\lp y^A\rp}{\ke^-}+\lambda}
\end{array}\rd\quad\Longrightarrow\quad p\lp\lambda,y^A\rp\stackbin{\eqref{finaltau3}}{=}-\dfrac{\hat{\kappa}-1}{\nfi^-\Big( \frac{f^-\lp y^A\rp}{\ke^-}+\lambda\Big)}.
\end{equation}
As discussed in Section 6 in \cite{manzano2021null}, the pressure accounts for the compression/stretching of points when crossing the matching hypersurface. 
This means, in particular, that this effect takes place whenever $\hat{\kappa}\neq1$.

The function $\alpha (y^A)$ introduces a freedom in the matching that we analyse next. Its role is easy to understand when $\mathcal{S}^{\pm}=\emptyset$. Like in the case with vanishing surface gravity of Section \ref{secdeg}, it corresponds to the freedom of selecting a section on each side and impose their identification via $\bs{\Phi}$. The interpretation of this freedom is less obvious when $\mathcal{S}^{\pm}\neq\emptyset$, because these two sections are forced to be mapped to each other. In order to understand this, we again call ``velocity" the rate of change of $s^{\pm}$ along a null generator of $\chorz^{\pm}$. Both when $\chorz^{\pm}$ are degenerate and when $\chorz^{\pm}$ are non-degenerate with $\mathcal{S}^{\pm}=\emptyset$, identifying two sections not only provides a mapping between their points, but also of the velocity along the null generators of $\chor\upm$ at the sections. This information is encoded in the symmetry generators to be identified. However, for non-degenerate $\chorz^{\pm}$ containing fixed point sets $\mathcal{S}^{\pm}\neq\emptyset$, the mapping between the subsets $\mathcal{S}\upm$ only provides information on the identification of their points. The velocity along the null generators remains unfixed, as both symmetry generators vanish on $\mathcal{S}^{\pm}$. The freedom in the function $\alpha$ corresponds precisely to the freedom of selecting the initial velocities at $\mathcal{S}^+$ that rule the identifications off the fixed points set. Once we are off $\mathcal{S}^{\pm}$, the velocity of the identification  is determined by the identification of the symmetry generators themselves, just as in the previous cases. 

\begin{figure}
\centering
\psfrag{M}{$\mathcal{M}^+$}
\psfrag{m}{$\mathcal{M}^-$}
\psfrag{pm}{$\hor_{\text{p}}^-$}
\psfrag{fm}{$\hor_{\text{f}}^-$}
\psfrag{pp}{$\hor_{\text{p}}^+$}
\psfrag{fp}{$\hor_{\text{f}}^+$}
\psfrag{a}{(a)}
\psfrag{b}{(b)}
\psfrag{c}{(c)}
\psfrag{d}{(d)}
\includegraphics[width=16cm]{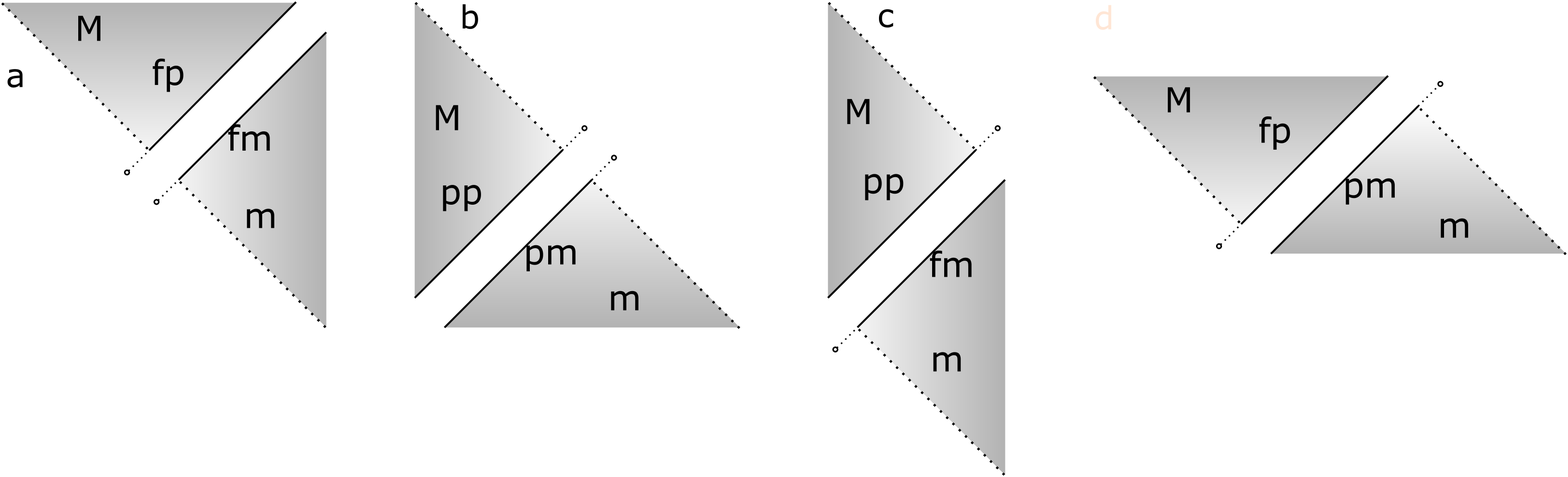} 
\caption{Possible matchings of spacetimes $\lp\Mpm,g^{\pm}\rp$ across their respective boundaries $\chorz^{\pm}$ in the case when they are non-degenerate KH$_0$s without a fixed points set. Here boundaries directly in front of each other are to be identified and the dot represents the point at which the fixed points set would be located if the horizons $\chorz^{\pm}$ extended further. }
\label{fig1}
\end{figure}

\subsection{Case of $\e^-$ degenerate, $\e^+$ non-degenerate}\label{secdegnondeg}
Now we address the case when one boundary is degenerate and the other is not. 
The two possibilities $\ke^-=0$, $\ke^+\neq0$ and $\ke^-\neq0$, $\ke^+=0$ are completely analogous except for the fact that the boundary $\chorz^-$ lies on the future of the spacetime $\lp\Ml,g^-\rp$ while $\chorz^+$ lies on the past of $\lp\Mp,g^+\rp$. 
If the symmetry generator vanishes on the non-degenerate boundary the matching is impossible. This is a direct consequence of Lemma \ref{lemzerosets} because the fixed points set in the degenerate boundary can never be a spacelike cross section. 

We therefore analyze the case when the non-degenerate symmetry generator is everywhere non-vanishing. By Lemma \ref{lemzerosets} again, the symmetry generator of the degenerate boundary must also be free of fixed points, i.e. we have $\mathcal{S}^{\pm} = \emptyset$. Without loss of generality, we take the degenerate symmetry generator to be future. Let first $\horz^{-}$ be the degenerate boundary and note that the choice or causal character of $\xi^-$ requires $f^->0$. Then, \eqref{eqfordH} forces $a( f^++\ke^+H)>0$ and can be integrated to get
\begin{equation}
\label{stepmix1}H\lp\lambda,y^A\rp=\dfrac{1}{a\ke^+}\lp \alpha\lp y^A\rp \exp\lp \frac{a\ke^+\lambda}{f^-\lp y^A\rp}\rp-af^+\lp y^A\rp\rp,\qquad\alpha\lp y^A\rp>0.
\end{equation}
The alternative case when $\horz^{+}$ is the degenerate boundary is analogous. Now $f^+ >0$, $\text{sign}(a)=\text{sign}(f^-+\ke^-\lambda)$ and the integral of \eqref{eqfordH} is
\begin{equation}
\label{stepmix2}H\lp\lambda,y^A\rp=\dfrac{af^+\lp y^A\rp}{\ke^-}\ln\lp\alpha\lp y^A\rp\vert f^-\lp y^A\rp+\ke^-\lambda\vert\rp,\qquad\alpha\lp y^A\rp>0,
\end{equation}
for whatever sign of $a$.  Summarizing, when $a>0$ (resp. $a<0$), a degenerate horizon $\chorz^-$ can be matched with a non-degenerate horizon $\chorz^+\equiv\hor^+_{\text{f}}$ (resp. $\chorz^+\equiv\hor^+_{\text{p}}$) with step function given by \eqref{stepmix1}; and a non-degenerate KH$_0$ $\chorz^-\equiv\hor^-_{\text{f}}$ (resp. $\chorz^-\equiv\hor^-_{\text{p}}$) can be matched with a degenerate horizon $\chorz^+$ with step function \eqref{stepmix2} and $a>0$ (resp. $a<0$). It is worth stressing that the step functions \eqref{stepmix1}-\eqref{stepmix2} are not linear, so the shell has non-zero pressure. Matchings of this type are allowed irrespectively of the extension of the degenerate horizon (which can even be geodesically complete) while the non-degenerate horizon is always limited by the fact that the would-be fixed point set must be absent.  As before, from a physical point of view it is the presence of pressure, and its associated compression/stretching that makes a matching of this type possible. 

We collect the results from Sections \ref{secdeg}, \ref{secnondeg} and \ref{secdegnondeg} in the following theorem. 
\begin{thm}\label{theorem}
In the setup of Sections \ref{secPreliminaries} and \ref{secKH}, let $a\in\mathbb{R}\setminus\{0\}$ and consider the matching of two spacetimes $\lp\Mpm,g\upm\rp$ with boundaries $\chorz^{\pm}$ in which two null symmetry generators $\{\xi^-,a\xi^+\}$ with surface gravities $\{\ke^-,a\ke^+\}$ are to be identified. Then, the fixed points sets $\mathcal{S}^{\pm}$ must be identified via $\bs{\Phi}$ and
\begin{itemize}
\item[(i)] if $\chorz^{\pm}$ are degenerate, the matching is possible with step function \eqref{caseAidKillings};
\item[(ii)] if $\chorz^{\pm}$ are non-degenerate and $\mathcal{S}^{\pm}\neq\emptyset$, the matching requires the surface gravities $\{\ke^-,a\ke^+\}$ to coincide and the step function is given by \eqref{eqforHddef};
\item[(iii)] if $\chorz^{\pm}$ are non-degenerate and $\mathcal{S}^{\pm}=\emptyset$, the only possible matchings are (a)-(d) in Section \ref{secnondeg} with step function \eqref{eqforHd}; \item[(iv)] if $\chorz^-$ (resp. $\chorz^+$) is degenerate with future $\xi^-\vert_{\chor^-}\neq0$ (resp. $\xi^+\vert_{\chor^+}\neq0$) and $\chorz^+$ (resp. $\chorz^-$) is non-degenerate and with $\mathcal{S}^+=\emptyset$ (resp. $\mathcal{S}^-=\emptyset$), the matching can be performed with step function \eqref{stepmix1} (resp. \eqref{stepmix2});
\item[(v)] the matching between a degenerate and a non-degenerate boundaries is impossible when any of them contains fixed points.
\end{itemize}
Moreover, in cases (i) and (ii), the resulting null shell has vanishing pressure. Finally, the matching allows for the freedom of selecting a section on each side and imposing their identification via $\bs{\Phi}$ in (i), (iii) and (iv); and the freedom of setting the initial velocities at $\mathcal{S}^{\pm}$ in (ii).
\end{thm}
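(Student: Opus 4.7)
The proof is essentially a synthesis of the case analyses already carried out in the three subsections preceding the statement, so the plan is to assemble and justify each item rather than compute anything new. The starting point is the fundamental identification condition \eqref{idKillings1}, which in turn yields the master ODE \eqref{eqfordH} after inserting $F^{\pm}=f^{\pm}+\kappa^{\pm}_{\xi}s^{\pm}$ and using \eqref{expforHs}. From $F^{-}\cp_{\lambda}H=aF^{+}$ together with $\cp_{\lambda}H>0$, the sets $\{F^{-}=0\}$ and $\{F^{+}=0\}$ must correspond to one another under $\bs{\Phi}$; this establishes that $\bs{\Phi}(\mathcal{S}^{-})=\mathcal{S}^{+}$, which is the first statement of the theorem.

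The next step is to dispatch items (i)--(iv) by direct integration of \eqref{eqfordH}. For (i) one simply sets $\kappa_{\xi}^{\pm}=0$ so that $\cp_{\lambda}H=af^{+}/f^{-}$ is $\lambda$-independent, yielding \eqref{caseAidKillings}. For (ii), the key observation is the l'H\^opital argument \eqref{limithatkappa}: as $\lambda\to -f^{-}/\kappa_{\xi}^{-}$ (a point in $(\Phi^{-})^{-1}(\mathcal{S}^{-})$), both numerator and denominator of \eqref{eqfordH} vanish, so applying l'H\^opital forces $\widehat{\kappa}=1$; the ODE then integrates to \eqref{eqforHddef} after fixing the sign via \eqref{sign}. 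For (iii) the integration produces the absolute-value relation \eqml{absolutecondition}, and the four combinations (a)--(d) are read off from \eqref{sign} by looking at the signs of $\widehat{\kappa}$ and of $f^{\pm}+\kappa^{\pm}_{\xi}\,(\cdot)$; the sign $\epsilon$ packages all four subcases into the single formula \eqref{eqforHd}. Case (iv) is handled by splitting according to which side is degenerate: if $\kappa_{\xi}^{-}=0$ one has $\cp_{\lambda}H\,f^{-}=a(f^{+}+\kappa_{\xi}^{+}H)$, a linear first-order ODE in $\lambda$ with constant coefficients yielding the exponential \eqref{stepmix1}; the symmetric case $\kappa_{\xi}^{+}=0$ gives a logarithm, \eqref{stepmix2}. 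In both sub-subcases, the orientation convention for $\xi$ (chosen future on the degenerate side) together with \eqref{sign} fixes the allowed signs of $a$ and the corresponding $\pm$ component.

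Item (v) is the only part that is a genuine obstruction rather than a construction, and it is the step I would flag as the most conceptually delicate. The argument is that Lemma \ref{lemzerosets} forces $\mathcal{S}$ to be a union of codimension--two null submanifolds in the degenerate case and a spacelike cross-section (given by $s=-f/\kappa_{\xi}$) in the non-degenerate case. Since $\bs{\Phi}$ must identify $\mathcal{S}^{-}$ and $\mathcal{S}^{+}$ and $\bs{\Phi}$ is a diffeomorphism of null hypersurfaces sending null generators to null generators, a null submanifold cannot be mapped to a spacelike one; this rules out the mixed case whenever fixed points exist on either side.

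The remaining assertions follow immediately from the explicit forms obtained. Vanishing pressure in (i) and (ii) follows from \eqref{finaltau3}: in both situations $H$ is affine in $\lambda$, so $\cp_{\lambda}\cp_{\lambda}H=0$ and $\tau^{IJ}=0$. The interpretation of the remaining freedom is read off from the integration functions: in (i), (iii), (iv) the function $\mathcal{H}(y^{A})$ or $\alpha(y^{A})>0$ encodes precisely the arbitrary choice of one section on each side to be identified under $\bs{\Phi}$, since evaluating at a reference $\lambda$ determines the image section point-by-point. In case (ii) the sections $\mathcal{S}^{\pm}$ are already forced to correspond, so $\alpha(y^{A})$ no longer tunes the choice of sections; instead, reading \eqref{eqforHddef} as $\cp_{\lambda}H|_{\mathcal{S}^{-}}=\alpha(y^{A})$ and recalling the ``velocity'' interpretation of $\cp_{\lambda}H$ introduced after \eqref{caseAidKillings}, $\alpha$ is identified with the initial rate of change of $s^{+}$ with respect to $s^{-}$ along null generators emerging from $\mathcal{S}^{+}$, completing the theorem.
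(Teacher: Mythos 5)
Your proposal is correct and follows essentially the same route as the paper: the theorem is explicitly a compilation of the case analyses in Sections \ref{secdeg}, \ref{secnondeg} and \ref{secdegnondeg}, and your synthesis --- deriving the master ODE \eqref{eqfordH} from \eqref{idKillings1}, concluding $\bs{\Phi}(\mathcal{S}^-)=\mathcal{S}^+$ from the positivity of $\cp_{\lambda}H$, the l'H\^opital argument forcing $\widehat{\kappa}=1$ in (ii), the sign bookkeeping behind cases (a)--(d) in (iii), the exponential and logarithmic integrals in (iv), and the causal-character obstruction via Lemma \ref{lemzerosets} for (v) --- matches the paper's arguments step by step. The concluding observations (vanishing pressure from $\cp_{\lambda}\cp_{\lambda}H=0$ in \eqref{finaltau3}, and the interpretation of $\mathcal{H}$ and $\alpha$ as the freedom to identify sections or to fix initial velocities at $\mathcal{S}^{\pm}$) likewise coincide with the paper's.
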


\section{Killing Horizons with bifurcation surfaces}\label{secRestrictionY}
From a  physical point of view, perhaps one of the most interesting situations correspond to non-degenerate Killing horizon boundaries with a bifurcation surface. As mentioned in the introduction, this covers all black hole spacetimes with non-zero temperature and whose boundaries are geodesically complete Killing horizons. It therefore makes sense to analyze this case in more detail.  

We already know that a Killing horizon satisfying the assumptions of Sections \ref{secPreliminaries} and \ref{secKH} is, by definition, a KH$_0$. Thus, the matching across non-degenerate Killing horizons $\chor^{\pm}$ containing bifurcation surfaces $\mathcal{S}^{\pm}$ falls into item (ii) in Theorem \ref{theorem}. However, since now we have much stronger conditions, namely the existence of a Killing vector on each side, we can restrict the matching far more. We postpone for a future paper the corresponding analysis for the remaining cases.

Our first task is to find explicit expressions for the tensor fields $Y^{\pm}$ which, as we shall see, turn out to be severely restricted. To ease the notation we drop from now on the $\pm$ in the expressions until the actual matching is performed. 

Consider a non-degenerate boundary $\chor$ with a bifurcation surface $\mathcal{S}$. For this case, a natural choice of coordinates are the so-called R\'acz-Wald coordinates $\{u,v,x^A\}$ \cite{racz1992extensions}. These coordinates can be constructed so that $\chor=\{ u=0 \}$, $\mathcal{S} = \{u=0,v=0\}$ and the non-degenerate Killing vector $\xi$ and the spacetime metric $g$ are given by
\begin{align}
\label{killingRW}\xi&=\kappa_{\xi}(-u\cp_u+v\cp_v),\\
\label{metricRW}g&=-2G(\omega,x^C) dv\lp du+uw_A( \omega,x^C) dx^A\rp+\ovl{\gamma}_{AB}( \omega,x^C) dx^Adx^B,
\end{align}
where $\kappa_{\xi}\in\mathbb{R}$ is the surface gravity of $\xi\vert_{\chor}$, $\omega:=uv$ and $G,w_A,\ovl{\gamma}_{AB}\in\mathcal{F}\lp\mathcal{M}\rp$.
Moreover, these coordinates have a residual freedom that allows one to set $ G\vert_{\chor} = \text{const}$, which we enforce from now on. The non-zero components of the inverse metric are $g^{uv}=-\frac{1}{G}$, $g^{uu}=u^2 \ovl{\gamma}^{AB}w_Aw_B$ and $g^{uA}=-u\ovl{\gamma}^{AB}w_B$, where $\ovl{\gamma}^{AB}$ is the inverse of $\ovl{\gamma}_{AB}$.

As affine null generator of $\chor$, we select\footnote{One immediately checks that $k$ is an affine null generator, as $k^{\alpha}\nabla_{\alpha}k^{\beta}\vert_{\chor}=\Gamma_{vv}^{\beta}\vert_{\chor}=g^{\beta u}\cp_{v}g_{u v}\vert_{\chor}=\delta^{\beta}_v \frac{1}{G}\cp_{v}G\vert_{\chor}=\delta^{\beta}_v \frac{u}{G}\cp_{\omega}G\vert_{\chor}=0$.} $k=\cp_{v}$. The natural choice of scalar function defining a foliation $\{ S_s\}$ of $\chor$ is $s=v$ (recall Definition \ref{def1}). Under these conditions, we construct a basis according to \eqref{basis} by taking $\{ L=\frac{1}{G}\cp_{u},k=\cp_{v},v_I=\cp_{x^I}\}$. The induced metric on the sections $\{ v=\text{const.}\}$ is $h_{IJ}:=\ovl{\gamma}_{IJ}\vert_{\omega=0}$ and we let $\accentset{\circ}{\nabla}$, $\accentset{\circ}{\bs{R}}$ be respectively the corresponding Levi-Civita covariant derivative and Ricci tensor. 

By item \textit{(ii)} in Theorem \ref{theorem}, we know that the surface gravities of the Killing vectors to be identified by the matching must coincide. Moreover, we are allowed to choose a priori these vectors so that they have the same surface gravity on both sides. This clearly entails, in particular, that $a=1$. Observe that the combination of \eqref{killingRW} and \eqref{finaletak} implies $f=0$ which, together with \eqref{eqforHddef}, yields $H(\lambda,y^A)=\alpha(y^A)\lambda$ in this case. 

To determine $Y^{\pm}$ from equations \eqref{Ymenos}-\eqref{YIJ}, it remains to obtain the tensors $\bs{\sigma}_{L}$ and $\bs{\Theta}^{L}$.  
The quantity $\bs{\sigma}_L\lp v_I\rp$ is computed very easily. One gets
\begin{align}
\label{fbfs1}\bs{\sigma}_L\lp v_I\rp\eqch&\spc \frac{1}{G}g\big( \nabla_{\cp_{ x^I }}\cp_v,\cp_u\big)\eqch\frac{1}{G}\Gamma^{\mu}_{v I }g_{\mu u}\eqch -\Gamma^{v}_{v I }\eqch \dfrac{w_I}{2}.
\end{align}
Since $L$ is orthogonal to the leaves $\{v=\text{const.}\}$, $\bs{\sigma}_L$ is the torsion one-form of these sections. Expression \eqref{fbfs1} therefore gives the metric coefficients $w_I\vert_{\chor}$ a geometric meaning.

The computation of $\bs{\Theta}^L\lp v_I,v_J\rp$ is more cumbersome because there appears a term involving derivatives of $\overline{\gamma}$ off $\chor$. To handle this term we use an identity derived in Appendix B. More specifically, we shall use expression \eqref{Riccinondeg2}, which relates $\partial_{\omega} \overline{\gamma}_{IJ}$ with geometric objects on the boundary and with the spatial tangent components of the ambient Ricci tensor, namely $R_{IJ} := \textbf{Ric}(v_A,v_B)$. We compute
\begin{align}
 \nonumber\bs{\Theta}^L\lp v_I,v_J\rp\eqch&\spc \frac{1}{G}g\big( \nabla_{\cp_{x^I }}\cp_u,\cp_{ x^J }\big)\eqch\frac{1}{G}\Gamma^{\mu}_{u I }g_{\mu J }\eqch\frac{1}{G}\Gamma^{A}_{u I }\ovl{\gamma}_{A J }\eqch\dfrac{v}{2G}\cp_{\omega}\ovl{\gamma}_{IJ}\\
 \label{fbfs2}\stackbin[\eqref{Riccinondeg2}]{\chor}{=}&\spc\dfrac{v}{2}\lp R_{IJ}-\accentset{\circ}{R}_{IJ}-\dfrac{1}{2}\lp \accentset{\circ}{\nabla}_Iw_J+\accentset{\circ}{\nabla}_Jw_I\rp+\dfrac{1}{2}w_Iw_J\rp.
\end{align}
Observe that in the present case $\bs{\Theta}^L$ is the second fundamental form of the sections $\{v=\text{const}.\}$ in the direction $L$, again due to the fact that $L$ is perpendicular to the foliation. 

As also proven in Appendix \ref{appedixA}, each term inside the parenthesis of \eqref{fbfs2} is independent of the coordinate $v$. In particular, this applies to $\accentset{\circ}{R}_{AB}^{\pm}$ which, in addition, is forced to verify that 
\begin{equation}
\label{RicciCondpreliminary}\accentset{\circ}{R}^-_{AB}\vert_{p}=b_A^Ib_B^J\accentset{\circ}{R}_{IJ}^+\vert_{\bs{\Phi}(p)},\qquad\forall p\in\mathcal{S}^-,\quad \bs{\Phi}(p)\in\mathcal{S}^+
\end{equation}
because the bifurcation surfaces $\mathcal{S}^{\pm}$ must be isometric and mapped to each other via $\bs{\Phi}$. The scalars $b_I^J$ on $\chor^+$ do not depend on $v$ either, because from \eqref{evectors2}-\eqref{b_I^J} 
\begin{equation}
k^+(b_I^J)=\frac{e^+_1(b_I^J)}{\cp_{\lambda}H}=\frac{\cp_{\lambda}b_I^J}{\cp_{\lambda}H}=0.
\end{equation}
Consequently, $\accentset{\circ}{R}^-_{AB}$ and $b_A^Ib_B^J\accentset{\circ}{R}^-_{IJ}$ take the same value for all points of the null generators containing $p\in\mathcal{S}^-$ and $\bs{\Phi}(p)\in\mathcal{S}^+$ respectively. The fact that null generators must be identified by the matching entails that condition \eqref{RicciCond} holds everywhere, i.e. 
\begin{equation}
\label{RicciCond}\accentset{\circ}{R}^-_{AB}\vert_{p}=b_A^Ib_B^J\accentset{\circ}{R}_{IJ}^+\vert_{\bs{\Phi}(p)},\qquad\forall p\in\chor^-,\quad \bs{\Phi}(p)\in\chor^+.
\end{equation}
From now on we remove the explicit writing of $p$ and $\bs{\Phi}(p)$ in this expression and similar ones. The trivial identification between $\chor^-$ and $\Sigma$ ensures that the pullback $(\Phi^-)^*(\accentset{\circ}{\bs{R}}{}^-)$ coincides with the Ricci tensor $\bs{R}^{\parallel}$ on the sections $\{\lambda=\text{const.}\}\subset\Sigma$. Consequently, it must hold that $\bs{R}^{\parallel}=(\Phi^{\pm})^*(\accentset{\circ}{\bs{R}}{}^{\pm})$ (cf. \eqref{tensorremark1}-\eqref{tensorremark2}). 

In \eqref{fbfs2} there appear covariant derivatives of the one-forms $\bs{w}^{\pm}:=w^{\pm}_Idx^I_{\pm}$ on $\chor^{\pm}$. We need to compute their pullbacks onto $\Sigma$ and relate the result to $\nabla^{\parallel}$ derivatives of the corresponding pullback one-forms $\nwpm :=(\Phi^{\pm})^*(\bs{w}^{\pm})$. In Appendix \ref{apppullbacks} we derive a general identity of this type for totally geodesic null boundaries, valid for general covariant tensors that annihilate the
null generators on each of its entries. Applying Lemma \ref{lempullback} in Appendix \ref{apppullbacks} together with $k^{\pm}(w_I^{\pm}) =0$ it follows
\begin{equation}
w_I^-=\Wm_I,\qquad b_I^Jw^+_J=\Wp_I,\qquad \accentset{\circ}{\nabla}^-_Iw^-_J=\nabla^{\parallel}_I\Wm_J,\qquad b_I^Ab_J^B\accentset{\circ}{\nabla}^+_Aw^+_B=\nabla^{\parallel}_I\Wp_J.
\end{equation}
Inserting this into \eqref{fbfs1} and \eqref{fbfs2} we obtain that the pullback tensors $\nsigmapm$ and $\nThetapm$ become
\begin{align}
\label{sigmapmND}\sigmapm_I&=\dfrac{\Wpm_I}{2},\\
\Thetam_{IJ}&=\dfrac{\lambda}{2}\lp \Rm_{IJ}-{R}^{\parallel}_{IJ}-\dfrac{1}{2}\lp \nabla^{\parallel}_I\Wm_J+\nabla^{\parallel}_J\Wm_I\rp+\dfrac{1}{2}\Wm_I\Wm_J\rp,\\
\label{ThetapmND}\Thetap_{IJ}&=\dfrac{\alpha\lambda}{2}\lp \Rp_{IJ}-{R}^{\parallel}_{IJ}-\dfrac{1}{2}\lp \nabla^{\parallel}_I\Wp_J+\nabla^{\parallel}_J\Wp_I\rp+\dfrac{1}{2}\Wp_I\Wp_J\rp.
\end{align}
It is useful to introduce the following tensors $\nvsigmapm$ and $\nXipm$ on the leaves $\{\lambda=\text{const.}\}\subset\Sigma$
\begin{equation}
\label{sigmaandXi}\begin{array}{l}
\vsigmam_I:=\Wm_I,\\
\vsigmap_I:=\Wp_I-2\dfrac{\nabla_{I}^{\parallel}\alpha}{\alpha},
\end{array}\qquad \Xipm_{IJ}:=\dfrac{1}{2}\lp \Rpm_{IJ}-{R}^{\parallel}_{IJ}-\dfrac{1}{2}\lp \nabla^{\parallel}_I\vsigmapm_J+\nabla^{\parallel}_J\vsigmapm_I\rp+\dfrac{1}{2}\vsigmapm_I\vsigmapm_J\rp.
\end{equation}
To understand why these tensors are of relevance, let us relate
$\nXip$ and $\nThetap$. For that we compute $\nabla^{\parallel}_I\vsigmap_J+\nabla^{\parallel}_J\vsigmap_I$ and $\vsigmap_I\vsigmap_J$ to get
\begin{align}
\nabla^{\parallel}_I\vsigmap_J+\nabla^{\parallel}_J\vsigmap_I&=\nabla^{\parallel}_I\Wp_J+\nabla^{\parallel}_J\Wp_I-4\dfrac{\nabla^{\parallel}_I\nabla^{\parallel}_J\alpha}{\alpha}+4\dfrac{(\nabla^{\parallel}_I\alpha)(\nabla^{\parallel}_J\alpha)}{\alpha^2},\\
\vsigmap_I\vsigmap_J&=\Wp_I\Wp_J-\dfrac{2}{\alpha}\lp \Wp_I\nabla^{\parallel}_J\alpha+\Wp_J\nabla^{\parallel}_I\alpha\rp+4\dfrac{(\nabla^{\parallel}_I\alpha)(\nabla^{\parallel}_J\alpha)}{\alpha^2}.
\end{align}
Comparing with \eqref{ThetapmND} we conclude that
\begin{align}
\label{Xiplus}\Xip_{IJ}\lambda=&\spc\lp\dfrac{\nabla^{\parallel}_I\nabla^{\parallel}_J\alpha}{\alpha}-\dfrac{1}{2\alpha}\lp \Wp_I\nabla^{\parallel}_J\alpha+\Wp_J\nabla^{\parallel}_I\alpha\rp\rp\lambda+\dfrac{\Thetap_{IJ}}{\alpha}.
\end{align}
This relation simplifies drastically the final expression for the tensors $Y^{+}$ and $\tau$ as given in Proposition \ref{propenergymomtensor}. Particularizing \eqref{Ymenos}-\eqref{finaltau3} to $\nfi=1$, $\widetilde{\psi}{}_I^{\pm}=0$, $H(\lambda,y^A)=\alpha(y^A)\lambda$ and $\widetilde{\bs{\chi}}=0$, one finds
\begin{align}
\label{Yrestricted-}
&\begin{array}{lll}
Y^{-}_{11} = 0, & Y^{-}_{1J} =-\sigmam_J, & Y^{-}_{IJ} = \Thetam_{IJ},\\[\medskipamount]
Y_{11}^+= 0, & Y_{1J}^+= -\sigmap_J+\dfrac{\cp_{y^J}\alpha}{\alpha}, & Y^+_{IJ}=\dfrac{1}{\alpha}\lp\lp\nabla_{I}^{\parallel}\nabla_{J}^{\parallel}\alpha-2 \nabla_{{\lp I\rd}}^{\parallel}\alpha\spc \sigmap_{\ld J\rp}\rp\lambda +\Thetap_{IJ}\rp,
\end{array}\\
\label{taurestricted1}&\spc\tau^{IJ}=0,\qquad \tau^{1I}=\gamma^{IJ}\lp \dfrac{\cp_{y^J}\alpha}{\alpha}-[\widetilde{\sigma}_J]\rp ,\\
\label{taurestricted2}&\spc\tau^{11}=-\gamma^{IJ}\lp\dfrac{1}{\alpha}\lp\nabla_{I}^{\parallel}\nabla_{J}^{\parallel}\alpha-2 \nabla_{{\lp I\rd}}^{\parallel}\alpha\spc \sigmap_{\ld J\rp}\rp\lambda +\dfrac{\Thetap_{IJ}}{\alpha}-\Thetam_{IJ}\rp.
\end{align}
Inserting \eqref{sigmapmND}-\eqref{sigmaandXi} and \eqref{Xiplus} into \eqref{Yrestricted-}-\eqref{taurestricted2}, we get the final expressions collected in the following theorem.
\begin{thm}\label{theoremNDYtau}
Assume the setup of Theorem \ref{theorem} and consider a matching in which the boundaries $\chor^{\pm}$ are non-degenerate Killing horizons containing bifurcation surfaces $\mathcal{S}^{\pm}$. Construct R\'acz-Wald coordinates $\{u_{\pm},v_{\pm},x^A_{\pm}\}$ so that $\chor^{\pm}=\{u_{\pm}=0\}$ and $\mathcal{S}^{\pm}=\{u_{\pm}=0,v_{\pm}=0\}$ and in which the Killing vector fields and the corresponding metrics are given by \eqref{killingRW}-\eqref{metricRW}. Select $\xi^{\pm}$ to have the same surface gravities (which forces $a=1$) and let $\alpha=\cp_{\lambda}H$. Then, the tensors $Y^{\pm}$ and the energy-momentum tensor of the shell $\tau$ can be expressed in terms of the tensors
\begin{equation}
\label{defsTheorem2}
\begin{array}{l}
\vsigmam_I:=\Wm_I,\\
\vsigmap_I:=\Wp_I-2\dfrac{\nabla_{I}^{\parallel}\alpha}{\alpha},
\end{array}\qquad \Xipm_{IJ}:=\dfrac{1}{2}\lp \Rpm_{IJ}-{R}^{\parallel}_{IJ}-\dfrac{1}{2}\lp \nabla^{\parallel}_I\vsigmapm_J+\nabla^{\parallel}_J\vsigmapm_I\rp+\dfrac{1}{2}\vsigmapm_I\vsigmapm_J\rp,
\end{equation}
as
\begin{align}
\label{YNDfinal}Y^{-}_{11} &=0, \quad Y^{-}_{1J} =-\dfrac{\vsigmam_J}{2},   & Y^{-}_{IJ} &= \Xim_{IJ}\lambda; &  \quad Y_{11}^+&=0, \quad Y_{1J}^+= -\dfrac{\vsigmap_J}{2},\quad Y^+_{IJ}=\Xip_{IJ}\lambda;\\
\label{tauNDRestriction}\tau^{11}&=-\gamma^{IJ} [\widetilde{\Xi}_{IJ}] \lambda, & \tau^{1I}&=-\dfrac{1}{2}\gamma^{IJ}[\widetilde{\varsigma}_J],& \tau^{IJ}&=0.
\end{align}
\end{thm}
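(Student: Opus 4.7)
The plan is to obtain \eqref{YNDfinal}--\eqref{tauNDRestriction} by direct substitution into the specialized form of Proposition \ref{propenergymomtensor}, after identifying the relevant geometric quantities in the R\'acz--Wald chart and recognizing that the combinations appearing in $Y^+$ and $\tau^{11}$ reorganize into the tensor $\nXip$ introduced in \eqref{defsTheorem2}.

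First I would adopt the basis $\{L = G^{-1}\partial_u,\; k = \partial_v,\; v_I = \partial_{x^I}\}$ suggested by the R\'acz--Wald chart on each side and read off from \eqref{metricRW} that $\nfi^\pm = 1$ and $\psi_I^\pm = 0$ on $\chor^\pm$. Since $\xi\vert_{\chor} = \kappa_\xi v\, \partial_v$, comparison with \eqref{finaletak} gives $f = 0$, so item (ii) of Theorem \ref{theorem} with $a = 1$ yields the step function $H(\lambda, y^A) = \alpha(y^A)\lambda$ with $\alpha > 0$. The only ingredients still missing are $\nsigmapm$ and $\nThetapm$, since $\chor^\pm$ is totally geodesic and hence $\widetilde{\bs{\chi}}{}^- = 0$.

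Next, I would compute $\bs{\sigma}_L(v_I) = w_I/2$ by a short Christoffel calculation (yielding \eqref{fbfs1}) and, for $\bs{\Theta}^L(v_I, v_J)$, rewrite the off-boundary derivative $\partial_\omega\overline{\gamma}_{IJ}\vert_{\chor}$ via the Appendix \ref{appedixA} identity \eqref{Riccinondeg2} in terms of the ambient Ricci tensor $R_{IJ}$, the sectional Ricci tensor $\accentset{\circ}{R}_{IJ}$ and the one-form $w_I$, giving \eqref{fbfs2}. Pulling these objects back to $\Sigma$ with the identities of Appendix \ref{apppullbacks}, and using that the coefficients $b_I^J$ are constant along the generators so that the pullbacks of $\accentset{\circ}{R}_{AB}^\pm$ both coincide with the Ricci tensor $R^\parallel$ of the section $\{\lambda = \text{const.}\}$, produces \eqref{sigmapmND}--\eqref{ThetapmND}. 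These formulas, together with the algebraic identity \eqref{Xiplus} that collects the $\alpha$-dependent terms of $\Thetap_{IJ}$ into the shifted tensor $\nXip$, constitute the technical core of the statement.

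Finally, substituting $\nfi^\pm = 1$, $\psip = \psim = 0$, $H = \alpha\lambda$ and $\widetilde{\bs{\chi}}{}^- = 0$ into \eqref{Ymenos}--\eqref{YIJ} delivers the intermediate expressions \eqref{Yrestricted-}; relation \eqref{Xiplus} then converts $Y^+_{IJ}$ into $\nXip_{IJ}\lambda$, completing \eqref{YNDfinal}. For the shell, $\tau^{IJ} = 0$ is immediate from $\partial_\lambda\partial_\lambda H = 0$ in \eqref{finaltau3}; using $\sigmapm_I = \Wpm_I/2$ together with the definition of $\nvsigmapm$ in \eqref{defsTheorem2} one verifies the identity $\partial_{y^J}\alpha/\alpha - [\widetilde{\sigma}_J] = -\tfrac{1}{2}[\widetilde{\varsigma}_J]$, which gives $\tau^{1I}$; and the expression for $\tau^{11}$ follows from the same cancellation used for $Y^+_{IJ}$. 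The main obstacle is the off-boundary derivative $\partial_\omega\overline{\gamma}_{IJ}\vert_{\chor}$ in $\bs{\Theta}^L$, which forces reliance on the Appendix \ref{appedixA} identity and on the $v$-independence of each factor in \eqref{fbfs2}; once this step is executed, every remaining manipulation is routine substitution and algebraic rearrangement.
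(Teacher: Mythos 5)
Your proposal is correct and follows essentially the same route as the paper: the same R\'acz--Wald basis with $\nfi^{\pm}=1$, $\psi_I^{\pm}=0$ and $H=\alpha\lambda$, the same computation of $\bs{\sigma}_L$ and of $\bs{\Theta}^L$ via the Appendix \ref{appedixA} identity \eqref{Riccinondeg2}, the same pullback argument from Appendix \ref{apppullbacks} together with the isometry condition identifying $(\Phi^{\pm})^*(\accentset{\circ}{\bs{R}}{}^{\pm})$ with $\bs{R}^{\parallel}$, and the same final rearrangement through \eqref{Xiplus} into the tensors $\nvsigmapm$ and $\nXipm$. No gaps.
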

\begin{rem}
Note the intrinsic curvature term $\bs{R}^{\parallel}$ drops out from the jump $[\widetilde{\bs{\Xi}}]$. The underlying reason is the already mentioned isometry condition $(\Phi^+)^*(\accentset{\circ}{\bs{R}}^+)=(\Phi^-)^*(\accentset{\circ}{\bs{R}}^-)$. 
\end{rem}
The results \eqref{YNDfinal}-\eqref{tauNDRestriction} allow us to conclude that the matter-content of the shell, given by $ Y^{\pm}$ and $\tau$, exclusively depends on the choice of $\alpha$, on the intrinsic and extrinsic geometry of the bifurcation surfaces $\mathcal{S}^{\pm}$ (recall that $\bs{\sigma}_L$ and $\bs{\Theta}^L$ are the torsion one-form and the transverse null second fundamental form in this case) and on the pullback to these surfaces of the Ricci tensors of the ambient spacetimes $\lp\Mpm,g^{\pm}\rp$, which together determine the form of the tensors $\nvsigmapm$ and $\nXipm$ according to the equations \eqref{fbfs1}, \eqref{sigmaandXi} and \eqref{Xiplus}. 

The components of $Y^{\pm}$, $\tau$ are either constant along the null generators or linear in $\lambda$. It is worth mentioning that the energy density of the shell is either identically zero or unavoidably changes its sign at the bifurcation surface. Besides, the energy current $j^I$ is independent of $\lambda$, which means that the flux of energy is insensitive to the change of sign on the energy of the shell. This raises some questions concerning the physical interpretation of the quantities $\rho:=\tau^{11}$, $j^I:=\tau^{1I}$ and $p:=(n-1)^{-1}\gamma_{AB}\tau^{AB}$. 
We include below some comments in this regard.

Let us call velocity the rate of change of the foliation defining functions along the null generators $e^{\pm}_1$ and acceleration to the rate of change of the velocity. As discussed in \cite{manzano2021null}, the pressure $p$ accounts for the effect of self-compression or self-stretching of points when crossing from $\chor^-$ to $\chor^+$. The identification between $\chor^-$ and $\Sigma$ always gives velocity equal to one on this side. For this reason, the effect of self-compression/self-stretching only appears when there exists non-constant acceleration along the generators of $\chor^+$. As also shown in  \cite{manzano2021null}, the energy density of the shell increases when points are compressed and vice versa.

Nevertheless, despite the shell has vanishing pressure in the present case, some effect of compression or stretching of points is still taking place because the velocity along the null generators of $\chor^+$, ruled by the function $\alpha$, is different for each generator. As a consequence there appears an energy flux which points toward null generators with higher values of $\alpha$, i.e. with greater velocities (which, however, are still constant along the generator, hence yielding zero pressure). Note that the fact that the currents depend on the jump of the torsion one-forms could also be expected, as these tensors are the projection on $k$ of the variation $\nabla^{\pm}{}_{v^{\pm}_I}k^{\pm}$ along the leaves $\{s^{\pm}=\text{const.}\}\subset\chor^{\pm}$. The greater the jump of these tensors, the more difference there is in the lengths of $k^{\pm}$, which yields a greater jump on the velocities along the generators of both sides.

We find the change of sign on the energy density of the shell $\rho$ across the bifurcation surface really puzzling, and we do not know yet how to interpret this. The result suggests that the causality change of the Killing fields from future to past across the bifurcation surface somehow affects the energy density of the shell. We emphasize however, that this behaviour is fully compatible with the shell field equations obtained by Barrabés and Israel \cite{barrabes1991thin} for the case of null hypersurfaces. This of course had to be the case and we include an explicit proof in next section because this yields a non-trivial consistency check of our results.

\subsection{Surface layer equations}
The tensors $Y^{\pm}$ and energy-momentum tensor on a shell satisfy the so-called Israel equations (also known as {\it shell equations} or {\it surface layer equations}). In the framework of hypersurface data, they read \cite{mars2013constraint}
\begin{align}
\label{sfe1}\dfrac{1}{\sqrt{\vert\det\bs{\mathcal{A}}\vert}}\cp_{\theta^a}\lp \sqrt{\vert\det\bs{\mathcal{A}}\vert}\tau^{ab}\ell_b\rp-\dfrac{1}{2}\tau^{ab}\lp Y_{ab}^++Y_{ab}^-\rp=&\lc\rho_{\ell}\rc,\\
\label{sfe2}\dfrac{1}{\sqrt{\vert\det \bs{\mathcal{A}}\vert}}\cp_{\theta^b}\lp \sqrt{\vert\det\bs{\mathcal{A}}\vert}\tau^{bc}\gamma_{ca}\rp-\dfrac{1}{2}\tau^{bd}\cp_{\theta^a}\gamma_{bd}=&\lc J_a\rc,
\end{align}
where $\{\theta^1=\lambda,\theta^I=y^I\}$, $\lc\rho_{\ell}\rc:=\rho_{\ell}^+-\rho_{\ell}^-$, $\lc J_a\rc:=J_a^+-J_a^-$, and the bulk energy and momentum quantities $\rho^{\pm}_{\ell}$, $J_a^{\pm}$ are defined by 
\begin{align}
\label{Jrho}\rho_{\ell}^{\pm}:=-({\Phi^{\pm}})^*\lp \textbf{Ein}^{\pm}\lp\zeta^{\pm},\nu^{\pm}\rp\rp ,\qquad \bs{J}^{\pm}:=-({\Phi^{\pm}})^*\lp \textbf{Ein}^{\pm}\lp\cdot,\nu^{\pm}\rp\rp.
\end{align}
Here $\textbf{Ein}^{\pm}$ denotes the Einstein tensor of $\lp\Mpm,g^{\pm}\rp$ and $\nu^{\pm}$ is the (unique) normal vector to the hypersurface, normalized to $g^{\pm}( \zeta^{\pm}, \nu^{\pm} )=1$. 

In the present case $\nu=-e_1^{\pm}$. Moreover, $\textbf{Ein}^{\pm}(k^{\pm},k^{\pm})=0$ as a consequence of the Raychaudhuri equation  and $\textbf{Ein}^{\pm}(k^{\pm},v_I^{\pm})=0$ because the surface gravities $\ke^{\pm}$ are constant on $\chor^{\pm}$ (see e.g. equations (9.2.11), (12.5.22) and (12.5.30) in \cite{wald1984general}). This immediately entails that $\bs{J}^{\pm}=0$. On the other hand, in terms of the basis $\{L^{\pm},k^{\pm},v_A^{\pm}\}$ we can decompose the inverse metric as $
g_{\pm}^{\alpha\beta}=h_{\pm}^{AB}(v_A^{\pm})^{\alpha}(v_B^{\pm})^{\beta}-(L^{\pm})^{\alpha}(k^{\pm})^{\beta}-(L^{\pm})^{\beta}(k^{\pm})^{\alpha}$. This yields Ricci scalar $\bs{R}^{\pm}\vert_{\chor^{\pm}}=h_{\pm}^{AB}\textbf{Ric}^{\pm}(v_A^{\pm}, v_B^{\pm})-2\textbf{Ric}^{\pm}(L^{\pm},k^{\pm})$ and hence
\begin{align}
\label{SEeq23}\rho_{\ell}^{\pm}\eqchpm\frac{1}{2}h_{\pm}^{AB}\textbf{Ric}^{\pm}(v_A^{\pm}, v_B^{\pm})\qquad\Longrightarrow\qquad \rho_{\ell}^{\pm}=\frac{1}{2}\gamma^{IJ}\widetilde{R}_{IJ}^{\pm},
\end{align}
where for the implication we made use of the isometry condition \eqref{isomcondpaper1}. 

To prove that the shell equations hold for the tensor fields $Y^{\pm}$ and $\tau$ of Theorem \ref{theoremNDYtau}, we compute each term of the left hand side of \eqref{sfe1}-\eqref{sfe2} separately. We start with \eqref{sfe1}. The tensor $\bs{\mathcal{A}}$ (cf. \eqref{ambientmetric}) is given in this case by 
\begin{equation}
\bs{\mathcal{A}}=\lp
\begin{array}{ccc}
0 & 0 & -1\\
0 & \gamma & 0\\
-1 & 0 & 0\\
\end{array}
\rp,
\end{equation}
because $\ell_1=-\nfi^-=-1$ and $\ell_I=-\psi_I^-=0$. Consequently, $\vert\det\gamma\vert=\vert\det\bs{\mathcal{A}}\vert$. On the other hand, the fact that $\chor^{\pm}$ are totally geodesic entails $k^{\pm}(h^{\pm}_{IJ})=0$ (see \eqref{k(hIJ)}), from where it follows that $0=e^-_{1}(h^-_{IJ})=\cp_{\lambda}\gamma_{IJ}$. For spatial derivatives of $\vert\det\gamma\vert$, we use the well-known identity
\begin{equation}
\label{SEeq0}\dfrac{1}{\vert\det\gamma\vert}\cp_{y^I}(\vert\det\gamma\vert)={\Gamma^{\parallel}}{}^{A}_{AI},
\end{equation}
where $\Gamma^{\parallel}{}^{A}_{BI}$ are the Christoffel symbols of $\nabla^{\parallel}$. The following expressions are immediate consequences of \eqref{YNDfinal}-\eqref{tauNDRestriction} together with $l_a = -\delta_a^1$ and
$\gamma_{1a} =0$:
\begin{align}
\label{SEeq1}\tau^{ab}\ell_b&=-\tau^{1a}\\
\label{SEeq2}\tau^{ab}Y_{ab}^{\pm}&=\tau^{11}Y_{11}^{\pm}+2\tau^{1I}Y_{1I}^{\pm}+\tau^{IJ}Y_{IJ}^{\pm}=2\tau^{1I}Y_{1I}^{\pm}\\
\label{SEeq8}\tau^{ab}(Y_{ab}^++Y^-_{ab})&=\dfrac{1}{2}\gamma^{IJ}[\widetilde{\varsigma}_J](\vsigmap_I+\vsigmam_I)=\dfrac{1}{2}\gamma^{IJ}[\vsigma_I\vsigma_J].\\
\label{SEeq3}\tau^{bc}\gamma_{ca}&=\tau^{bJ}\gamma_{aJ}=\delta_{1}^b\delta_a^I\tau^{1J}\gamma_{IJ}\\
\label{SEeq4}\tau^{bd}\cp_{\theta^a}\gamma_{bd}&=\tau^{11}\cp_{\theta^a}\gamma_{11}+2\tau^{1I}\cp_{\theta^a}\gamma_{1I}+\tau^{IJ}\cp_{\theta^a}\gamma_{IJ}=0.
\end{align}
By \eqref{SEeq0} and \eqref{SEeq1}, it follows
\begin{align}
\nonumber \dfrac{1}{\sqrt{\vert\det\bs{\mathcal{A}}\vert}}\cp_{\theta^a}\lp \sqrt{\vert\det\bs{\mathcal{A}}\vert}\tau^{ab}\ell_b\rp&=-\dfrac{1}{\sqrt{\vert\det\gamma\vert}}\cp_{\theta^a}(\sqrt{\vert\det\gamma\vert}\tau^{1a})\\
\nonumber &=-\cp_{\lambda}\tau^{11}-\dfrac{1}{\sqrt{\vert\det\gamma\vert}}\cp_{y^I}( \sqrt{\vert\det\gamma\vert}\tau^{1I})\\
\nonumber &=\gamma^{IJ}[\widetilde{\Xi}_{IJ}]-\nabla^{\parallel}_I\tau^{1I}\\
\nonumber &=\gamma^{IJ}[\widetilde{\Xi}_{IJ}]+\dfrac{1}{4}\gamma^{IJ}(\nabla^{\parallel}_I[\widetilde{\varsigma}_J]+\nabla^{\parallel}_J[\widetilde{\varsigma}_I]),\\
\label{SEeq21} &=\dfrac{1}{2}\gamma^{IJ}[\widetilde{R}_{IJ}]+\dfrac{1}{4}\gamma^{IJ}[\vsigma_I\vsigma_J],
\end{align}
where in the last equality we inserted \eqref{defsTheorem2} and \eqref{tauNDRestriction}. Combining \eqref{SEeq23}, \eqref{SEeq8} and \eqref{SEeq21}, the shell equation \eqref{sfe1} follows immediately.

Checking the validity of equation \eqref{sfe2} is almost direct. Since $\bs{J}^{\pm}$ are zero, it suffices to substitute \eqref{SEeq3}-\eqref{SEeq4} into \eqref{sfe2} to obtain
\begin{equation}
0=\dfrac{1}{\sqrt{\vert\det \gamma\vert}}\cp_{\theta^b}\lp \sqrt{\vert\det\gamma\vert}\delta_{1}^b\delta_a^I\tau^{1J}\gamma_{IJ}\rp=\dfrac{1}{\sqrt{\vert\det \gamma\vert}}\cp_{\lambda}\lp \sqrt{\vert\det\gamma\vert}\delta_a^I\tau^{1J}\gamma_{IJ}\rp,
\end{equation}
which is automatically satisfied as nothing inside the parenthesis depends on $\lambda$.

\section{Spherical, plane or hyperbolic symmetric spacetimes} \label{secNDexamples}
To conclude this paper, we apply the formalism above to study particular matchings of interest. We start by determining the necessary and sufficient conditions that allow for the matching of two arbitrary spherical, plane or hyperbolic symmetric spacetimes admitting a Killing horizon with a bifurcation surface. We then particularize the results for the cases of two Schwarzschild spacetimes and two Schwarzschild-de Sitter spacetimes. As usual, we avoid $\pm$ notation until the actual matching is performed. 

Let $\lp\mathcal{M},g\rp$ be a spherical, plane or hyperbolic symmetric spacetime and $\Lambda$ be its corresponding cosmological constant. Assume that it admits a Killing vector field $\xi$ defining a  bifurcation surface $\mathcal{S}\subset\mathcal{M}$. Any spacetime of this kind is by definition a warped product of a $2$-dimensional Lorentzian manifold $\lp\mathcal{N},\bar{g}\rp$ and an $\lp n-1\rp$-dimensional Riemannian space $\lp\mathcal{W}, h_{\vk}\rp$ of constant curvature $\varkappa\in\{ 1,0,-1\}$ \cite{chen2011pseudo}, \cite{neil1983semi}. We let $r\in\mathcal{F}\lp\mathcal{N}\rp$ be the warping function and use R\'acz-Wald coordinates $\{ u,v,x^A\}$ constructed as in Section \ref{secRestrictionY}. We again scale a priori the Killing vectors defining the horizons of each spacetime  so that they have the same surface gravity. In terms of $\omega:=uv$, the warped metric is 
\begin{equation}
\label{warpedmetric}
g=\bar{g}+r^2\lp\omega\rp h_{\vk}\lp x^A\rp,
\end{equation}
where $\bar{g}=-2G\lp\omega\rp du dv$, $G\in\mathcal{F}^*\lp\mathcal{M}\rp$ (note that $G\vert_{\chor}$ is constant). 

The induced metric on the bifurcation surfaces $\mathcal{S}^{\pm}=\{u_{\pm}=0,v_{\pm}=0\}$ is $g^{\pm}\vert_{\mathcal{S}^{\pm}}=r_{\pm}^2h_{\vk}^{\pm}\vert_{\mathcal{S}^{\pm}}=(r^{\pm}_0)^2h_{\vk}^{\pm}$, where $r_0:= r\vert_{\chor} \neq 0$. The map $\bs{\Phi}:\mathcal{S}^-\longrightarrow\mathcal{S}^+$ must be an isometry so the Ricci scalars of $g^{\pm}\vert_{\mathcal{S}^{\pm}}$, which in this case are given by  
\begin{equation}
( n-1)( n-2)\varkappa^{\pm}( r_0^{\pm})^{-2},
\end{equation}
must be preserved by $\bs{\Phi}$. Therefore
\begin{equation}
\label{matchcondsym}\dfrac{\varkappa^-}{(r_0^-)^{2}}=\dfrac{\varkappa^+}{( r_0^+)^{2}},
\end{equation}
and we conclude that $\varkappa^{\pm}$ must coincide (recall that $\varkappa^{\pm}\in\{ 1,0,-1\}$). From now on we simplify the notation and write $\varkappa$ instead of $\varkappa^{\pm}$. An immediate consequence of \eqref{matchcondsym} is that the jump $\lc r_0\rc:=r^+_0-r^-_0$ is zero whenever $\varkappa\neq0$, and can take whatever value in the plane case with $\varkappa=0$.

Since $\chor^{\pm}$ are totally geodesic, equation \eqref{isomcondpaper1} constitutes an isometry condition between the leaves $\{s^{\pm}=\text{const.}\}\subset\chor^{\pm}$. These sections are euclidean planes, spheres of radius $r_0$ and hyperbolic planes when $\varkappa=0$, $\varkappa=1$ and $\varkappa=-1$ respectively. The corresponding isometries are respectively euclidean motions, rotations and hyperbolic rotations. In each case they are also isometries of the ambient spacetimes, so the freedom in the matching, encoded in $\bs{\Phi}$, can be absorbed (with full generality) in the coordinates $\{u_+,v_+,x_+^A\}$ in such a way that the coefficients $b_I^J$ take the simple form $b_I^J=\delta_I^J$. This will be assumed from now on. Thus (cf. \eqref{emhd}, \eqref{evectors1}-\eqref{evectors2})
\begin{equation}
\label{gammaandhsym}\gamma_{IJ}:=g^{\pm}\lp e^{\pm}_I,e^{\pm}_J\rp=(r_0^{\pm})^2h_{\vk}^{\pm}{}_{IJ}.
\end{equation}

The metric \eqref{warpedmetric} is of the form \eqref{metricRW} with $w^{\pm}_A=0$ and $\ovl{\gamma}^{\pm}=r_{\pm}^2h_{\vk}^{\pm}$. Consequently, the torsion one-forms $\bs{\sigma}^{\pm}_{L^{\pm}}$ vanish on $\chor^{\pm}$ (cf. \eqref{fbfs1}). Inserting this into \eqref{defsTheorem2}, it follows that 
\begin{equation}
\label{eqnumberrandom}\hspace*{-0.2cm}\vsigmam_I=0,\quad\vsigmap_I=-2\dfrac{\nabla^{\parallel}_I\alpha}{\alpha},\quad\Xim_{IJ}=\dfrac{1}{2}\lp \widetilde{R}{}^{-}_{IJ}-R^{\parallel}_{IJ}\rp,\quad\Xip_{IJ}=\dfrac{1}{2}\lp \Rp_{IJ}-{R}^{\parallel}_{IJ}+2\dfrac{\nabla^{\parallel}_I\nabla^{\parallel}_J\alpha}{\alpha}\rp.
\end{equation}
The fact that the metric \eqref{gammaandhsym} is of constant curvature $\vk   r_0^{-2}$ means that its Ricci tensor is 
\begin{equation}
\label{Xitensor}R^{\parallel}_{IJ}=\dfrac{(n-2)\vk}{(r_0^{\pm})^2}\gamma_{IJ},\quad\text{ and therefore }\quad \lb \begin{array}{l}
\Xim_{IJ}=\frac{1}{2}\lp \widetilde{R}{}^{-}_{IJ}-\frac{(n-2)\vk}{(r_0^{-})^2}\gamma_{IJ}\rp,\\ [\medskipamount]
\Xip_{IJ}=\frac{1}{2}\Big( \Rp_{IJ}-\frac{(n-2)\vk}{(r_0^{+})^2}\gamma_{IJ}+2\frac{\nabla^{\parallel}_I\nabla^{\parallel}_J\alpha}{\alpha}\Big).
\end{array} \rd
\end{equation}
Substituting this in the expressions of Theorem \ref{theoremNDYtau} and using \eqref{matchcondsym}, we obtain
\begin{align}
\label{Ysym}&\lb\begin{array}{l}
Y^{-}_{11}= 0,\\[\medskipamount]
Y^{-}_{1J} =0,\\[\medskipamount]
Y^{-}_{IJ} = \frac{\lambda}{2}\big( \widetilde{R}^{-}_{IJ}-\frac{(n-2)\vk}{(r_0^{-})^2}\gamma_{IJ}\big),
\end{array}\rd\qquad\lb\begin{array}{l}
Y_{11}^+= 0,\\
Y_{1J}^+= \frac{\cp_{y^J}\alpha}{\alpha},\\
Y^+_{IJ}=\Big(\frac{\nabla_{I}^{\parallel}\nabla_{J}^{\parallel}\alpha}{\alpha} +\frac{1}{2}\big( \widetilde{R}^{+}_{IJ}-\frac{(n-2)\vk}{(r_0^{+})^2}\gamma_{IJ}\big)\Big)\lambda,
\end{array}\rd\\
\label{tausym}&\spc\spc\quad\tau^{IJ}=0,\qquad \tau^{1I}=\gamma^{IJ} \frac{\cp_{y^J}\alpha}{\alpha} ,\qquad \tau^{11}=-\gamma^{IJ}\lp \frac{\nabla_{I}^{\parallel}\nabla_{J}^{\parallel}\alpha}{\alpha} +\frac{1}{2}[\widetilde{R}_{IJ}]\rp\lambda.
\end{align}
The resulting shells have therefore energy density $\rho$ and energy flux $j^J$ given by
\begin{equation}
\rho=-\gamma^{IJ}\lp \frac{\nabla_{I}^{\parallel}\nabla_{J}^{\parallel}\alpha}{\alpha} +\frac{1}{2}[\widetilde{R}_{IJ}]\rp\lambda,\quad\qquad j^J=\gamma^{IJ}\frac{\cp_{y^J}\alpha}{\alpha}.
\end{equation} 
An interesting particular case is when the spacetimes $\lp\Mpm,g^{\pm}\rp$ to be matched are, in addition, solutions to the $\Lambda^{\pm}$-vacuum Einstein field equations
\begin{equation}
 R^{\pm}_{\alpha\beta}=\frac{2\Lambda^{\pm}}{n-1}g^{\pm}_{\alpha\beta},
\end{equation}
which impose
\begin{equation}
\label{EFEsymmetric} R^{\pm}_{AB}\stackbin{\chor^{\pm}}{=}\frac{2\Lambda^{\pm}}{n-1}(r_0^{\pm})^2h_{\vk}{}_{AB}\stackbin{\chor^{\pm}}{=}\frac{2\Lambda^{\pm}}{n-1}\gamma_{AB}\quad\Longrightarrow\quad \Rpm_{AB}:=(\Phi^{\pm})^*(\bs{R}^{\pm})_{AB}=\frac{2\Lambda^{\pm}}{n-1}\gamma_{AB}.
\end{equation}
Inserting this into \eqref{Ysym}-\eqref{tausym} yields
\begin{equation}
\label{Ytauwarped}
\hspace{-0.2cm}\lb\begin{array}{ll}
Y^-_{11}=0, &  Y^+_{11}=0,\\
Y^-_{1J}=0, & Y^+_{1J}=\frac{\cp_{y^J}\alpha}{\alpha},\\
Y^-_{IJ}=\beta^-\lambda \gamma_{IJ} , &  Y^+_{IJ}=\big(\frac{\nabla^{\parallel}_I\nabla^{\parallel}_J\alpha}{\alpha}+\beta^+ \gamma_{IJ}\big)\lambda,
\end{array}\rd\quad \lb\begin{array}{l}
\tau^{11}=-\big( \gamma^{IJ}\frac{\nabla^{\parallel}_I\nabla^{\parallel}_J\alpha}{\alpha}+ \lc\Lambda\rc  \big)\lambda,\\
\tau^{1J}=\gamma^{IJ}\frac{\cp_{y^J}\alpha}{\alpha},\\
\tau^{IJ}=0.
\end{array}\rd
\end{equation}
where we have defined 
\begin{equation}
\beta^{\pm}:=\frac{\Lambda^{\pm}}{n-1} -\frac{(n-2)\vk}{2(r_0^{\pm})^2}.
\end{equation}
It is worth stressing that the constant curvature parameter $\vk$ does not appear explicitly in  \eqref{Ytauwarped}. It however appears implicitly in the metric $\gamma_{IJ}$ and in the corresponding covariant derivative $\nabla^{\parallel}$. 
In the next subsections we particularize further to (the maximally extended) Schwarzschild and Schwarzschild-de Sitter spacetimes.

\subsection{Schwarzschild spacetime}\label{secSchwarzschild}
If the metrics on both sides are Schwarzschild we have $\Lambda^{\pm}=0$ and $\vk=1$. By \eqref{matchcondsym}, the radii $r_0^{\pm}$ must coincide, so the Schwarzschild mass of both sides is necessarily the same. We write  $r_0$ instead of $r_0^{\pm}$ from now on. Thus,  \eqref{Ytauwarped} reduces to
\begin{equation}
\label{YSchwarzschild}
\hspace{-0.2cm}\lb\begin{array}{ll}
Y^-_{11}=0, &  Y^+_{11}=0,\\
Y^-_{1J}=0, & Y^+_{1J}=\frac{\cp_{y^J}\alpha}{\alpha},\\
Y^-_{IJ}=-\frac{(n-2)\lambda}{2r_0^2} \gamma_{IJ} , &  Y^+_{IJ}=\big(\frac{\nabla^{\parallel}_I\nabla^{\parallel}_J\alpha}{\alpha}-\frac{n-2}{2r_0^2} \gamma_{IJ}\big)\lambda,
\end{array}\rd\quad \lb\begin{array}{l}
\tau^{11}=-\gamma^{IJ}\frac{\nabla^{\parallel}_I\nabla^{\parallel}_J\alpha}{\alpha} \lambda,\\
\tau^{1J}=\gamma^{IJ}\frac{\cp_{y^J}\alpha}{\alpha},\\
\tau^{IJ}=0.
\end{array}\rd
\end{equation}

The tensor $\gamma$ is the round metric of radius $r_0$ so its Laplace-Beltrami operator is $r_0^{-2} \Delta_{\mathbb{S}^{n-1}}$ where $\Delta_{\mathbb{S}^{n-1}}$ is the Laplacian of the  unit $(n-1)$-sphere.  

For each natural number $l$ we let $\{ Y_{l,m} \}$, $m=0,...,N(n,l)-1$ be a collection of linearly independent solutions of 
\begin{equation}
\label{DeltaY=cY}
\Delta_{\mathbb{S}^{n-1}}Y_{l,m}=-l(l+n-2)Y_{l,m}
\end{equation}
which, as usual we call spherical harmonics. The number $N(n,l)$ is (see e.g. \cite{efthimiou2014spherical}) 
\begin{equation}
\lb\begin{array}{ll}
N(n,l)=1 & \text{if } l=0,\\
N(n,l)=\binom{l+n-2}{l}+\binom{l+n-3}{l-1} & \text{otherwise} .
\end{array}\rd
\end{equation}
Since $\{Y_{l,m}\}$ form a complete set of functions over $\mathbb{S}^{n-1}$, any (sufficiently regular) function $\alpha$ can be decomposed in this basis. Observe that $\alpha$ can be ensured to be positive by selecting the coefficient of $Y_{0,0}$ suitably positive and large. By expressing $\alpha$ as
\begin{equation}
\label{alphaSch}\alpha=\sum_{l=0}^{\infty}\sum_{m=0}^{N(n,l)-1}a_{l,m}Y_{l,m},\quad\text{where}\quad a_{l,m}\in\mathbb{R},
\end{equation}
the energy density of the shell is given by (cf. \eqref{YSchwarzschild})
\begin{align}
\label{energydensitySchwarzschild}\rho&=-\dfrac{\Delta_{\mathbb{S}^{n-1}}\alpha}{r_0^2\alpha}\lambda=\dfrac{\sum_{l=0}^{\infty}l(l+n-2)\sum_{m=0}^{N(n,l)-1}a_{l,m}Y_{l,m}}{r_0^2\sum_{l=0}^{\infty}\sum_{m=0}^{N(n,l)-1}a_{l,m}Y_{l,m}}\lambda.
\end{align}

The simplest case occurs when $\alpha$ is a positive constant. Then $[Y] =0$ and we have complete absence of shell. The step function $H = \alpha \lambda$ can be absorbed in the coordinates of the $\lp \mathcal{M}^{+}, g^{+}\rp$ side. This coordinate freedom is a consequence precisely  of the fact that  Schwarzschild admits a one-parameter isometry group leaving the Killing horizon, and its generators, invariant. This ensures that, in the absence of shell, we recover the global Schwarzschild spacetime, as we must.

We conclude with a simple but not trivial example in dimension four (i.e. $n=3$). Take
\begin{align}
\label{alphaSch}
\alpha(\theta)&= 3\sqrt{\pi}Y_{0,0}+\dfrac{2\sqrt{\pi}}{\sqrt{5}}Y_{2,0}(\theta)= 1+\dfrac{3}{2} \cos^2\theta,\qquad\text{where}\qquad\lb
\begin{array}{l}
Y_{0,0}=\frac{1}{2\sqrt{\pi}},\\ [\medskipamount]
Y_{2,0}=\frac{\sqrt{5}}{2\sqrt{\pi}}P_2(\cos\theta)
\end{array}\rd
\end{align}
and $P_l(x)$ denote Legendre polynomials of degree $l$. This yields energy density and energy fluxes
\begin{align}
\rho&=\dfrac{3\lp 3\cos^2\theta-1\rp}{ 1+\frac{3}{2} \cos^2\theta}\dfrac{\lambda}{r_0^2}\qquad j^{\theta}=-\dfrac{3\sin\theta\cos\theta}{r_0^2\lp 1+\frac{3}{2} \cos^2\theta\rp},\qquad j^{\phi}=0.
\end{align}
In Figure \ref{figSch} we plot the functions $\alpha(\theta)$, $j^{\theta}(\theta)$ and the energy density $\rho(\lambda,\theta)$ for $\lambda=1$, $\lambda=0$ and $\lambda=-1$ in units where $r_0=1$. As already discussed, the energy density changes sign at the bifurcation surface, despite the fact that the energy flux is constant along each null generator, including at the crossing of the bifurcation surface. The figure shows clearly how the energy flux component $j^{\theta}$ is positive (resp. negative) whenever the function $\alpha$ decreases (resp. increases), i.e. the energy flows towards those null generators with higher values of $\alpha$.

\begin{figure}[t]
\centering
\psfrag{a1}{$\alpha\lp \theta\rp$}
\psfrag{r1}{$\rho\lp -1,\theta\rp$}
\psfrag{j1}{$j^{\theta}\lp \theta\rp\qquad$}
\psfrag{a2}{$\alpha\lp \theta\rp$}
\psfrag{r2}{$\rho\lp 0,\theta\rp$}
\psfrag{j2}{$j^{\theta}\lp \theta\rp\qquad$}
\psfrag{a3}{$\alpha\lp \theta\rp$}
\psfrag{r3}{$\rho\lp 1,\theta\rp$}
\psfrag{j3}{$j^{\theta}\lp \theta\rp\qquad$}
\psfrag{s}{}
\psfrag{x}{}
\psfrag{C}{}
\psfrag{L}{$\theta$}
\includegraphics[width=15.5cm]{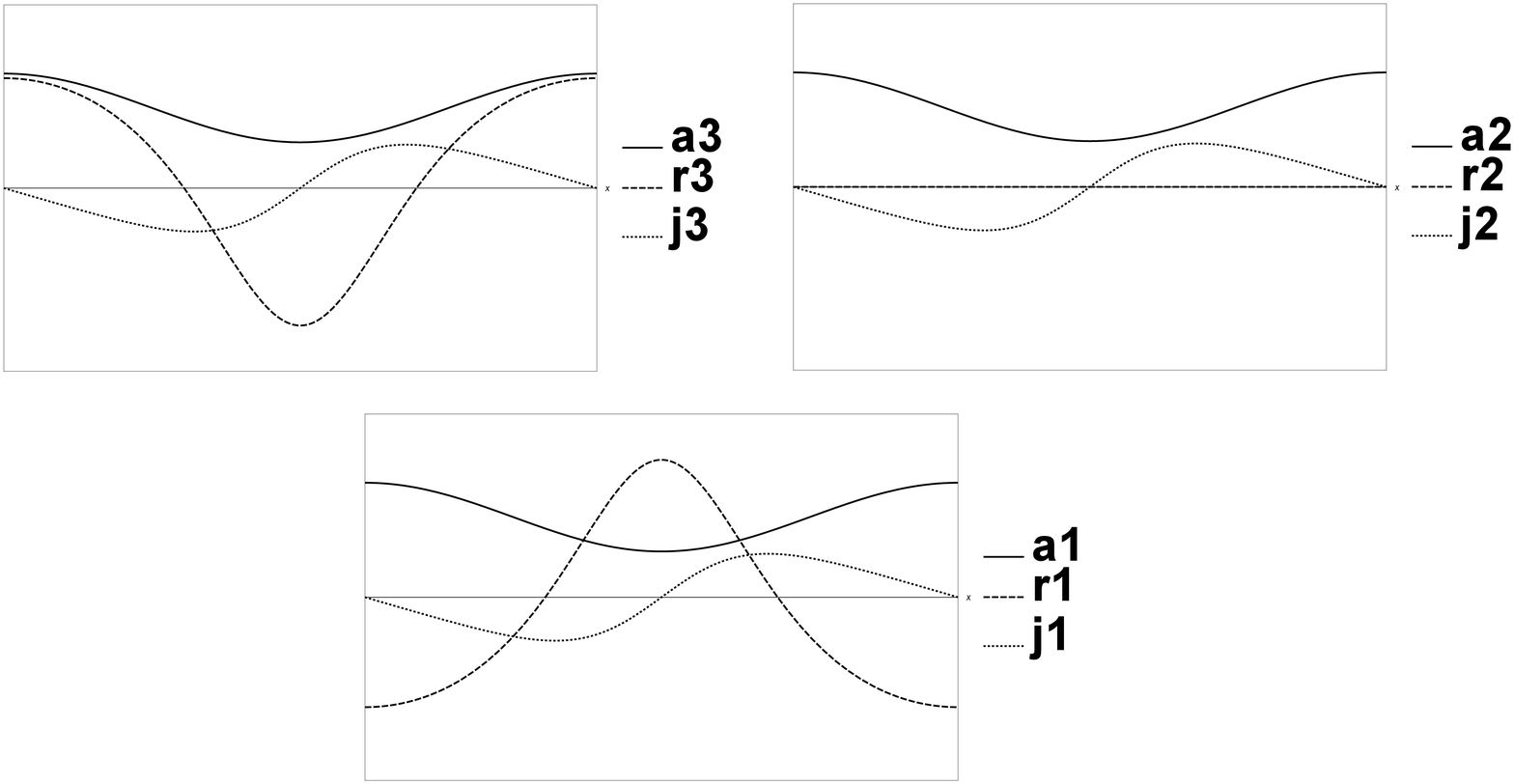}
\caption{For $r_0=1$ and $\alpha\lp\theta\rp$ given by \eqref{alphaSch}, the up-left, up-right and bottom plots show $\alpha(\theta)$, $j^{\theta}(\theta)$ and the energy density $\rho(\lambda,\theta)$ for $\lambda=1$, $\lambda=0$ and $\lambda=-1$ respectively.}
\label{figSch}
\end{figure}

\subsection{Schwarzschild-de Sitter spacetime}
Our final example is the matching of two Schwarzschild-de Sitter spacetimes. As before, $\varkappa=1$ and the horizon radii $r_0^{\pm}$ in both sides are forced to be the same so we can simply write $r_0$. A Schwarzschild-de Sitter spacetime of mass $m$ and cosmological constant $\Lambda>0$ may have several, one or none Killing horizons located at $r^{H}_{i}$ depending on the number of (positive) roots of the polynomial
\begin{equation}
0=P(r^{H}_{i})=(r^{H}_{i})^{n-2}-\dfrac{2\Lambda (r^{H}_{i})^{n}}{n(n-1)}-\dfrac{2m}{(n-1)(n-2)}.
\end{equation}
Since we do the matching on a preselected horizon we change the point of view, namely we take any positive value $r_0$ and use this expression to determine 
$m$ in terms of $r_0$ and $\Lambda$. The cosmological constants $\Lambda^{\pm}$ on both sides are allowed to be different. However, once they are selected, the corresponding masses $m^{\pm}$ must have jump 
\begin{equation}
\label{jumpmasses}[m] =  -\dfrac{(n-2)}{n}[\Lambda] r_0^{n}.
\end{equation}
Thus, a priori one can match across a horizon two Schwarzschild-de Sitter spacetimes with different masses and cosmological constants but \textit{only} if the parameters are related by \eqref{jumpmasses}.

The matter content of the shell is in this case given by \eqref{Ytauwarped}.  As in the previous section we may decompose the function $\alpha$ in terms of spherical harmonics. The corresponding expression for the energy density is now
\begin{align}
\label{energydensitySchwarzschild}\rho&=\lp\dfrac{\sum_{l=0}^{\infty}l(l+n-2)\sum_{m=0}^{N(n,l)-1}a_{l,m}Y_{l,m}}{r_0^2\sum_{l=0}^{\infty}\sum_{m=0}^{N(n,l)-1}a_{l,m}Y_{l,m}}-[\Lambda]\rp\lambda.
\end{align}
Let us conclude with some interesting observations. The first is that it is impossible to construct a (non-trivial) shell with vanishing energy density. This is because in such case it must hold
\begin{equation}
\label{rhozero0}\Delta_{\mathbb{S}^{n-1}}\alpha=-r_0^{2}\lc\Lambda\rc \alpha.
\end{equation}
and all solutions of these equation must necessarily have zeroes, which is not allowed for the matching function $\alpha$.

An interesting example is when the shell is composed on null dust, i.e. with identically zero energy-flux. By \eqref{Ytauwarped}, this requires $\alpha$ to be a (positive) constant and then the energy density of the null dust is
\begin{equation}
\rho=-[\Lambda]\lambda.
\end{equation}
The behaviour of this null dust is striking. Assume $[\Lambda] <0 $ for definiteness. Then the energy density is everywhere positive in the past of the bifurcation surface (i.e. for $\lambda<0$) so the system starts being perfectly reasonable from a physical point of view. The shell then evolves on its own in a manner dictated by the field equations and ends up in a state in which the energy density is everywhere negative. This negative energy density cannot be considered as unphysical, since it has evolved from a physically reasonable initial state, the system itself is physically reasonable (a collection of incoherent massless particles) and its evolution is dictated by the gravitational shell equations that follow from the Einstein field equations. This is a rather surprising behaviour.

Furthermore, this behaviour is not exclusive of null dust. In fact, this occurs for more general functions $\alpha$. It suffices to select $\alpha$ to be an everywhere positive function on $\mathbb{S}^{n-1}$ and, for $\lambda<0$, the energy density $\rho$ will always be positive provided that the jump $[\Lambda]$ is suitably positive and large. Then, we have a shell  with initial positive energy density and non-zero energy flux which unavoidably evolves into a state of negative energy density with no change along the evolution of the energy flux, which by \eqref{Ytauwarped} is independent of $\lambda$.

\appendix

\section{Pullback to $\Sigma$ of covariant derivatives on $\Omega^{\pm}$}\label{apppullbacks}
In this Appendix we establish a relationship, needed in Section \ref{secRestrictionY}, between  the pullback of covariant derivatives along the sections of $\Omega^{\pm}$ and covariant derivatives along the sections of $\Sigma$. We do this under the assumption that $\bs{\chi}^{k^{\pm}}_{\pm}=0$ which is the case of interest in this paper. The result is as follows.
\begin{lem}\label{lempullback}In the setup of Section \ref{secPreliminaries} (in particular for covariant tensors $\bs{T}^{\pm}$ on $\Omega^{\pm}$ satisfying $\bs{T}^{\pm} (..., k^{\pm}, ...)=0$), if the second fundamental form $\bs{\chi}_+^{k^+}$ vanishes everywhere on $\Omega^+$, the derivative operators $\nabla^{\parallel}$ and $\accentset{\circ}{\nabla}^{\pm}$ satisfy the identities
\begin{align}
\label{nablacircvsnablaparallel1}\nabla^{\parallel}_{I}\Tm_{A_1...A_p}&=\accentset{\circ}{\nabla}_{v^-_I}^-T^-_{A_1...A_p},\\
\label{nablacircvsnablaparallel2}\nabla^{\parallel}_{I}\Tp_{A_1...A_p}&=b_{A_1}^{B_1}...b_{A_p}^{B_p}\lp (\nabla^{\parallel}_IH)k^+( T^+_{B_1...B_p})+b_I^J\accentset{\circ}{\nabla}_{v^+_{J}}^+T^+_{B_1...B_p}\rp.
\end{align}
In particular, when $\bs{T}^{+}$ is such that $k^+(T^{+}_{A_1...A_p})=0$, one obtains
\begin{align}
\label{nablaswithk()zero}
\nabla^{\parallel}_{I}\Tp_{A_1...A_p}=b_I^Jb_{A_1}^{B_1}...b_{A_p}^{B_p}\accentset{\circ}{\nabla}_{v^+_{J}}^+T^+_{B_1...B_p}.
\end{align}
\end{lem}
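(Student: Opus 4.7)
The plan is to reduce both identities to the fact that each spatial section of $\Sigma$, equipped with the restriction of $\gamma$, is isometric (via the appropriate composition of $\Phi^{\pm}$ with projection along null generators) to a section of $\Omega^{\pm}$. The crucial input is that the totally-geodesic assumption $\bs{\chi}^{k^{\pm}}_{\pm}=0$ forces $k^{\pm}(h^{\pm}_{IJ})=0$ by \eqref{k(hIJ)}, so the spatial metrics $h^{\pm}_{IJ}$ are constant along the null generators and therefore descend to well-defined Riemannian metrics on the quotient space of generators.

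Identity \eqref{nablacircvsnablaparallel1} is essentially immediate: by the conventions \eqref{evectors1} we have $e^-_I=v^-_I$ and $\gamma_{IJ}=h^-_{IJ}$, so the Christoffel symbols of $\nabla^{\parallel}$ coincide with those of $\accentset{\circ}{\nabla}^{-}$ and, by \eqref{tensorremark1}, $\widetilde{T}^-_{A_1\dots A_p}=T^-_{A_1\dots A_p}$. Applying the definition of covariant derivative on both sides of \eqref{nablacircvsnablaparallel1} yields the identity at once.

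For \eqref{nablacircvsnablaparallel2} I would first establish the Christoffel-transformation law
\[
\Gamma^{\parallel\,C}_{IJ}\,b_C^K \;=\; \partial_{y^I} b_J^K + b_I^A b_J^B\,\accentset{\circ}{\Gamma}{}^{+\,K}_{AB}, \qquad (\star)
\]
and then expand $\nabla^{\parallel}_I\widetilde{T}^+_{A_1\dots A_p}=\partial_{y^I}\widetilde{T}^+_{A_1\dots A_p}-\sum_k \Gamma^{\parallel\,C}_{IA_k}\widetilde{T}^+_{A_1\dots C\dots A_p}$ by the product rule applied to $\widetilde{T}^+_{A_1\dots A_p}=b_{A_1}^{B_1}\cdots b_{A_p}^{B_p} T^+_{B_1\dots B_p}$. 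Using $\Phi^+_*(\partial_{y^I})=e^+_I=a_I k^+ + b_I^J v^+_J$ together with $a_I=\nabla^{\parallel}_I H$ (see \eqref{evectors2}-\eqref{b_I^J}), the derivative of the scalar components of $T^+$ decomposes as $\partial_{y^I} T^+_{B_1\dots B_p}=(\nabla^{\parallel}_I H)\,k^+(T^+_{B_1\dots B_p})+b_I^J v^+_J(T^+_{B_1\dots B_p})$. Substituting $(\star)$ into each Christoffel term, the contributions proportional to $\partial_{y^I} b_{A_k}^{B_k}$ cancel against those generated by the product rule, and the remaining pieces reassemble (after a relabeling of dummy indices in each contracted slot) into $b_{A_1}^{B_1}\cdots b_{A_p}^{B_p}\bigl[(\nabla^{\parallel}_I H)\,k^+(T^+_{B_1\dots B_p})+b_I^J\accentset{\circ}{\nabla}{}^{+}_{v^+_J} T^+_{B_1\dots B_p}\bigr]$, which is exactly the right-hand side of \eqref{nablacircvsnablaparallel2}.

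To prove $(\star)$ I would invoke an isometry argument. Since $k^+(h^+_{AB})=0$ and the coefficients $b_I^J$ depend only on $y^A$, the composition $\psi_\lambda:=\pi^+\circ\Phi^+\circ f_\lambda$, where $\pi^+:\Omega^+\to S^+_0$ projects along null generators onto a reference section and $f_\lambda:S_\lambda\hookrightarrow\Sigma$ is the section inclusion, is well-defined and in adapted coordinates reads $y^A\mapsto h^A(y^B)$. The isometry condition \eqref{isomcondpaper1} together with the constancy of $h^+_{AB}$ along generators yields $\psi_\lambda^*(h^+|_{S^+_0})=\gamma|_{S_\lambda}$; hence $\psi_\lambda$ is an isometry, the Levi-Civita connections are related by pullback, and the standard change-of-coordinates formula for Christoffel symbols under $y\mapsto h(y)$ gives $(\star)$. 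Finally, \eqref{nablaswithk()zero} is an immediate corollary of \eqref{nablacircvsnablaparallel2} obtained by setting $k^+(T^+_{B_1\dots B_p})=0$. The only real technical care is the bookkeeping of dummy indices when applying $(\star)$ to each contracted slot $A_k$; once the isometry between spatial sections is recognized, there is no genuine conceptual obstacle.
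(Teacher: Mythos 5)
Your proposal is correct, and its computational core (the product--rule expansion of $\nabla^{\parallel}_I\Tp_{A_1...A_p}$, the cancellation of the $\partial_{y^I}b_{A_k}^{B_k}$ terms against the inhomogeneous part of the Christoffel relation, and the reassembly into $\accentset{\circ}{\nabla}^+$) is exactly the paper's argument leading to \eqref{bydib} and \eqref{resnablapullback}; likewise your treatment of the minus side and of the corollary \eqref{nablaswithk()zero} matches the paper. The one genuine difference is how the key relation $(\star)$ (the paper's \eqref{relchristoffel}) is obtained. The paper substitutes $\gamma_{IJ}=b_I^Ab_J^Bh^+_{AB}$ directly into the Christoffel formula, using $\cp_{y^I}b_A^C=\cp_{y^A}b_I^C$ and $k^+(h^+_{CD})=0$, and then must separately verify (equation \eqref{derh}, which is where $\bs{\chi}^{k^+}_+=0$ enters a second time) that the ambient-induced coefficients $\chr^+{}^K_{AB}$ of \eqref{christoffels,s=const} coincide with the Levi-Civita symbols of the sections. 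You instead observe that, because $h^+$ is constant along the generators and \eqref{isomcondpaper1} holds, the map $y^A\mapsto h^A(y^B)$ is an isometry between $(S_\lambda,\gamma)$ and a reference section of $\Omega^+$, so $(\star)$ is just the standard transformation law of Christoffel symbols under a coordinate change; this is cleaner conceptually and sidesteps the explicit verification \eqref{derh}, at the price of having to argue (as you do) that the projection along generators is well defined and that the induced connection on the sections really is the Levi-Civita one of $h^+$ in the frame $\{v_I^+\}$ (which uses $[v_I^+,v_J^+]=0$ and $\bs{\chi}^{k^+}_+=0$ through \eqref{k(hIJ)} and \eqref{christoffels,s=const}). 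Both routes are sound; the paper's is more self-contained and explicit, yours makes the geometric reason for the identity transparent. A minor remark: the lemma only requires $\bs{\chi}^{k^+}_+=0$; your opening sentence invokes $\bs{\chi}^{k^\pm}_\pm=0$ on both sides, but the minus-side identity \eqref{nablacircvsnablaparallel1} holds without any such hypothesis, since $\Phi^-$ is the identity and $\gamma=h^-$ on each fixed section.
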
 
\begin{proof}
Throughout this proof, we view $b_I^J$ and $h^{\pm}_{AB}:=g^{\pm}(v^{\pm}_A,v^{\pm}_B)$ as scalars defined on the boundaries $\Omega^{\pm}$. Their pullbacks to $\Sigma$ are denoted with the same symbol and we let the context determine the precise meaning. The Levi-Civita covariant derivative $\nabla^{\parallel}$ has Christoffel symbols 
\begin{align}
\label{christparallel} \Gamma^{\parallel}{}^{B}_{JI}&=\frac{1}{2}\gamma^{BA}\lp\cp_{y^I}\gamma_{AJ}+\cp_{y^J}\gamma_{AI}-\cp_{y^A}\gamma_{IJ}\rp,
\end{align}
where $\gamma_{IJ}$ denotes the components of the induced metric on the sections $\{\lambda=\text{const.}\}\subset\Sigma$ in the coordinates $\{y^A\}$.

From \eqref{emhd}, we know that $\gamma_{IJ}:=g^{\pm}(e^{\pm}_I,e^{\pm}_J)$ which, together with \eqref{evectors1}-\eqref{evectors2}, yields 
\begin{equation}
\label{gammaandhIJ}\gamma_{IJ}=h_{IJ}^-\qquad\text{and}\qquad\gamma_{IJ}=b_I^Ab_J^Bh^+_{AB}\quad\text{(hence }\gamma^{BA}=(b^{-1})^B_L(b^{-1})^A_Kh^{LK}_+\text{)} 
\end{equation}
respectively. The first, together with \eqref{evectors1}, immediately gives \eqref{nablacircvsnablaparallel1}.

To prove \eqref{nablacircvsnablaparallel2}, we particularize \eqref{k(hIJ)} and \eqref{christoffels,s=const} for $\bs{\chi}^{k^+}_+=0$ and use that $\cp_{y^I}b_A^C=\cp_{y^A}b_I^C$ (cf. \eqref{b_I^J}) together with 
\begin{align}
\cp_{y^A}h^+_{CD}&=e^+_A(h^+_{CD})=a_Ak^+(h^+_{CD})+b_A^Lv^+_L(h^+_{CD})=b_A^Lv^+_L(h^+_{CD}).
\end{align}
Substituting $\gamma_{IJ}$ from \eqref{gammaandhIJ} into \eqref{christparallel}, one obtains
\begin{align}
\nonumber \Gamma^{\parallel}{}^{B}_{JI}=&\spc\frac{1}{2}(b^{-1})^B_L(b^{-1})^A_Kh^{LK}_+\lp\cp_{y^I}(b_A^Cb_J^Dh^+_{CD})+\cp_{y^J}(b_A^Cb_I^Dh^+_{CD})-\cp_{y^A}(b_I^Cb_J^Dh^+_{CD})\rp\\
\nonumber =&\spc\frac{1}{2}(b^{-1})^B_Lh^{LK}_+\Big(
2(\cp_{y^I}b_J^D)h^+_{KD}+b_J^D(\cp_{y^I}h^+_{KD})+b_I^D(\cp_{y^J}h^+_{KD})-(b^{-1})^A_Kb_I^Cb_J^D(\cp_{y^A}h^+_{CD})\Big)\\
\nonumber =&\spc(b^{-1})^B_L
(\cp_{y^I}b_J^L)+\frac{1}{2}(b^{-1})^B_Lb_I^Cb_J^Dh^{LK}_+\Big(
v^+_C(h^+_{KD})+v^+_D(h^+_{KC})-v^+_K(h^+_{CD})\Big)\\
\label{relchristoffel} =&\spc(b^{-1})^B_L\lp
(\cp_{y^I}b_J^L)+b_I^Cb_J^D\chr^+{}_{CD}^L\rp,
\end{align} 
where in the last equality we inserted \eqref{christoffels,s=const}. Expanding the derivative $\cp_{y^I}\Tp_{A_1...A_p}$ as (cf. \eqref{tensorremark2})
\begin{equation}
\cp_{y^I}\Tp_{A_1...A_p}= b_{A_1}^{B_1}...b_{A_p}^{B_p}\cp_{y^I}(T^+_{B_1...B_p})+\sum_{i=1}^pb_{A_1}^{B_1}...b^{B_{i-1}}_{A_{i-1}}b^{B_{i+1}}_{A_{i+1}}...b_{A_p}^{B_p}(\cp_{y^I}b^{L}_{A_i})T^+_{B_1...B_{i-1}LB_{i+1}...B_p}
\end{equation}
and inserting it into $\nabla^{\parallel}_{I}\Tp_{A_1...A_p}:=\cp_{y^I}\Tp_{A_1...A_p}-\sum_{i=1}^p\Gamma^{\parallel}{}^{B}_{A_iI}\Tp_{A_1...A_{i-1}BA_{i+1}...A_p}$ gives
\begin{align}
\label{bydib} \nabla^{\parallel}_{I}\Tp_{A_1...A_p}= b_{A_1}^{B_1}...b_{A_p}^{B_p}\lp\cp_{y^I}(T^+_{B_1...B_p})-\sum_{i=1}^p b_I^C\chr^+{}_{CB_{i}}^KT^+_{B_1...B_{i-1}KB_{i+1}...B_p}\rp,
\end{align}
after taking \eqref{relchristoffel} into account. By \eqref{evectors2}-\eqref{b_I^J}, we can rewrite $\cp_{y^I}T^+_{B_1...B_p}$ as 
\begin{equation}
\nonumber \cp_{y^I}T^+_{B_1...B_p}=e_I^+(T^+_{B_1...B_p})=(\nabla^{\parallel}_IH)k^+(T^+_{B_1...B_p})+b_I^Cv_C^+(T^+_{B_1...B_p}),
\end{equation}
so \eqref{bydib} becomes
\begin{align}
\nonumber \nabla^{\parallel}_{I}\Tp_{A_1...A_p}:=&\spc b_{A_1}^{B_1}...b_{A_p}^{B_p}\bigg((\nabla^{\parallel}_IH)k^+(T^+_{B_1...B_p})\\
\label{resnablapullback} &\spc +b_I^C\Big( v_C^+(T^+_{B_1...B_p})-\sum_{i=1}^p \chr^+{}_{CB_{i}}^KT^+_{B_1...B_{i-1}KB_{i+1}...B_p}\Big)\bigg).
\end{align}
The coefficients $\chr^+{}^{K}_{AB}$ are obviously symmetric in the indices $A,B$ (see \eqref{christoffels,s=const}) and it holds 
\begin{align}
\nonumber v^+_{I}\lp h^+_{KJ}\rp-\chr^+{}^{L}_{KI}h^+_{LJ}-\chr^+{}^{L}_{JI}h^+_{KL}&=\nabla_{v^+_{I}}\lp \la v^+_K,v^+_J\rag\rp-\chr^+{}^{L}_{KI}h^+_{LJ}-\chr^+{}^{L}_{JI}h^+_{KL}\\
\nonumber &= \la \nabla_{v^+_{I}}v^+_K,v^+_J\rag+\la \nabla_{v^+_{I}}v^+_J,v^+_K\rag-\chr^+{}^{L}_{KI}h^+_{LJ}-\chr^+{}^{L}_{JI}h^+_{KL}\\
\label{derh}&{=} -\frac{1}{\nfi^+}\lp \bs{\chi}_+^{k^+}\lp v^+_I,v^+_K\rp\psi^+_J+\bs{\chi}_+^{k^+}\lp v^+_I,v^+_J\rp\psi^+_K\rp=0,
\end{align}
where in the last line we used \eqref{covderinfo1} and that $\bs{\chi}^{k^{\pm}}_{\pm}=0$. This means that, for any covariant tensor field on $\Omega^{+}$
of the form $\bs{Q}^+=Q^+{}_{I_1...I_p}\bs{\omega}_+^{I_1}\otimes...\otimes\bs{\omega}_+^{I_p}$ (i.e. satisfying that $\bs{Q}^+(...,k^+,...)=0$) the operator $D^+$ defined by 
\begin{align}
\nonumber D^+_{v^+_K}Q^+{}_{I_1...I_p}^{J_1...J_q}&:=v^+_K\lp Q^+{}_{I_1...I_p}^{J_1...J_q}\rp-\sum_{i=1}^{p}\chr^+{}_{I_iK}^{L}Q^+{}_{I_1...I_{i-1}LI_{i+1}...I_p}^{J_1...J_q}
\end{align}
is the Levi-Civita covariant derivative on the sections $\{s^+=\text{const.}\}$ (i.e. $D^+\equiv\accentset{\circ}{\nabla}^+$). This together with \eqref{resnablapullback} proves \eqref{nablacircvsnablaparallel2} as well as \eqref{nablaswithk()zero}.
\end{proof}

\section{Ricci tensors on non-degenerate Killing horizons}\label{appedixA}
In this appendix, we compute explicit expressions for the Ricci tensors of the ambient spacetimes when the boundaries are non-degenerate Killing horizons. The results are not new and are included for the sake of completeness. We do the computation in R\'acz-Wald coordinates which yields the results quite directly. For a  different, more geometric, approach we refer to \cite{gourgoulhon20063+}.

We work in  R\'acz-Wald coordinates $\{u,v,x^A\}$ in a neighbourhood of a non-degenerate boundary $\chor$ with a bifurcation surface $\mathcal{S}$ as described in Section \ref{secRestrictionY} (in particular, without loss of generality the function in the metric \eqref{metricRW} has been chosen to be constant on the horizon $\chor$). The identity $\nabla_{\alpha}\nabla_{\beta}\xi^{\mu}+{R^{\mu}}_{\beta\nu\alpha}\xi^{\nu}=0$, which can be explicitly written as 
\begin{equation}
\cp_{\alpha}\cp_{\beta}\xi^{\mu}+\xi\big( \Gamma^{\mu}_{\alpha\beta}\big) +\Gamma^{\mu}_{\rho\beta} \cp_{\alpha}\xi^{\rho}+\Gamma^{\mu}_{\rho\alpha} \cp_{\beta}\xi^{\rho}-\Gamma^{\rho}_{\beta\alpha} \cp_{\rho}\xi^{\mu}=0,
\end{equation}
simplifies to 
\begin{equation}
\label{eqsappGammas}
\xi\big( \Gamma^{\mu}_{\alpha\beta}\big)+\ke\big(\Gamma^{\mu}_{v\beta}\delta_{\alpha}^v-\Gamma^{\mu}_{u\beta}\delta_{\alpha}^u+\Gamma^{\mu}_{v\alpha}\delta_{\beta}^v-\Gamma^{\mu}_{u\alpha}\delta_{\beta}^u-\Gamma^{v}_{\beta\alpha}\delta^{\mu}_v+\Gamma^{u}_{\beta\alpha}\delta^{\mu}_u\big)=0
\end{equation}
when using \eqref{killingRW}, which yields $\cp_{\alpha}\xi^{\beta}=\ke\big(\delta_{\alpha}^v\delta_{v}^{\beta}-\delta_{\alpha}^u\delta_{u}^{\beta}\big)$, $\cp_{\alpha}\cp_{\beta}\xi^{\mu}=0$. The set of equations \eqref{eqsappGammas} constitutes a hierarchical system of ODE which one easily solves as 
\begin{align}
\label{appeq1} \hspace{-0.3cm}\Gamma^{u}_{uu}&=v \mrg^{u}_{uu},  & \Gamma^{u}_{uv}&=u \mrg^{u}_{uv}, & \Gamma^{u}_{uA}&=\mrg^{u}_{uA}, & \Gamma^{u}_{vA}&=u^2\mrg^{u}_{vA}, & \Gamma^{u}_{AB}&=u \mrg^{u}_{AB}, \\
\label{appeq2} \hspace{-0.3cm}\Gamma^{v}_{vv}&=u \mrg^{v}_{vv}, & \Gamma^{v}_{vu}&=v \mrg^{v}_{vu}, & \Gamma^{v}_{vA}&=\mrg^{v}_{vA}, & \Gamma^{v}_{uA}&=v^2\mrg^{v}_{uA}, & \Gamma^{v}_{AB}&=v \mrg^{v}_{AB},   \\
\label{appeq3}\hspace{-0.3cm} \Gamma^{A}_{Bu}&=v \mrg^{A}_{Bu}, & \Gamma^{A}_{Bv}&=u \mrg^{A}_{Bv}, & \Gamma^{A}_{BC}&=\mrg^{A}_{BC}, &   &   
\end{align}
where $\mrg^{\mu}_{\alpha\beta}$ are functions depending on $\{ \omega:=uv,x^A\}$. Besides, one finds
\begin{align}
\hspace{-0.5cm}\lb\begin{array}{l}
\Gamma^u_{uA}\vert_{\chor}=-\frac{1}{2G}\lp-\cp_{A}G+\cp_{u}g_{v A}\rp\vert_{\chor}=-\frac{1}{2G}\cp_{u}g_{v A}\vert_{\chor}\\[\medskipamount]
\Gamma^v_{vA}\vert_{\chor}=-\frac{1}{2G}\lp-\cp_{A}G-\cp_{u}g_{vA}\rp\vert_{\chor}=\frac{1}{2G}\cp_{u}g_{v A}\vert_{\chor}
\end{array}\rd\spc\Longrightarrow\spc \Gamma^u_{uA}\vert_{\chor}=-\Gamma^v_{vA}\vert_{\chor}=\frac{1}{2}w_A\vert_{\chor},
\end{align}
which, together with $\cp_{v}\mrg^{\alpha}_{\mu\nu}\vert_{\chor}=u\cp_{\omega}\mrg^{\alpha}_{\mu\nu}\vert_{\chor}=0$ and \eqref{appeq1}-\eqref{appeq3} yields 
\begin{align}
\label{eqapp4}\hspace{-0.2cm}\cp_{\mu}\Gamma^{\mu}_{ B A}\spc\stackbin{\chor}{=}&\spc \cp_{u}\Gamma^{u}_{ B A}+\cp_{v}\Gamma^{v}_{ B A}+\cp_{C}\Gamma^{C}_{ B A}\stackbin{\chor}{=}\mrg^{u}_{BA}+\mrg^{v}_{BA}+\cp_{D}\mrg^{D}_{ B A},\\
\label{eqapp5}\hspace{-0.2cm}\cp_{ B}\Gamma^{\mu}_{\mu A}\spc\stackbin{\chor}{=}&\spc \cp_{ B}\mrg^{u}_{u A}+\cp_{ B}\mrg^{v}_{v A}+\cp_{ B}\mrg^{D}_{D A}\stackbin{\chor}{=}\cp_{ B}\mrg^{D}_{D A}\\
\label{eqapp6}\hspace{-0.2cm}\Gamma^{\mu}_{\mu \nu}\Gamma^{\nu}_{ B A}\spc\stackbin{\chor}{=}&\spc \lp\Gamma^{u}_{uD}+\Gamma^{v}_{vD}+\Gamma^{C}_{CD}\rp\Gamma^{D}_{ B A}\stackbin{\chor}{=}\mrg^{C}_{CD}\mrg^{D}_{ B A}\\
\label{eqapp7}\hspace{-0.2cm}\Gamma^{\mu}_{ B\nu}\Gamma^{\nu}_{\mu A}\spc\stackbin{\chor}{=}&\spc \Gamma^{u}_{ Bu}\Gamma^{u}_{u A} +\Gamma^{v}_{ Bv}\Gamma^{v}_{v A}+\Gamma^{C}_{ BD}\Gamma^{D}_{CA}\stackbin{\chor}{=}2\Gamma^{u}_{ Bu}\Gamma^{u}_{u A}+\Gamma^{C}_{ BD}\Gamma^{D}_{CA}\stackbin{\chor}{=}\frac{1}{2}w_Aw_B+\mrg^{C}_{ BD}\mrg^{D}_{CA}.
\end{align}
Let $\accentset{\circ}{\nabla}$ be the Levi-Civita covariant derivative on the sections $\{ v=\text{const.}\}$, i.e. defined with respect to $h_{AB}:=\ovl{\gamma}_{AB}\vert_{\chor}\equiv\ovl{\gamma}_{AB}\lp 0,x^J\rp$ and $\accentset{\circ}{R}_{AB}$ its Ricci tensor. Obviously these objects only depend on the coordinates $\{x^A\}$, i.e. are independent of the section $\{ v = \text{const.}\}$. From \eqref{appeq1}-\eqref{appeq2} it follows
\begin{align}
\nonumber \Gamma^{v}_{BA}=&\spc\dfrac{1}{2G}\cp_{u}\ovl{\gamma}_{AB}=\dfrac{v}{2G}\cp_{\omega}\ovl{\gamma}_{AB}\qquad\Longrightarrow\qquad \mrg^{v}_{BA}\stackbin{\chor}{=}\frac{1}{2G}\cp_{\omega}\ovl{\gamma}_{AB},\\
\nonumber \Gamma^{u}_{BA}=&\spc-\dfrac{1}{2G}\lp \cp_{A}g_{vB}+\cp_{B}g_{vA}-\cp_{v}\ovl{\gamma}_{AB}\rp+\dfrac{1}{2}g^{uD}\lp \cp_{A}\ovl{\gamma}_{DB}+\cp_{B}\ovl{\gamma}_{DA}-\cp_{D}\ovl{\gamma}_{AB}\rp-\dfrac{1}{2}g^{uu}\cp_{u}\ovl{\gamma}_{AB}\\
\nonumber \Longrightarrow\quad\mrg^{u}_{BA}\stackbin{\chor}{=}&\spc\frac{1}{2}\big(\accentset{\circ}{\nabla}_{A}w_B+\accentset{\circ}{\nabla}_{B}w_A\big) +\frac{1}{2G}\cp_{\omega}\ovl{\gamma}_{AB},
\end{align}
from where one concludes
\begin{align}
\nonumber R_{AB}\spc\stackbin{\chor}{=}&\spc \cp_{\mu}\Gamma^{\mu}_{ B A}-\cp_{ B}\Gamma^{\mu}_{\mu A}+\Gamma^{\mu}_{\mu \nu}\Gamma^{\nu}_{ B A}-\Gamma^{\mu}_{ B\nu}\Gamma^{\nu}_{\mu A}\\
\nonumber \stackbin{\chor}{=}&\spc\cp_C\mrg^{C}_{BA}-\cp_B\mrg^{D}_{DA}+\mrg^{D}_{DC}\mrg^{C}_{BA}-\mrg^{C}_{BD}\mrg^{D}_{CA}+\mrg^{u}_{BA}+\mrg^{v}_{BA}-\frac{1}{2}w_Aw_B\\
\label{Riccinondeg2}\spc\stackbin{\chor}{=}&\spc \accentset{\circ}{R}_{AB}+\dfrac{1}{2}\lp \accentset{\circ}{\nabla}_{A}w_B+\accentset{\circ}{\nabla}_{B}w_A \rp+\dfrac{1}{G}\cp_{\omega}\ovl{\gamma}_{AB}- \dfrac{1}{2}w_Aw_B.
\end{align}
Observe that the right-hand side of \eqref{Riccinondeg2} is independent of $v$, so the same is true for the components $R_{AB}$ of the ambient Ricci tensor at the Killing horizon. 

\section*{Acknowledgements}
The authors acknowledge financial support under the projects PGC2018-096038-B-I00 (Spanish Ministerio de Ciencia e Innovaci{\'o}n and FEDER ``A way of making Europe") and  SA096P20 (JCyL). M. Manzano also acknowledges the Ph.D. grant FPU17/03791 (Spanish Ministerio de Ciencia, Innovaci{\'o}n y Universidades).

\begingroup
\let\itshape\upshape

\bibliographystyle{acm}

\bibliography{ref}

\end{document}